\documentclass[11pt,letter]{article}
\usepackage{graphicx,amscd,amsmath,amssymb,amsfonts,verbatim}
\usepackage[a4paper, total={6.25 in, 10 in}]{geometry}
\usepackage{mathrsfs}
\usepackage{bbm}
\usepackage[english]{babel}
\usepackage{amsthm}
\usepackage{bbm}
\usepackage{amssymb}
\usepackage[all]{xy}
\usepackage{mathtools}
\usepackage{algorithm}
\usepackage{authblk}
\usepackage{caption}
\usepackage{subfigure}
\usepackage{chngcntr}
\usepackage{setspace}
\usepackage{booktabs}
\usepackage{pgfplots}
\usepackage{adjustbox}

\usepackage{natbib}
\usepackage{authblk}
\usepackage{multirow}
\usepackage{enumitem}
\usepackage{fancyhdr}
\DeclareMathOperator*{\plim}{plim}
\usepackage{hyperref}
\hypersetup{
	colorlinks=true,
	citecolor = blue
}
\usepackage{lastpage}
\usepackage{pdflscape}
\graphicspath{}
\newcommand{\indep}{\perp \!\!\! \perp}

\newtheorem{theorem}{Theorem}
\newtheorem{lemma}{Lemma}

\newtheorem{corollary}{Corollary}
\newtheorem{definition}{Definition}

\newtheorem{assumption}{Assumption}

\DeclareMathAlphabet{\mathpzc}{OT1}{pzc}{m}{it}

\makeatletter

\begin{document}

	\title{Policy Learning with Observational Data~: The Case of Hepatitis C Treatment for HIV/HCV Co-Infected Patients\thanks{I am greatly indebted to my supervisor, Erin C. Strumpf, for her support in this project. I also thank David Benatia, Denzel Fiebig, Saraswata Chaudhuri, Victoria Zinde-Walsh, Umut Oguzoglu, Olivia Yu, and all participants of the 2026 SCSE Conference in Québec City, the 2025 Annual CHESG Meeting in Montréal and the 2025 Joint Asian and European Workshop on Econometrics and Health Economics in Singapore for comments and advice. Thanks also to Shouao Wang, Alexandra Looky, Mariam El Sheikh, and Marina Klein for the access to and general guidance with the Canadian Co-Infection Cohort dataset.}}
	
	\author{Raphaël Langevin\footnote{Department of Economics, McGill University. E-mail: raphael.langevin@mail.mcgill.ca.}
	}

	\maketitle
	\thispagestyle{empty}

	\begin{abstract}
		Decision-makers frequently must choose a single action from a finite set of alternatives — for example, physicians selecting a treatment, investors choosing a portfolio risk level, or judges determining sentences. To improve outcomes, policymakers often issue policy rules or guidelines to inform such choices. In this paper, I show how to generally derive policy rules from observational data in a multi-action framework under relatively weak assumptions about the underlying structure of the heterogeneous sampled population. Conditional average treatment effects (CATEs) are consistently estimated via a weighted K-means algorithm, assuming the outcome model is correctly specified within each homogeneous subgroup. Feasible policy rules are then implemented via a standard decision tree, allowing for both perfect and imperfect adherence to treatment. The methodology is applied to treatment options for Hepatitis C (HCV) among patients co-infected with human immunodeficiency virus (HIV), a setting in which no uniform guideline exists for modern pharmaceutical therapies. The results identify a subgroup of patients with approximately an 80\% probability of spontaneous HCV clearance without treatment. Estimation results also show that reallocating treatments among treated individuals could have reduced total treatment costs by CAN\$3.6–4.9 million while still increasing aggregate health benefits relative to the status quo. These findings demonstrate that the proposed approach can generate improved, data-driven treatment guidelines for the management of HIV/HCV co-infected patients.\\~\\
		\textbf{JEL Codes:} C38; C44; D61, H51; I18.\\
		\textbf{Keywords:} Policy Rule, Treatment Allocation, Unobserved Heterogeneity, Clinical Practice Guidelines, K-means, Hepatitis C, Direct-Acting Antivirals, HIV/HCV Co-Infection.        
	\end{abstract}
	
	\newpage
	\section{Introduction}\label{sec:1}
	\setstretch{1.3}
	\pagenumbering{arabic}
	When choosing from a menu containing a finite number of alternatives, economic agents do not necessarily have all the information they would like. Entities that have more, or at least different, information may establish rules that aim to guide decisions, often in the hope of increasing welfare or reducing costs. For instance, in clinical settings, physicians make decisions about the treatment they will provide to a given patient in the context of uncertainty regarding the expected benefits or side effects.\footnote{For simplicity, I refer to physicians as the ``decision-makers" throughout the paper, although treatment choice is often the product of a joint decision between the patient and the physician.} Third-party health organizations (referred to as the ``policymakers" hereafter) produce clinical practice guidelines (CPGs) to guide and inform physicians’ treatment decisions.
	\par
	For example, the U.S. Centers for Disease Control and Prevention (CDC) produces periodic recommendations concerning the prevention and control of seasonal influenza with vaccines \citep{grohskopf_prevention_2024}, whereas the American College of Gastroenterology (ACG) produced in 2020 a clinical guideline on the management of irritable bowel syndrome \citep{lacy_acg_2021}. In Canada, the Canadian Association for the Study of the Liver (CASL) updated its guidelines on the management of chronic Hepatitis C (HepC) in 2018 by advocating ``against the use of any interferon-based treatment regimens" due to the availability of direct-acting antivirals for nearly all patients \citep{shah_management_2018}. 
	\par
	CPGs were initially defined in 1990 by the Institute of Medicine as ``systematically developed statements to assist practitioner and patient decisions about appropriate health care for specific clinical circumstances" \citep{field_clinical_1990}. Nowadays, CPGs are mostly designed by multidisciplinary development groups through consultations with experts and patients, analysis of other CPGs, and extensive literature reviews \citep{kredo_guide_2016, de_leo_approaches_2023}. Even though several attempts to standardize the development of CPGs have been performed recently, there is no standard approach in the development of CPGs, which can affect the quality of the published recommendations \citep{kredo_guide_2016}.
	\par
	In the clinical context, several challenges make creating high-quality CPGs difficult and can limit their widespread acceptance and use. First, while the policymaker’s objective function may align with the decision-maker’s, principal-agent settings may also arise where the policymaker (principal) uses guidelines to align the decision-maker’s (agent’s) behavior with their own objectives. The extent to which the objective sought by the policymaker (e.g., increase expected welfare, minimize the risk of serious adverse events, cost control, etc.) is acceptable to the decision-maker can therefore play a role in whether and how guidelines are used. Throughout the paper, I assume that the agent and the principal both share the same objective of maximizing expected welfare up to certain constraints defined by the policymaker, where guidelines are one of the many tools policymakers can use to influence the decision-makers' choices. Although strict compliance with guideline recommendations could be enforced in practice, uncertainty in health outcomes and information asymmetry is likely to make such a regulation inefficient in several cases \citep{boone_discretion_2025,einav_limitations_2024}.
	\par
	Second are a set of concerns related to the quality of the underlying clinical evidence used to formulate CPGs \citep{graham_clinical_2011}. Evidence coming from meta-analysis and randomized controlled trials (RCTs) is often considered the most internally valid among all types of evidence \citep{chloros_has_2023}. However, it is well known that RCTs often lack external validity~: results are often not generalizable to important subsets of the target population due to restrictive eligibility criteria when conducting the trials \citep{lawlor_commentary_2004, atkins_creating_2007}. 
	\par
	When the characteristics of the trial participants in studies that inform CPGs do not match the target population, the applicability of the findings and treatment recommendations is questioned. While studies based on observational data may have the benefit of increased representativeness, concerns about biased treatment effect estimates are prominent. Although quasi-experimental designs that can address confounding are generally not well understood or accepted in clinical contexts, interest in the use of real-world data by industry, health care payers, and regulatory bodies has been increasing in recent years \citep{cadth_guidance_2023, inesss_state_2022, us_food_and_drug_administration_use_2023}.
	\par
	The policy learning (PL) literature has also considered policymakers using observational (\textit{offline}) data to create guidelines (\textit{policy rules}) to inform decision-makers’ choices from among a set of \textit{treatment options}. The use of observational data with rigorous research designs has also been advocated in the clinical literature in cases where typical RCTs cannot answer the research question, costs are prohibitive, or when it is impossible to randomize interventions due to legal or ethical concerns \citep{atkins_creating_2007,greenfield_making_2017,reeves_including_2019}. Although results from observational studies are often biased, several methods have been and are still being developed to reduce biases and better inform policymakers on the effectiveness of medical interventions in the real world \citep{wager_estimation_2018,basu_2sls_2018,chernozhukov_doubledebiased_2018}.
	\par
	In this paper, I show how to derive a feasible policy rule from non-randomized, observational data. Specifically, I consider the choice of pharmaceutical treatment options for patients co-infected with HepC and HIV, a setting where at least 3 treatment options exist, and there are no clear CPGs in the Canadian context. I use weaker and more practical assumptions than the ones typically invoked in the corresponding PL literature, namely that treatment assignments are unconfounded conditional on the value of a latent state, which is used to form groups of homogeneous patients. I then estimate \textit{conditional average treatment effects} (CATEs) for each group and use these estimates to derive the corresponding policy rule, where the conditionality is with respect to the group membership. A feasible version of this rule uses decision trees to predict group membership with a small subset of covariates, which could then be leveraged to improve existing CPGs or build new CPGs.
	\par
	The first contribution of this paper is the development of a more practical and parsimonious approach than the ones typically employed for policy learning with observational data. This approach is described below and contrasts with the traditional PL literature, which usually assumes that treatment assignment is independent of potential outcomes for every treatment choice conditional on observed characteristics, an assumption known as \textit{unconfoundedness} (selection on observables; \cite{kitagawa_who_2018,kitagawa_equality-minded_2021, chernozhukov_semi-parametric_2019,mbakop_model_2021,athey_policy_2021, zhou_offline_2023}). However, unconfoundedness is often unrealistic in real-world problems where important confounders may remain unobserved due to practical or ethical considerations \citep{kallus_confounding-robust_2019}. More importantly, unconfoundedness does not tell applied researchers which variables should be conditioned on (and at what values) to overcome selection biases. The choice of conditioning variable can be a serious dilemma even in low-dimensional settings where the number of observed covariates is less than the number of observations, but still relatively large \citep{chernozhukov_fisher-schultz_2025}. If the use of machine learning (ML) techniques somewhat alleviates this concern by selecting subsets of $X$ that best predict the observed outcome, it remains unclear which technique offers the best performance for a given context.\footnote{Specifically, \cite{athey_policy_2021} argue that their approach, which relies on unconfoundedness, ``allows for the use of black-box machine learning tools provided we can verify that they are accurate in mean-squared error". When moving from a binary to a multi-action PL problem, \cite{zhou_offline_2023} state that ``[t]his leaves open a question of both theoretical and practical significance, namely which algorithm should be used [...] to achieve maximum statistical efficiency."}
	\par
	To address the latter problem, I replace unconfoundedness by a different, more practical assumption that I call ``group-wise" unconfoundedness~: assignment to treatment is unconfounded conditional on the value of a latent state, unobserved by both the econometrician and the decision-maker, rather than on the value of all observed covariates. In the context of a physician prescribing from among a choice of drugs or medical procedures, it is often reasonable to assume that treatment response heterogeneity will depend on a discrete and finite number of health states rather than on each possible value taken by the entire set of covariates. The health state can be a function of different observed (e.g., age, chronic conditions, income) and unobserved (e.g., frailty, social support) characteristics. Consequently, group-wise unconfoundedness is more parsimonious than standard unconfoundedness since it implies that the conditioning set can be reduced from the space of all possible values taken by the covariates to a discrete state space without introducing any selection bias. This leads to CATE estimates that are less prone to overfitting and that are equally or more precise than the ones produced by standard ML techniques such as random forests, provided that the true number of groups is small relative to the sample size. The estimated CATEs will also be robust to the presence of unobserved confounders, as long as those confounders are constant within each group.
	\par
	Under group-wise unconfoundedness, consistent estimation of all CATEs will be possible if the true group memberships are known for all observations in the sample. It is therefore necessary to rely on a consistent estimator of group memberships to make the entire estimation procedure globally consistent. Clustering algorithms such as the K-means algorithm are a practical and computationally efficient way to estimate such group memberships. The second contribution of this paper relates directly to the literature on clustering algorithms; I show how a \textit{weighted} K-means algorithm can be used to consistently estimate all within-group parameters (including CATEs) under a set of relatively weak assumptions formulated at the group level. I then show how to use those estimated CATEs to build an optimal policy rule that lies within the class of all policy rules that are deemed feasible by the policymaker.\footnote{Since all policy rules are by definition optimal for a given class of feasible policy rules, the adjective ``optimal" becomes superfluous. This is why I omit it for the rest of the paper unless relevant.}
	\par
	One advantage of using the proposed weighted K-means algorithm is that it uses all available covariates to recover a small number of groups where the outcome of each untreated individual is used as a counterfactual for every treated individual in the same group. Another advantage is that all estimated CATEs remain constant within each group if all outcome models are separable in $X$ (i.e., every element in $X$ affects the outcome only marginally). In this case, predicting every CATE for a given patient reduces to predicting his individual group membership. This contrasts with other ML techniques where selecting a subset of $X$ may discard useful information, -- hence leading to regularization bias \citep{chernozhukov_doubledebiased_2018} -- or where the large number of estimated CATEs makes the interpretation of the results challenging.
	\par
	\par
	Although the weighted K-means algorithm leads to interpretable results within each group, estimation of group memberships relies on a (potentially very) large number of covariates. Since group membership estimation is essential for correctly predicting every CATE for a given patient, this can lead to practical issues in clinical settings where decision-makers may not be able to correctly generate or use all the information required by the algorithm. In order to translate this policy rule into a feasible one that is implementable by decision-makers, I consider the policy class of classification/decision trees, which are known for their flexibility and interpretability, and are also widely used in healthcare to guide decision-making processes \citep{podgorelec_decision_2002}. More precisely, I use classification trees to predict the estimated group memberships based on a clinically relevant subset of $X$ that is observable by both the econometrician and the decision-maker. The resulting decision tree might then be used to complement or replace existing CPGs once its predictive performance has been empirically validated \citep{wei_zhao_construction_2020}. However, the use of classification trees for clinical decision-making entails a trade-off between feasibility and prediction accuracy~: the simpler the tree, the greater the classification error and vice versa.
	\par
	If both the sampled and the target populations have a similar treatment response distribution, then the policymaker could implement a policy rule by allocating treatment to anyone in the target population based on their predicted group membership. In the context of the weighted K-means algorithm, a similar treatment response distribution between the sampled and target populations means that both populations contain the same set of latent groups, with sufficiently large sample sizes such that every treatment option is observed within each group. Therefore, differences between the sampled and the target population do not lead to a lack of generalizability as long as the two populations are composed of the same groups with sufficient group sizes in the sampled population. This constitutes another benefit of the suggested approach compared to ML techniques, where significant deviations from perfect representativeness may lead to poor external validity \citep{bay_machine_2024}.  
	\par
	If the population of interest consists of a single group, the corresponding policy rule would be to assign the welfare-maximizing treatment option to everyone in this population (provided that the policymaker has access to a consistent estimate of the average treatment effect for every option). If such a situation is verified, then it is unnecessary to rely on the weighted K-means algorithm to guide clinical decision-making. However, it is often highly unrealistic to assume treatment effect homogeneity for all treatment options, thus advocating for the development of policy rules under assumptions that can adapt to a wide array of distinct patterns in terms of treatment effect heterogeneity. In the medical literature, such policy rules are often characterized as ``personalized medicine" and are increasingly used to improve both quality of care and patients' quality of life \citep{chakraborty_statistical_2013,einav_limitations_2024}.
	\par
	So far, most of the literature on PL has focused on the asymptotic and statistical properties of different estimators/algorithms whose objectives are to learn a policy rule $\pi^{*}$ within a predetermined class of feasible policies $\Pi$ under unconfoundedness or bounded violations of unconfoundedness \citep{kitagawa_who_2018, chernozhukov_semi-parametric_2019, athey_policy_2021, kallus_minimax-optimal_2021, mbakop_model_2021, zhou_offline_2023,ek_off-policy_2023, kitagawa_constrained_2023, sahoo_learning_2024, viviano_policy_2024,zhan_policy_2024,hess_efficient_2025}. To the best of my knowledge, these estimators/algorithms have not yet been used in any applied work except for those presented in the papers cited above. Therefore, this paper also serves to bridge the gap between theory and practice via a concrete example of how a policy rule can be derived using standard classification/regression procedures with observational data from a large-scale prospective cohort.
	\par
	The paper's second contribution relates to the application of the proposed methodology. According to \cite{saeed_frequent_2024}, 20\% to 30\% of Canadians infected with the human immunodeficiency virus (HIV) are also infected with the hepatitis C virus (HCV). This rate goes above 50\% among people who inject drugs (PWID). According to the CDC, about 21\% of people with HIV in the United States are also infected with HCV, while this rate lies between 62 and 80\% for PWID with HIV \citep{cdc_viral_2025}. Even though the evolution of HIV/HCV co-infection is similar in the two countries, the U.S. guidelines recommend prolonged treatments for co-infected individuals (compared to HCV mono-infected ones), whereas Canadian guidelines make no distinction between treatments for HCV mono-infected and HIV/HCV co-infected individuals \citep{martel-laferriere_prise_2022-1}. Moreover, potential drug-drug interactions between antiretroviral therapy (ART) and the recently approved direct-acting antiviral (DAA) agents used to treat HCV create an additional unknown risk that needs to be assessed before formulating recommendations or developing new guidelines \citep{american_association_for_the_study_of_liver_diseases_patients_2022}.
	\par
	Even though DAAs are very effective in the HCV mono-infected population, it is less clear in HIV/HCV co-infected patients since most co-infected individuals were not eligible to participate in DAA trials \citep{saeed_how_2016}. Observational studies looking at the effectiveness of DAAs administered to co-infected persons obtain significantly lower \textit{sustained virological response} (SVR) rates compared to HCV mono-infected persons \citep{alam_real-world_2017, wilton_real-world_2020}. However, it is believed that such lower responses to treatment are due to social and behavioral rather than biological factors, especially among PWID \citep{wilton_real-world_2020}. Reaching SVR is equivalent to being cured from the disease and implies an undetectable viral load in the blood 12 weeks after the end of the treatment. Note also that, unlike Hepatitis B, there is no vaccine against HCV, which makes reinfection always possible for any previously cured patient.
	\par
	Taking on the policymaker's perspective, I use data from the Canadian Co-Infection Cohort (CCC) to learn a policy rule that can be implemented for routine decision-making processes. The objective of this policy rule is to maximize the probability of reaching SVR for all observations in the sample under a set of prespecified constraints. Four treatment options are considered, including the absence of treatment (the ``null" treatment option). Decision trees are then used to derive the feasible version of the policy rule. Such a feasible rule could then be included in CPGs to guide future decision-making processes by prescribing physicians, provided that the sampled and the target populations share a similar treatment response distribution. This is believed to be the case with the CCC data given the cohort's very broad eligibility criteria, and given that the CCC has enrolled around 23\% of all HIV/HCV co-infected patients engaged in care in Canada \citep{saeed_frequent_2024}.
	\par
	Assuming that the objective of the policymaker is to maximize social welfare, an ``optimal" policy rule from the perspective of the policymaker balances the cost of each treatment option with its expected health benefits. As suggested by \cite{kitagawa_who_2018}, it is possible to do this by reformulating the outcome $Y_i$ such that it corresponds to the health benefits net of the costs for each treatment option. However, such an approach implies that the net outcome becomes a function of the policymaker's willingness-to-pay (WTP). A more realistic and practical approach is to adjust the unconstrained policy rule -- that is, a policy rule that does not consider any cost or budget constraint -- so that the recommended treatment option maximizes welfare while satisfying the policymaker's WTP. These constraints can be incorporated into the infeasible policy rule based on all available information, or into its feasible version based on the tree's predicted group memberships.
	\par
	Another particularity of the empirical application is that patients do not necessarily fully adhere to treatment over time. Given that adherence to treatment can be endogenous with respect to potential outcomes, this requires group-wise unconfoundedness to hold for any level of adherence to treatment.\footnote{Adherence to treatment corresponds to the proportion of days covered (PDC, see, for instance, \cite{prieto-merino_estimating_2021} for more details) and is defined as the ratio between the time during which the patient has taken the treatment and the prescribed duration of the treatment. Therefore, low adherence can be caused by either a lack of patient compliance with the prescribed regimen or a switch to another treatment regimen. The idea is that adherence should be directly correlated with the degree of exposure to the prescribed treatment.} Moreover, imperfect adherence to treatment adds another step to the learning problem since adherence is likely to vary across treatment options for a given patient. For example, it might be preferable to prescribe a treatment option with a shorter duration to individuals who are less likely to fully adhere to treatment, even if the prescribed treatment is less effective than all other options under perfect adherence. This is why predicted adherence is incorporated as another ``layer" of the learning policy problem under the hypothesis that the probability of reaching SVR never decreases with higher adherence.
	\par
	To fix ideas, Figure \ref{fig:A} draws a schematic representation of the different objects used in the empirical application to solve the problem of interest. First, I estimate all group memberships via the weighted K-means algorithm. Estimated group memberships are then used to identify within-group CATEs using standard linear probability models with observed SVR as the outcome. The selection of hyperparameter values and variables to include in the outcome models is assessed through 5-fold cross-validation with several different data splittings. Inference on the within-group CATEs is performed during the cross-validation procedure following the general logic of \cite{chernozhukov_fisher-schultz_2025}. The estimated CATEs are then used to derive both infeasible and feasible policy rules. Finally, adherence to treatment is modeled via a two-part model where predicted adherence is then incorporated into both policy rules to account for variations in adherence across individuals and treatment options. All four combinations of policy rules are represented graphically in Section \ref{sec:54} as a function of the policymaker's WTP. The resulting policy rules can also be interpreted as four different cost-effectiveness analyses where the in-sample optimal treatment allocation is derived for various levels of the policymaker's WTP, conditional on the cost of each treatment option.
	\begin{figure}[!t]
		\centering
		\includegraphics[width=15cm]{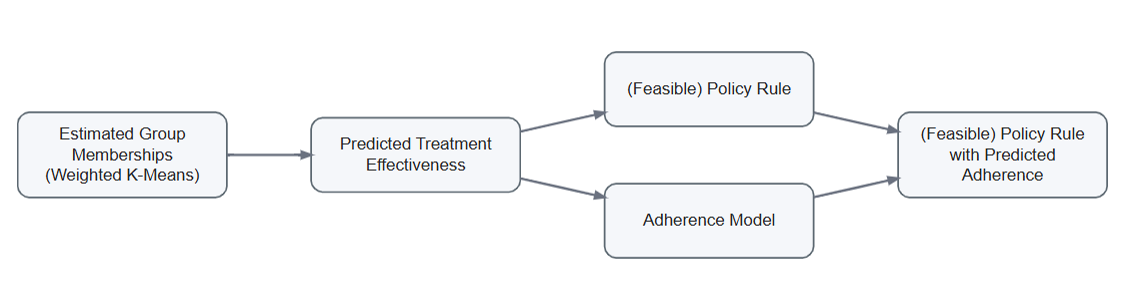}
		\caption{Schematic representation of the different objects used to learn policy rules.}
		\label{fig:A}
	\end{figure}
	\par
	Estimation results show that out-of-sample (OOS) prediction errors are minimized when the total number of groups in the empirical application is set to five. This result is achieved using simple linear probability models within each group, where the outcome corresponds to whether a patient has reached SVR or not ten weeks after the end of the prescribed treatment. All estimated CATEs are strictly positive, but important treatment effect heterogeneity is identified across groups. More precisely, one of the groups identifies CATEs that are not significantly different from zero for two out of four treatment options, and where the probability of being ``spontaneously" cured from the HepC infection is very high ($\approxeq 80\%$).\footnote{Spontaneous clearance is associated with an acute infection of HepC, and is characterized by an absence of detectable viral load in the patient's blood up to 6 months after the onset of the infection. On the other hand, patients with a chronic HepC infection do not usually experience spontaneous clearance within 6 months. More details on spontaneous clearance are presented in Section \ref{sec:42}} According to the simplest decision tree that distinguishes between all five groups, belonging to this group can be predicted with pre-treatment characteristics such as the degree of liver cirrhosis/fibrosis, the blood concentration of certain enzymes or hormones, and whether the patient is indigenous or not. Simple descriptive statistics show that the average patient within this group is slightly younger and has a higher income compared to the average patients in all other groups. The four other groups feature median CATEs that are all significantly different from zero (ranging from 0.7 to 0.94), with similar orderings between the effectiveness of all treatment options.
	\par
	In addition to the literature on PL that was cited earlier, this paper also relates to the literature on statistical decision theory in econometrics \citep{hirano_asymptotics_2009, hirano_asymptotic_2020, stoye_minimax_2009,stoye_minimax_2012, bhattacharya_inferring_2012, tetenov_statistical_2012, manski_statistical_2004, manski_econometrics_2021}, to the literature on clinical practices with HIV/HCV co-infected patients \citep{laniece_delaunay_gaps_2022,peters_retrospective-prospective_2021,roman_lopez_effectiveness_2024,saeed_eliminating_2020,saeed_how_2016,wilton_real-world_2020}, and also to the literature on individualized treatment rules in biostatistics and epidemiology \citep{chakraborty_statistical_2013, wallace_doubly-robust_2015, zhao_estimating_2012, zhao_new_2015, zhang_robust_2013, li_optimal_2023}. The first strand of literature focuses on the statistical properties of different decision criteria when the treatment effect depends on an unknown ``state of nature". When applied to the problem of interest, the possible states of nature are analogous to the group memberships recovered by the K-means algorithm. More details on this topic are given in Section \ref{sec:22}.
	\par
	This paper contributes to the second strand of literature by comparing the effectiveness of three DAA regimens used with HIV/HCV co-infected patients~: Mavyret, Epclusa, and Harvoni. Our results show that Harvoni weakly dominates Epclusa in terms of median treatment effect for all five identified groups. In other words, it is never strictly optimal to prescribe Epclusa to any given patient under infinite maximum WTP and perfect adherence to treatment. However, Epclusa becomes the optimal treatment choice under finite WTP and perfect adherence since it is less expensive but as effective as Harvoni for all patients in one of the five identified groups.\footnote{Epclusa is also the treatment that is \textit{minimax-regret} optimal under perfect adherence; see Section \ref{sec:22} for more details.} To summarize, my results show that the optimal treatment choice among the considered options is not trivial and depends on a set of pre-treatment covariates, the policymaker's maximum WTP, and the chosen decision criterion. Finally, the last strand of literature is mainly concerned with \textit{dynamic treatment regimes} where treatment decisions need to be made at different points in time in the future. Such a setup goes beyond the scope of this paper, but several insights derived from this literature are used in this paper.
	\par
	The remainder of the paper is structured as follows. Section \ref{sec:2} introduces the notation and the general framework of the problem of interest. Section \ref{sec:3} describes the assumptions and the proposed methodology used to solve this problem. Section \ref{sec:4} presents the characteristics of the cohort and the associated health conditions (HCV and HIV). Section \ref{sec:4} also details how various policy rules can be derived from this dataset when cost constraints are applied to the class of feasible policies. Section \ref{sec:5} presents the results, while Section \ref{sec:6} briefly discusses the results and concludes.
	\section{Notation and General Framework}\label{sec:2}
	\subsection{Notation}\label{sec:21}
	Suppose we have access to a random sample of $N$ independent individuals, with each individual represented by the triple $(Y_i,X_i,A_i)$, where $Y_i \in \mathcal{Y} \subset \mathbb{R}$ is the $i^{th}$ individual's observed outcome, $X_i = (X_{i,1},...,X_{i,p})^{\top} \in \mathcal{X} \subseteq \mathbb{R}^{p} $ is a vector of pre-treatment covariates of size $p$, and where $A_i = (A_{i,0},...,A_{i,D}) \in [0,1]^{D+1}$ corresponds to the set of observed adherence to treatment. The observationally assigned treatment for individual $i$ is denoted by $D_i \in \mathcal{D} = \{0,1...,D\}$. Absence of treatment is represented by the number ``0", which automatically leads to $A_{i,0}=0$ if no treatment was ever given to individual $i$. For now, I consider the setup where $A_{i,d} = 0$ for all $d \in \mathcal{D}\backslash D_i$, which implies that I exclude cases where patients can switch treatments over time and cases where patients have been treated in the past. This restriction can be easily relaxed provided that information about past treatment history is included in $X_i$ and that the subscript $i$ refers to individual-period rather than individual alone. Cases where the adherence is always either null or perfect can be easily obtained by imposing the constraint $\sum_{d=0}^D A_{i,d} = \{0,1\}$ to this general setup.
	\par
	Given that $A_{i,d} \in [0,1]$ for each value of $d \in \mathcal{D}$, I define the set of potential outcomes as $\{Y_i(d,a)\}_{d=0}^D$, where $Y_i(d,a)$ corresponds to the value we would have observed had the adherence $A_{i,d}$ for treatment $d$ been equal to $a$. Consequently, the following equivalence holds only for the observed outcome~: $Y_i = Y_i(D_i, A_{i,D_i})$. All other potential outcomes have to be estimated from the data for any value of adherence to treatment between 0 and 1. It is also possible to modify the notation so that potential outcomes depend also on the individual's current ``health state"; this modified notation is introduced in the next section.
	\par
	\subsection{Policy Rule and Decision Criteria}\label{sec:22}
	A policy rule $\pi(X_i)$ is a mapping from the space of covariates $\mathcal{X}$ to a decision $d \in \mathcal{D}$, where $\pi(X_i) = d$ implies that the policy rule $\pi$, when applied to a patient with covariates $X_i$, selects the $d^{th}$ treatment option as the ``optimal" option. Different notions of optimality are explicated below. For simplicity, this setup excludes probabilistic assignments where the resulting policy rule would randomize treatment assignments for a given value of $X_i$.
	\par
	If we assume a utilitarian social welfare function and that utility is a known and strictly increasing function of $Y_i$, then the \textit{in-sample} policy rule is defined as follows
	\begin{align}\label{eq:1}
		\pi^{*} := \arg \max_{\pi \in \Pi} \sum_{i=1}^N Y_i(\pi(X_i), a),
	\end{align}
	where adherence is assumed, for now, to be constant across treatment options and individuals. A more general version of this problem treats adherence as a function of observed variables (see Section \ref{sec:32}). Note that the maximization is performed over the sum of individual outcomes, which is important if $\Pi$ is globally constrained, and also that $Y_i$ does not incorporate any notion of costs, thus implying that the above policy rule is ``optimal" from the patients' perspective.\footnote{It is also optimal from the physicians' perspective if the financial incentives are aligned with each patient's welfare, which is implicit throughout the paper.} Note also that since $Y_i$ is a scalar, there is no need to account for the patient's preferences in the above optimization. This logic remains valid if higher values of $Y_i$ have an impact on other outcomes that are beneficial, on average, to the patient's welfare (e.g., higher productivity, social effects, etc.).
	\par
	Following the literature on SDT, I now define the state variable $s_i \in \mathcal{S} = \{1,2...,S\}$ as an unobserved, discrete, and finite random variable that governs the distribution of potential outcomes for individual $i$ conditional on both treatment assignment and adherence. The state variable $s_i$ represents the unobserved ``true" state value for individual $i$. Such a variable allows the introduction of fundamental uncertainty in the relationship between the treatment option $d$ (and the related adherence) and the outcome $Y_i$.\footnote{In the literature on SDT, the state value is assumed to be constant across individuals, hence representing the ``state of nature" and how it affects potential outcomes. I depart from this literature by treating the state variable as a latent state variable that characterizes the individual's (unobserved) heterogeneity. As a result, the state variable is similar to a sampling indicator, where each possible value taken by $s_i$ corresponds to a different sampling distribution, as described in Section 2.2 of \cite{manski_econometrics_2021}. This also explains why individuals are not assumed to be identically distributed.} If the value of $s_i$ is known for all $i \in [N]$, the in-sample policy rule can be written as follows
	\begin{align}\label{eq:2}
		\pi^{*}_{C} &:= \arg \max_{\pi \in \Pi}  \sum_{i=1}^N Y_i(\pi(X_i),a,s_i).
	\end{align}
	The subscript $C$ refers to the fact that this simple setup assumes no fundamental uncertainty about the relationship between the preferred treatment option $\pi(X_i)$ and the outcome $Y_i$. In practice, the value of $s_i$ is usually unobserved. In this case, we can take on a Bayesian perspective and write the in-sample policy rule as follows
	\begin{align}
		\pi^{*}_{B} &:= \arg \max_{\pi \in \Pi}  \sum_{i=1}^N \mathbb{E}_S[Y_i(\pi(X_i),a,g)], \nonumber\\ 
		&\equiv \arg \max_{\pi \in \Pi}  \sum_{i=1}^N \sum_{g\in \mathcal{S}} Y_i(\pi(X_i),a,g) \Pr[s_i=g],
	\end{align}
	where $\mathbb{E}_S$ is the expected value with respect to $s_i$, and $\Pr[s_i=g]$ is a subjective prior distribution over $s_i$. The notation $Y_i(d,a,g)$ makes it clear that potential outcomes are now treated as a function of the chosen treatment, the degree of adherence to this treatment, and the value of the state variable. Note that this fundamental uncertainty adds to the idiosyncratic uncertainty and needs to be properly taken into account for inference.
	\par
	Since $s_i$ is unobserved for all $i \in [N]$, it might be convenient for the policymaker to build a policy rule that yields good performance over all possible states for every individual. This leads to two additional decision criteria~: the \textit{maximin} criterion and the \textit{minimax regret} criterion. The former criterion maximizes the minimum welfare attainable across all states, whereas the latter minimizes the maximum welfare loss incurred by the chosen policy across all states. The in-sample versions of the policy rules that can be derived from those criteria are defined as follows, respectively
	\begin{align}\label{eq:3}
		\pi^{*}_{M} &:= \arg \max_{\pi \in \Pi} \sum_{i=1}^N \min_{g\in \mathcal{S}} Y_i(\pi(X_i),a,g),\\ \label{eq:4}
		\pi^{*}_{MR} &:= \arg \min_{\pi \in \Pi} \sum_{i=1}^N \max_{g\in \mathcal{S}} R_i(\pi,g),
		\intertext{where}\label{eq:5}
		R_i(\pi,g) &:= \max_{d \in \mathcal{D}} Y_i(d,a,g) - Y_i(\pi(X_i),a,g),
	\end{align}
	which denotes the regret of applying policy $\pi$ to individual $i$ in state $g$. This corresponds to the difference between the maximum welfare attainable in state $g$ for individual $i$ and the welfare generated by policy $\pi$ in the same state for the same individual.
	\par
	The policy rules $\pi^{*}_{M}$ and $\pi^{*}_{MR}$ will generally differ from each other, except in special cases where the maximum welfare is invariant across all states \citep{manski_econometrics_2021}. Theoretically speaking, choosing a specific criterion over another is still an open question. The statistical properties of the minimax-regret criterion have been widely studied in econometrics since it is argued to yield less pessimistic results than the maximin criterion, especially when applied to results from randomized experiments \citep{manski_adaptive_2007, manski_econometrics_2021,stoye_minimax_2009,stoye_minimax_2012,tetenov_statistical_2012, hu_minimax-regret_2024}. On the other hand, the Bayesian criterion is easy to interpret and implement, provided that one has access to a prior distribution over the different states. However, it is well known that different priors can lead to different posterior estimates, hence different policy rules. 
	\par
	Other decision criteria and policy rules that account for estimation uncertainty have also been recently proposed in the literature \citep{chernozhukov_policy_2025}. However, such policy rules depend on the degree of risk aversion of the decision-maker, and are more useful in cases where larger estimated marginal treatment effects are associated with larger variances and vice-versa. This situation does not correspond to what is observed in the empirical application. This is why this paper focuses on the ``Certainty" and the Bayesian decision criteria.
	\subsection{Treatment Effect Heterogeneity}\label{sec:23}
	A central feature of PL is treatment effect heterogeneity. Suppose each treatment option features a single, homogeneous treatment effect across all states possibly present in the population of interest. Then the corresponding policy rule would be to assign the treatment option with the highest true treatment effect to all individuals in the population. Introducing treatment effect heterogeneity in the population of interest is not only more realistic, but it also allows the policymaker to design more or less complex policy rules that will ultimately lead to higher welfare in the population for a given cost level.
	\par
	Following the notation developed in the previous section, I generally define the true CATE as follows 
	\begin{align}\label{eq:6}
		\tau_{d,g}(a,x) := \mathbb{E}[Y_i(d,a,g)|X_i = x] - \mathbb{E}[Y_i(0,0,g)|X_i = x].
	\end{align}
	where the expected value is taken over $i$ conditional on $X_i=x$ for a given value of adherence $a$, of treatment option $d$ and of state/group variable $g$. It is also possible to obtain the definition of the ATE from Eq. (\ref{eq:6}) by integrating over the distribution of $X_i$ as follows
	\begin{align}\label{eq:61}
		\tau_{d,g}(a) = \int_{\mathcal{X}} \tau_{d,g}(a,x) p_{X}(x) \text{d} x,
	\end{align}
	where $p_{X}(x)$ is the (multivariate) density of the covariates when $X$ is treated as random.
	\par
	I also define the \textit{conditional average treatment effect on the treated} (CATT) as follows
	\begin{align}
		\eta_{d,g}(a,x) := \mathbb{E}[Y_i(d,a,g)|X_i = x,D_i=d] - \mathbb{E}[Y_i(0,0,g)|X_i = x,D_i=d],
	\end{align}
	where the expectation is now conditional on both $X_i$ and the observationally assigned treatment option $D_i$. This quantity will differ from $\tau_{d,g}(a,x)$ if $D_i$ is correlated with $\mathbb{E}[Y_i(d,a,g)]$ when treated as a function of treatment choice $d$ and adherence $a$. Average treatment effect on the treated (ATT) can be obtained by substituting $\tau_{d,g}(a,x)$ by $\eta_{d,g}(a,x)$ in Eq. (\ref{eq:61}).
	\par
	One of the challenges in estimating either CATEs or CATTs is that it requires a very large sample size to obtain precise estimates for each relevant partition of $X_i$, especially when some of the covariates in $\mathcal{X}$ are continuous. Obtaining precise values for each estimated CATE generally implies that a large number of individuals share identical or similar values of covariates for each treatment option in $\mathcal{D}$, which might not be feasible even in large samples.
	\par
	One way to deal with this issue is to use non-parametric, kernel-based methods or machine learning techniques such as random forests and neural networks so that observations with sufficiently close values of $X_i$ will be used to produce an estimate for each CATE. Another way to deal with this issue is to assume that the state variable $s_i$ captures treatment effect heterogeneity, and that the observed covariates are an unknown function of the state's value for every individual in the target population. This assumption is formalized in Assumption \ref{ass:1}.
	\begin{assumption}(Group-wise heterogeneity)\label{ass:1}\\
		Let $\tau_{d,g}(a,x)$ be defined as in Eq. (\ref{eq:6}). Then,  
		\begin{align*}
			\tau_{d,g}(a,x) &=\mathbb{E}[Y_i(d,a,g)|X_i = x] - \mathbb{E}[Y_i(0,0,g)|X_i = x],\\
			&\equiv \tau_{d,g}(a) := \mathbb{E}[Y_i(d,a,g)] - \mathbb{E}[Y_i(0,0,g)],
		\end{align*}
		with $\Pr[s_i = g|X_i = x] \ne \Pr[s_i = g]$ for any $x \in \mathcal{X}$ and any $g \in\mathcal{S}$, where the expectation is taken over the distribution of $X_i$.
	\end{assumption}
	Assumption \ref{ass:1} is considered easy to satisfy in practice since a sufficiently large number of groups $\mathcal{S}$ will incorporate most of the true unobserved heterogeneity present in the target population. Furthermore, assuming that $\Pr[s_i = g|X_i = x] \ne \Pr[s_i = g]$ implies that the state variable and the covariates are not independent of each other, which is generally less restrictive than the opposite. However, this assumption might be difficult to verify in nonlinear models where marginal effects are functions of the covariates. A nonlinear version of Assumption \ref{ass:1} could be formulated at the level of the linear predictor when using a nonlinear parametric specification.
	\par
	The former assumption does not imply that covariates have no role in estimating potential outcomes since some elements in $X_i$ might be important \textit{predictive} variables of the outcome. Predictive variables correspond to variables that affect the outcome of interest only marginally and feature no qualitative interaction with any treatment option \citep{zhang_variable_2018,zhang_subgroup_2022}.
	Omitted-variable biases are likely to occur if predictive variables correlated with the assignment mechanism (or the adherence level) are not properly ``controlled for" when estimating CATEs from predicted outcome values. This is not an issue in randomized controlled trials (RCTs) where treatment assignment is randomized and compliance with the assigned treatment is (almost) perfect. This is however an issue with observational data where treatment assignment is not randomized and adherence may be far from perfect. Nonetheless, not properly controlling for correlated predictive variables (i.e., confounders) is a second-order issue in the context of PL problems if the resulting biases do not alter the ordering of all CATEs within each group.
	\par
	For instance, it is often convenient to assume a partially linear relationship of the form $Y_i(d,a_i,g) = a_i\tau_{d,g} + h_{d,g}(X_i) + \epsilon_{i}$ \citep{robinson_root-n-consistent_1988,chernozhukov_doubledebiased_2018}, where $h_{d,g}(\cdot)$ is an unknown function of $X_i$ for treatment option $d$ in group $g$, and where the treatment effect $\tau_{d,g}$ is assumed to be constant for any $a_i \in [0,1]$. In this setup, $X_i$ is a set of predictive variables that has to be included in the model to obtain unbiased predictions of the outcome $Y_i(d,a_i,g)$, but not of the true CATEs $\tau_{d,g}$ if $a_i$ is uncorrelated with $X_i$. Conversely, \textit{prescriptive variables} exhibit a qualitative interaction with the prescribed treatment and are required to consistently recover the distribution of the true CATEs. Prescriptive variables would therefore correspond to the set of variables in $X_i$ that can be used to predict the true group membership $s_i$ for all $i \in [N]$, as shown in the next section.
	\section{Solving the PL problem}\label{sec:3}
	\subsection{Consistent Estimation of the CATEs}\label{sec:31}
	It is well known that selection into treatment might bias treatment effect estimates when treatment assignments are not randomized. Even if treatment assignments are randomized, imperfect adherence to treatment can also lead to biased estimates if adherence is not independent of potential outcomes conditionally on the observed covariates \citep{bowden_connecting_2021, bowden_instrumental_2024}. When using real-world data with potentially imperfect adherence to treatment, it becomes useful to assume that both treatment assignment and adherence are independent of potential outcomes, conditional on observed covariates for any degree of adherence. This assumption, known as \textit{unconfoundedness}, is formalized below in Assumption \ref{ass:2}.
	\begin{assumption}\label{ass:2}(Unconfoundedness)
		\begin{align*}
			\{Y_i(d,a,s_i)\}_{d=0}^D &\indep A_i| X_i=x, a \in [0,1].
		\end{align*}
	\end{assumption}
	Assumption \ref{ass:2} differs from the typical unconfoundedness assumption since it requires both treatment assignment \textit{and} adherence to be independent of all potential outcomes for any possible value of adherence $a$. This assumption must be satisfied for any level of adherence; otherwise, individuals would be able to select the treatment option with the highest reward for a given value of adherence below unity. However, it only needs to be satisfied for the true value of the state variable $s_i$ since it is implicitly assumed throughout the paper that the state variable is exogenously determined.\footnote{This assumption might however be violated if the state variable becomes a function of the assigned treatment over its duration. Because the treatment options I consider in the empirical application are relatively short (up to three months), I rule out this possibility for all observations in the sample. It is also more realistic to assume the exogeneity of the state variable rather than the exogeneity of all elements in $X_i$, which is implicit under the typical unconfoundedness assumption (see, for instance, \cite{hunermund_double_2023} for more details).}
	\par
	Instead of relying on standard unconfoundedness, this paper employs a slightly different and more practical assumption that is formulated below in Assumption \ref{ass:3}.
	\begin{assumption}\label{ass:3}(Group-wise unconfoundedness)
		\begin{align*}
			\{Y_i(d,a,s_i)\}_{d=0}^D &\indep A_i|s_i=g, a \in [0,1].
		\end{align*}
	\end{assumption}
	Assumption \ref{ass:3} supposes that the observed treatment choice and adherence are exogenous within each group formed by the true value of the state variable. This assumption complements Assumption \ref{ass:1} and is necessary to obtain consistent estimates of all within-group CATEs. Note that Assumption \ref{ass:3} corresponds, in fact, to a special case of Assumption \ref{ass:2}~: the two assumptions become identical if $S = |\mathcal{X}|$, where $|\mathcal{X}|$ represents the total number of possible values taken by any $x \in \mathcal{X}$. If at least one element in $\mathcal{X}$ is continuous, then $|\mathcal{X}|$ is infinite and local approximations of Assumption \ref{ass:2} are instead required to consistently estimate all CATEs in finite samples. Therefore, Assumption \ref{ass:3} is more practical than Assumption \ref{ass:2} without being stronger than the latter since it automatically adjusts the conditioning set to the minimum amount of information that is required for conditional exogeneity to hold.
	\par
	The fact that Assumption \ref{ass:3} corresponds to a special case of Assumption \ref{ass:2} implies that both assumptions will be violated under the same circumstances, namely when treatment selection and/or adherence to treatment is driven by the patient's idiosyncratic response to treatment after conditioning on $X_i$ or $s_i$. This will occur, for instance, when patient-specific response to treatment is unbiasedly anticipated (by the patient and/or the treating physician), and adherence or treatment choice is then adjusted accordingly. For instance, every patient who responds negatively to a given treatment may exhibit, during the course of the treatment, side effects that are \textit{not} predictable by any set of observable characteristics. If these side effects generally induce the treating physician to stop the current treatment regimen or switch to another one, then Assumption \ref{ass:3} will be violated, and the estimated coefficient relating adherence to observed outcomes will be positively biased \citep{lamiraud_therapeutic_2007}. This means that Assumption \ref{ass:3} is equivalent to assuming that either patient-specific responses to treatment are predictable by observables, or that anticipated responses to treatment do not systematically influence the patient's or the physician's behavior. A formal example showing under which conditions Assumption \ref{ass:3} will and will not hold under additive group-wise heterogeneity is presented in Appendix \ref{app:C}. 
	\par
	The main practical difficulty concerning Assumption \ref{ass:3} is that each $s_i$ is unobserved. Therefore, it is necessary to obtain consistent estimates of the true state values $\mathbf{s} = (s_1,...,s_i,...,s_N)^{\top}$ to be able to recover all CATEs. To do so, I rely on the next six assumptions.
	\begin{assumption}(Bounded outcomes)\label{ass:4}\\
		The vector of potential outcomes $\{Y_i(d,a,g)\}_{d=0}^D$ is supported on a bounded set in $\mathbb{R}^{D+1}$ for any value of $a \in [0,1]$ and any value of $g\in \mathcal{S}$.
	\end{assumption}
	\begin{assumption}(Propensity score overlap)\label{ass:5}\\
		There exists some $\epsilon > 0$ such that $\Pr[A_{i,d} > 0 |s_i = g] \ge \epsilon, \ \forall (d,g) \in \mathcal{D}\times\mathcal{S}$.
	\end{assumption}
	\begin{assumption}(Error bounds on the outcome models)\label{ass:6}
		\begin{align*}
			\hat{m}_{d,(z_i)}(a,x) - m_{d,g}(a,x) = O_p(\tilde{N}_g^{-1/2}) \Leftrightarrow (z_i) = s_i \ \text{for every} \ i \in [N] \ \text{as $N \to \infty$},
		\end{align*}
		for any $(d,a,g,x) \in \mathcal{D} \times [0,1] \times \mathcal{S} \times \mathcal{X}$, where $\hat{m}_{d,g}(a,x)$ is an estimate of $m_{d,g}(a,x) = \mathbb{E}[Y_i(d,a,g)|X_i=x]$, $z_i$ is an estimate of $s_i$ for the $i^{th}$ individual, $\tilde{N}_g = \sum_{i=1}^N \mathbbm{1}[s_i = g]$, and where $(\cdot): \mathcal{S} \to \mathcal{S}$ is a suitable permutation in the $g^{th}$ group label.
	\end{assumption}
	\begin{assumption}(Finite conditional moments)\label{ass:7}\\
		$\mathbb{E}[X_{i}] = \boldsymbol{\mu}^0_g = (\mu^0_{g,1},...,\mu^0_{g,p})^{\top}$ and $Var[X_{i}] = \Sigma^0_g \Leftrightarrow g = s_i$, where $||\boldsymbol{\mu}^0_g||^2 < \infty$ for any $g \in\mathcal{S}$, where $\Sigma^0_g$ is a $p \times p$ positive-definite matrix with diagonal elements $0 < \sigma^2_{g,ll} < \infty$, where $Cov[X_{i},X_{j}] = 0$ for any $j \ne i$ and $\mathbb{E}[(X_{il})^4] < \infty$ for any $l \in \{1,2...,p\}$.
	\end{assumption}
	\begin{assumption}(Nonvanishing differences between groups)\label{ass:71}\\
		$W_g \ne W_j$ for any $j \ne g$, and for a nonvanishing proportion of the elements in each matrix as $p \to \infty$, where $W_g$ is a lower triangular matrix such that $\{\Sigma^0_g\}^{-1} = W_gW_g^\top$ for any $g\in \mathbb{G}$.
	\end{assumption}
	\begin{assumption}(Unbounded asymptotic group sizes)\label{ass:8}
		\begin{align*}
			\lim_{N\to \infty} \sum_{i=1}^N \mathbbm{1}[s_i = g] = \infty, \ \forall g \in \mathcal{S},
		\end{align*}
		where the total number of groups $S$ is fixed and known unless stated otherwise.
	\end{assumption}
	Assumption \ref{ass:4} is standard in the PL literature and supposes that all potential outcomes are uniformly bounded both from above and below \citep{mbakop_model_2021, zhou_offline_2023}. Assumption \ref{ass:5} states that all treatment options must feature a non-zero probability of being assigned to anyone in the sampled population given the true value of the state variable. In other words, no formal contraindication must prevent the assignment of any treatment option to all observations in a given group. This assumption is weaker than the typical overlap assumption since it is conditional on the state value rather than on the value of the observed covariates. It thus allows for the presence of formal contraindications at the individual level that do not translate at the group level (although this might be rare in practice).
	\par
	Assumption \ref{ass:6} states that each estimated within-group outcome model is $\sqrt{\tilde{N}_g}$-consistent if and only if every observation is correctly classified as the sample size tends to infinity, where $\tilde{N}_g$ corresponds to the true number of observations in group $g$. This assumption is also weaker than the typical $\sqrt{N}$-consistency assumption made in the semiparametric estimation literature due to its conditional nature, but also because it does not assume uniform consistency of $\hat{m}_{d,g}(a,x)$ as $N \to \infty$ across all data-generating processes. For instance, if $\tilde{N}_g = N^{\lambda_g}$ with $0 < \lambda_g < 1$ for a given $g \in \mathcal{S}$, Assumption \ref{ass:6} still holds even if the rate of convergence of $\hat{m}_{d,(z_i)}(a,x)$ is slower than $N^{-1/2}$. In the current form, Assumption \ref{ass:6} however excludes estimators of $\hat{m}_{d,(z_i)}(a,x)$ that converge at a slower rate than $\tilde{N}_g^{-1/2}$. Relaxing such an assumption could be done, for instance, to allow for the nonparametric estimation of the regression function $m_{d,g}(a,x)$ for every $g \in \mathcal{S}$, but at the cost of a slower convergence rate for the estimated CATEs (see Corollary \ref{cor:1}).
	\par
	Assumption \ref{ass:6} may also be too restrictive in high-dimensional settings where each $\hat{m}_{d,z_i}(a,x)$ is modeled semiparametrically. As shown by \cite{chernozhukov_doubledebiased_2018}, such an approach will lead to regularization bias if the estimated nuisance part converges at a slower rate than the standard parametric rate. To overcome this bias, the authors suggest estimating all parameters of interest using orthogonalized score functions and cross-fitted estimators.\footnote{Cross-fitting is a sample splitting approach that limits the incidence of regularization biases on the parameters of interest. See Section 3 of \cite{chernozhukov_doubledebiased_2018} for more details.} In contrast, Assumption \ref{ass:6} directly implies that all nuisance parameters are $\sqrt{\tilde{N}_g}$-consistent, which implies the absence of any regularization bias. This assumption is realistic in settings where the space of relevant predictive variables remains low-dimensional and linear within each group, which is a credible assumption for several drug-health outcome relationships. The realism of this assumption is confirmed in the empirical application, where the set of nuisance parameters that maximizes OOS prediction accuracy corresponds to the empty set within each group. Extending the empirical application to other outcomes might however invalidate Assumption \ref{ass:6}, and therefore require the use of the double/debiased approach with cross-fitting. Nonetheless, sample splitting is used in the empirical application to obtain valid inference on all estimated CATEs (see Section \ref{sec:442}).
	\par
	If the selected classifier can correctly classify all observations in the sample as $N \to \infty$, then it is said to be \textit{uniformly consistent}. Uniformly consistent classifiers are typically hard to obtain without making strong assumptions on the data-generating process \citep{su_identifying_2016,dzemski_convergence_2021, langevin_bias-reduced_2026}. In practice, it is also more frequent to use a consistent classifier, which implies that the proportion of misclassified observations goes to zero in the limit. In this case, the rate of convergence of the difference $\hat{m}_{d,(z_i)}(a,x) - m_{d,g}(a,x)$ will typically be slower than $\tilde{N}_g^{-1/2}$, but the estimates will remain consistent as $\tilde{N}_g \to \infty$. This case is believed to be more realistic for most empirical applications, including the one presented in this paper.\footnote{Modifying Assumption \ref{ass:6} such that $\hat{m}_{d,(z_i)}(a,x) - m_{d,g}(a,x)$ is $\sqrt{\tilde{N}_g}$-consistent if and only if $z_i = (s_i)$ for \textit{almost every} observation in the sample implies a faster rate of convergence than the standard parametric rate for each outcome model taken separately, which is too strong an assumption to make in most empirical settings.}
	\par
	Assumptions \ref{ass:7} and \ref{ass:71} are similar to Assumptions 2$(iv)$ and 2$(v)$ of \cite{langevin_bias-reduced_2026} respectively, and correspond to sufficient requirements on the groups' structure to guarantee the consistency of the K-means procedure under correct specification of the total number of groups. Assumption \ref{ass:7} may be relaxed in favor of finite \textit{nested} conditional moments, where both $\mathbb{E}[X_{i}] = \boldsymbol{\mu}^0_g$ and $\text{Var}[X_{i}] = \Sigma^0_g \Leftrightarrow g = s_i^* \in \mathcal{S}^*$, where $\mathcal{S}^* = \{1,2...,S^*\}$ with $S^* > S$ and $\mathbbm{1}[s_i^*=g] \Rightarrow \mathbbm{1}[(s_i^*)^*=s_i]$, and where $(g)^*:\mathcal{S}^* \rightarrow \mathcal{S}$ is a surjective permutation in the group labels. Specifically, the unobserved heterogeneity in the covariates can be more important than the one described by Assumption \ref{ass:1}, as long as the former can be mapped into the latter by merging some groups of covariates. The inverse is however not possible without violating Assumption \ref{ass:7}, meaning that the heterogeneity in the treatment effects cannot be larger than the heterogeneity in the covariates. Note that Assumption \ref{ass:71} requires each true covariance matrix $\Sigma_g^0$ to have only a few different elements across groups to be satisfied. However, the overall misclassification rate is likely to be smaller as the between-group differences in $\Sigma_g^0$ become larger for a given value of $p$.
	\par 
	Finally, Assumption \ref{ass:8} supposes that the size of each group will go to infinity as the total sample size goes to infinity, but leaves the growth rate of each group unspecified. It is worth noting here that the implicit assumption that $S$ is a known and fixed population parameter can be problematic in small samples, especially if some groups represent a very small share of the whole population. In practice, this will lead to generalizability issues only if the sample size is too small such that the parameters associated with one or several groups in the population cannot be identified from the sample. The probability that such an issue arises in practice is highly context-dependent and is believed not to occur in the considered empirical application given the representativeness of the cohort and the relatively large sample size (i.e., $N=1,232$). 
	\par
	On the other hand, it is sometimes realistic to consider the true number of groups to be a function of the sample size if more nuanced differences across observations are likely to be identified as $N$ increases. Such an assumption can practically coexist with the former one if all analyses are performed conditional on the sample size. Although there might exist a (much) larger number of groups in the population, it becomes approximated by the ``true" number of in-sample groups $S$, which in fact combines true population groups that are similar enough not to lead to significant biases in the estimates.
	\par
	The main theoretical results of the paper are based on the following definitions.
	\begin{definition}\label{def:1}
		\begin{enumerate}[label=(\roman*)] Using the same notation as in Assumption \ref{ass:7}, we define~:
			\item The squared Mahalanobis distance for the $i^{th}$ observation as follows
			\begin{align*}
				d_i^{2M}(\boldsymbol{\mu}_{g}, \Sigma_g) := (X_i - \boldsymbol{\mu}_{g})^{\top} \{\Sigma_g\}^{-1} (X_i - \boldsymbol{\mu}_{g}) + \log \det(\Sigma_g),
			\end{align*}
			for any $\boldsymbol{\mu}_{g} \in \Theta \subseteq \mathbb{R}^{p}$ and any matrix $\Sigma_g \in \Omega$, where $\Omega$ is the space of symmetric, positive-definite matrices, and where $\det(\Sigma_g)$ is the determinant of $\Sigma_g$.
			\item The Mahalanobis distance classifier for the $i^{th}$ observation as follows
			\begin{align*}
				\hat{z}_i(\boldsymbol{\mu},\Sigma):= \arg \min_{g \in \mathcal{S}} d_i^{2M}(\boldsymbol{\mu}_g, \Sigma_g),
			\end{align*}
			where $\boldsymbol{\mu} = (\boldsymbol{\mu}_1,..,\boldsymbol{\mu}_S)$ and $\Sigma = (\Sigma_1,...,\Sigma_S)$.
			\item The total squared Mahalanobis distance (TSMD) as follows
			\begin{align*}
				d^{2TM}(\mathbf{z},\boldsymbol{\mu},\Sigma)&:=  \sum_{g\in \mathcal{S}} \sum_{i=1}^N \mathbbm{1}[z_i = g] d_i^{2M}(\boldsymbol{\mu}_{g}, \Sigma_g),
			\end{align*}
			for any $\mathbf{z} = (z_1,...,z_N)^{\top} \in \mathcal{Z} \subseteq \mathcal{S}^{N}$.
		\end{enumerate}
	\end{definition}
	I can now state the main theoretical results of this paper. All proofs are provided in Appendix \ref{app:A}. Note that all theoretical results implicitly assume that the econometrician knows the true number of groups, $S$, which can be estimated using $K$-fold cross-validation, or another technique from the relevant literature \citep{tibshirani_estimating_2001,sugar_finding_2003}.
	\begin{lemma}\label{lem:1}
		Suppose Assumptions \ref{ass:7}-\ref{ass:8} are verified. Then the total squared Mahalanobis distance, $d^{2TM}(\mathbf{z},\boldsymbol{\mu},\Sigma)$, never increases between two consecutive iterations of the weighted K-means algorithm if $\Sigma^{(k)}_g$ is positive-definite for every $g \in \mathcal{S}$ and every iteration $k \in \{1,2,...,\bar{k}\}$, with $\bar{k} < \infty$ (see Algorithm \ref{alg:Kmeans} in Appendix \ref{app:B}).
	\end{lemma}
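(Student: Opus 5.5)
The plan is the standard coordinate-descent argument for K-means, adapted to the Mahalanobis criterion of Definition \ref{def:1}. I take the weighted K-means algorithm (Algorithm \ref{alg:Kmeans}) to alternate between two steps. The \emph{assignment step} sets $z_i^{(k+1)} = \hat{z}_i(\boldsymbol{\mu}^{(k)},\Sigma^{(k)})$. The \emph{update step} then replaces $(\boldsymbol{\mu}_g^{(k)},\Sigma_g^{(k)})$ by the within-cluster sample mean and sample covariance computed from the observations with $z_i^{(k+1)}=g$. I will show separately that each step weakly decreases $d^{2TM}$ and then chain the two inequalities:
\begin{align*}
d^{2TM}(\mathbf{z}^{(k+1)},\boldsymbol{\mu}^{(k+1)},\Sigma^{(k+1)}) \le d^{2TM}(\mathbf{z}^{(k+1)},\boldsymbol{\mu}^{(k)},\Sigma^{(k)}) \le d^{2TM}(\mathbf{z}^{(k)},\boldsymbol{\mu}^{(k)},\Sigma^{(k)}).
\end{align*}

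\textbf{Assignment step (second inequality).} Hold $(\boldsymbol{\mu}^{(k)},\Sigma^{(k)})$ fixed. The criterion then separates across $i$ as $\sum_i d_i^{2M}(\boldsymbol{\mu}^{(k)}_{z_i},\Sigma^{(k)}_{z_i})$. By definition, $\hat{z}_i$ minimizes each summand over $\mathcal{S}$, so
\begin{align*}
d_i^{2M}(\boldsymbol{\mu}^{(k)}_{z_i^{(k+1)}},\Sigma^{(k)}_{z_i^{(k+1)}}) \le d_i^{2M}(\boldsymbol{\mu}^{(k)}_{z_i^{(k)}},\Sigma^{(k)}_{z_i^{(k)}})
\end{align*}
for every $i$. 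Summing over $i$ gives the inequality. Positive-definiteness of $\Sigma^{(k)}_g$ is needed here only so that $d_i^{2M}$ is well defined, with finite $\log\det$ and an invertible matrix. Ties are broken by any fixed rule, for example keeping the current label.

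\textbf{Update step (first inequality).} Hold $\mathbf{z}^{(k+1)}$ fixed. The criterion separates across groups into
\begin{align*}
\sum_{i:z_i=g}\big[(X_i-\boldsymbol{\mu}_g)^\top\Sigma_g^{-1}(X_i-\boldsymbol{\mu}_g)\big] + n_g\log\det\Sigma_g ,
\end{align*}
where $n_g$ is the number of observations assigned to group $g$. This is, up to constants, $-2$ times the Gaussian log-likelihood of the cluster. For any fixed $\Sigma_g$, the quadratic part is minimized at the cluster mean $\bar X_g$. The decomposition $\sum(X_i-\boldsymbol{\mu})^\top\Sigma^{-1}(X_i-\boldsymbol{\mu}) = \sum(X_i-\bar X)^\top\Sigma^{-1}(X_i-\bar X) + n_g(\bar X-\boldsymbol{\mu})^\top\Sigma^{-1}(\bar X-\boldsymbol{\mu})$ shows this. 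Plugging in $\bar X_g$ reduces the objective to $n_g[\operatorname{tr}(\Sigma_g^{-1}S_g)+\log\det\Sigma_g]$, where $S_g$ is the cluster sample covariance. The standard inequality $\operatorname{tr}(M)-\log\det M \ge p$ with $M=\Sigma_g^{-1}S_g$, whose eigenvalues are positive when $S_g\succ 0$, shows that $\Sigma_g=S_g$ is the global minimizer. Hence the updated parameters do no worse than the old ones. If the algorithm uses observation weights $w_i>0$, the same argument applies to the weighted mean and weighted covariance, and the weights cancel in the comparison as long as the criterion being tracked carries the same weights.

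\textbf{Main obstacle.} The delicate part is making sure the update step is a genuine minimizer over the admissible set. This requires each updated $\Sigma^{(k+1)}_g=S_g$ to be positive-definite, which in particular needs $n_g>p$ and non-degenerate data in the cluster. I will simply invoke the lemma's hypothesis that $\Sigma^{(k)}_g$ is positive-definite at every iteration, including $k+1$. Assumption \ref{ass:7} makes this hypothesis plausible, since $\Sigma^0_g\succ 0$, and Assumption \ref{ass:8} ensures clusters grow large. I also need to handle empty clusters. If $n_g=0$, the group contributes nothing to the sum, and keeping its old parameters preserves both inequalities.

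Because $\bar k<\infty$ and each iteration weakly decreases the criterion, monotonicity holds across all iterations. This is the claim of Lemma \ref{lem:1}.
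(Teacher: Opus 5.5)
Your proof is correct and follows essentially the same coordinate-descent argument as the paper. The classification step weakly lowers the TSMD because each $z_i$ is an argmin, and the estimation step does too because the cluster mean and cluster covariance are the global minimizers; your inequality $\mathrm{tr}(M)-\log\det M\ge p$ is exactly the eigenvalue argument of the paper's Lemma S.1 with $B=n_gM$, your completing-the-square step for the mean is a cleaner justification than the paper's first-order condition, and, like the paper, you implicitly assume that the minimum-eigenvalue floor in step 1(b) of Algorithm \ref{alg:Kmeans} does not bind.
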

	\begin{theorem}\label{th:1}
		Suppose Assumptions \ref{ass:7}-\ref{ass:8} are verified. Let also the Mahalanobis distance classifier $\hat{z}_i(\cdot,\cdot)$ be defined as in Definition \ref{def:1}$(ii)$. Then
		\begin{align*}
			\Pr[\hat{z}_i(\boldsymbol{\mu}^0,\Sigma^0) \ne s_i] = O(p^{-1}),
		\end{align*}
		for any $i \in [N]$ with fixed $N$ provided a suitable permutation in group labels, and where $p$ stands as the number of covariates.
	\end{theorem}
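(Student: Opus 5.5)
Fix an observation $i$ with $s_i = g$. I would show that for every competing label $j \ne g$ the event $\{d_i^{2M}(\boldsymbol{\mu}^0_j,\Sigma^0_j) \le d_i^{2M}(\boldsymbol{\mu}^0_g,\Sigma^0_g)\}$ has probability $O(p^{-1})$. A union bound over the $S-1$ alternatives, with $S$ fixed by Assumption \ref{ass:8}, then yields $\Pr[\hat{z}_i(\boldsymbol{\mu}^0,\Sigma^0)\ne s_i] = O(p^{-1})$. Fixing $N$ means that only $p\to\infty$ matters, so each observation can be treated separately.

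The first step is to define the gap
\[
\Delta_{ij} := d_i^{2M}(\boldsymbol{\mu}^0_j,\Sigma^0_j) - d_i^{2M}(\boldsymbol{\mu}^0_g,\Sigma^0_g)
\]
and compute its mean and variance under Assumption \ref{ass:7}. Write $U_i = W_g^\top(X_i-\boldsymbol{\mu}^0_g)$, which has mean zero and identity covariance. Then the true-group term is $U_i^\top U_i + \log\det\Sigma^0_g$, with mean $p + \log\det\Sigma^0_g$. The competing term has mean $\operatorname{tr}(\{\Sigma^0_j\}^{-1}\Sigma^0_g) + \delta_{gj}^\top\{\Sigma^0_j\}^{-1}\delta_{gj} + \log\det\Sigma^0_j$, where $\delta_{gj} = \boldsymbol{\mu}^0_g - \boldsymbol{\mu}^0_j$. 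Setting $M = \{\Sigma^0_j\}^{-1}\Sigma^0_g$ gives
\[
\mathbb{E}[\Delta_{ij}] = \operatorname{tr}(M) - p - \log\det M + \delta_{gj}^\top\{\Sigma^0_j\}^{-1}\delta_{gj}.
\]
This is a sum over the eigenvalues $\lambda_k$ of $M$ of $\lambda_k - 1 - \log\lambda_k \ge 0$, which is twice a Gaussian-form Kullback--Leibler divergence, plus a nonnegative mean term.

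The second step is to show that $\mathbb{E}[\Delta_{ij}] \ge c\,p$ for some $c>0$. Assumption \ref{ass:71} says that $W_g \ne W_j$ in a nonvanishing proportion of entries, so a nonvanishing proportion of the eigenvalues of $M$ should stay bounded away from one. Each such eigenvalue contributes a term $\lambda_k - 1 - \log\lambda_k$ bounded below. I expect this to be the main obstacle: translating "different entries in $W_g$ and $W_j$" into "a linear-in-$p$ number of eigenvalues of $M$ separated from $1$" is not automatic. Here I would follow the argument behind Assumption 2$(v)$ of \cite{langevin_bias-reduced_2026}, interpreting the assumption as guaranteeing $\|W_j^\top W_g^{-\top} - I\|_F^2 \gtrsim p$. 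I would then use the inequality $\lambda - 1 - \log\lambda \ge c'\min\{(\lambda-1)^2, |\lambda - 1|\}$, with uniform bounds on the eigenvalues supplied by the diagonal-moment conditions.

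The third step bounds the variance. The random part of $\Delta_{ij}$ is a quadratic form $U_i^\top(B - I)U_i$ plus a linear term in $U_i$, where $B = W_g^{-1}\{\Sigma^0_j\}^{-1}W_g^{-\top}$. Using the finite fourth moments in Assumption \ref{ass:7}, I would show that $\operatorname{Var}[\Delta_{ij}] = O(p)$, noting that the fourth moments control the variance of quadratic forms. Chebyshev's inequality then gives
\[
\Pr[\Delta_{ij}\le 0] \le \frac{\operatorname{Var}[\Delta_{ij}]}{(\mathbb{E}\Delta_{ij})^2} = \frac{O(p)}{c^2p^2} = O(p^{-1}).
\]
This variance bound needs some weak-dependence condition across the coordinates of $U_i$, which I would read into Assumption \ref{ass:7}. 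The union bound from the first paragraph then completes the argument.
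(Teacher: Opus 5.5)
Your route differs from the paper's. You aim for a linear separation $\mathbb{E}\Delta_{ij}\ge cp$ and then apply Chebyshev with $\operatorname{Var}\Delta_{ij}=O(p)$. Your Step 1 identity, $\mathbb{E}\Delta_{ij}=\sum_k(\lambda_k-1-\log\lambda_k)+\delta_{gj}^\top\{\Sigma^0_j\}^{-1}\delta_{gj}$, is correct and uses only two moments.

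The paper instead bounds $\mathbb{P}[d_i^{2M}(s_i)\ge d_i^{2M}(j)]$ by a ratio with two parts:
\begin{itemize}
\item the numerator, $p+\log\det\Sigma^0_{s_i}$ plus a term $\tfrac12 p\sqrt{M}$ from Lemmas S.2--S.3 of the companion paper, is $O(p)$;
\item the denominator grows like $p^2$, because it contains $\sum_{l}\sum_{m\le l}(v_{js_i,lm})^2\ge p(p+1)K/2$.
\end{itemize}
So the paper reads Assumption \ref{ass:71} with a constant $K>0$, uniform in $p$, over all lower-triangular entries, and uses it to make the separation \emph{quadratic}. Fluctuations of order $p$ are then harmless, and no concentration at scale $\sqrt p$ is needed. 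Your union bound over the $S-1$ competitors is fine. It is cleaner than the paper's product formula, which tacitly treats the pairwise events as independent.

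The gap is Step 3, and it is not a technicality. Assumption \ref{ass:7} constrains only means, covariances, cross-observation covariances and marginal fourth moments. It says nothing about dependence among the coordinates of $U_i$ beyond uncorrelatedness. So nothing better than $\operatorname{Var}(U_i^\top U_i)=O(p^2)$ follows from it, and Chebyshev with a linear gap then gives only $O(1)$.

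In the linear-gap, bounded-spectrum regime you target, the conclusion actually fails. Take the following setup:
\begin{itemize}
\item $X_i=\boldsymbol{\mu}^0_g+\xi_iW_g^{-\top}Z_i$, with $Z_i\sim N(0,I_p)$ and $\xi_i^2\in\{1/2,3/2\}$ equally likely and independent of $Z_i$;
\item a competitor with $\boldsymbol{\mu}^0_j=\boldsymbol{\mu}^0_g$ and $\Sigma^0_j=\Sigma^0_g/2$, so $W_j=\sqrt2\,W_g$;
\item $\{\Sigma^0_g\}^{-1}=I_p+p^{-1}\mathbf{1}\mathbf{1}^\top$, whose Cholesky factor has a fully nonzero lower triangle and whose spectrum is $\{1,2\}$.
\end{itemize}
Then Assumptions \ref{ass:7} and \ref{ass:71} hold literally, and $\mathbb{E}\Delta_{ij}=p(1-\log 2)>0$. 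Yet $\Delta_{ij}=\xi_i^2\|Z_i\|^2-p\log 2<0$ with probability tending to $1/2$. So ``reading weak dependence into Assumption \ref{ass:7}'' adds a hypothesis, for example independent coordinates of $U_i$ with uniformly bounded fourth moments, or a Hanson--Wright-type condition.

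Step 2 has a related problem. The conditions $0<\sigma^2_{g,ll}<\infty$ hold for each $p$ but give no uniform-in-$p$ bounds on the eigenvalues of $M$, so those bounds must also be assumed. To close the argument, either state these extra hypotheses explicitly, or follow the paper and extract an order-$p^2$ separation from a quantitative reading of Assumption \ref{ass:71}. That quadratic separation is what lets an $O(p)$ fluctuation bound suffice.
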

	\begin{theorem}\label{th:2}
		Suppose Assumptions \ref{ass:7}-\ref{ass:8} are verified. Let the total squared Mahalanobis distance $d^{2TM}(\cdot,\cdot,\cdot)$ be defined as in Definition \ref{def:1}$(iii)$. Then
		\begin{align*}
			(\mathbf{z}, \bar{\boldsymbol{\mu}},\bar{\Sigma}) = \arg \min_{\boldsymbol{\mu}\in \Theta^{S},\Sigma \in \Omega^S}   \plim_{N \to \infty} d^{2TM}(\mathbf{z},\boldsymbol{\mu},\Sigma),
		\end{align*}
		for any $\mathbf{z} \in \mathcal{Z}$ such that $\bar{\Sigma} \in \Omega^S$, where $\bar{\boldsymbol{\mu}} = \boldsymbol{\mu}^0$ and $\bar{\Sigma} = \Sigma^0$ if $\mathbf{z} = \mathbf{z}^* \equiv (z_1^*,...,z_N^*)^{\top}$, with $z_i^* = s_i$ for almost every $i \in [N]$ provided a suitable permutation in group labels.
	\end{theorem}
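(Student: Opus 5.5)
The plan has two parts. First I fix the partition $\mathbf{z}$ and solve the inner minimization over $(\boldsymbol{\mu},\Sigma)$ in closed form. Then I identify the probability limits of these minimizers when $\mathbf{z}=\mathbf{z}^*$.

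\emph{Step 1: the inner minimization for fixed $\mathbf{z}$.} The TSMD in Definition \ref{def:1}$(iii)$ is additively separable across groups. So for fixed $\mathbf{z}$, minimizing over $(\boldsymbol{\mu},\Sigma)\in\Theta^S\times\Omega^S$ reduces to $S$ separate Gaussian-type quasi-likelihood problems. Write $n_g(\mathbf{z})=\sum_i \mathbbm{1}[z_i=g]$.

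For each $g$, the first-order condition in $\boldsymbol{\mu}_g$ gives the within-cluster sample mean:
\[
\hat{\boldsymbol{\mu}}_g(\mathbf{z})=n_g^{-1}\sum_i \mathbbm{1}[z_i=g]X_i.
\]

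For $\Sigma_g$, I use the standard trace–log-det argument. Substituting $\hat{\boldsymbol{\mu}}_g$, the group-$g$ objective becomes $n_g\{\mathrm{tr}(\Sigma_g^{-1}\hat S_g)+\log\det\Sigma_g\}$, where $\hat S_g$ is the within-cluster sample covariance. Writing $\Sigma_g^{-1}\hat S_g$ in terms of its eigenvalues $\lambda_l$, the criterion equals $\sum_l(\lambda_l-\log\lambda_l)+\log\det\hat S_g$. Since $\lambda-\log\lambda\ge 1$ with equality only at $\lambda=1$, the unique minimizer is $\hat\Sigma_g(\mathbf{z})=\hat S_g$, provided $\hat S_g$ is positive definite. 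This is exactly the restriction ``$\bar\Sigma\in\Omega^S$'' in the statement.

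\emph{Step 2: interchanging $\arg\min$ and $\plim$.} I normalize by $N$, or by $n_g$ group-wise, so that the probability limit is well defined. I then argue that the $\arg\min$ of the limit equals the limit of the sample minimizers, namely $\bar{\boldsymbol{\mu}}_g=\plim\hat{\boldsymbol{\mu}}_g(\mathbf{z})$ and $\bar\Sigma_g=\plim\hat S_g$. The justification is that the normalized objective is a continuous function of the sample first and second moments. The closed-form minimizer from Step 1 is a continuous map of those moments on $\Omega$, so by the continuous mapping theorem the minimizer of the limiting criterion is the limit of the sample moments. This is the sense in which the display should be read: for each admissible $\mathbf{z}$, the minimizing pair is $(\bar{\boldsymbol{\mu}},\bar\Sigma)$.

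\emph{Step 3: the case $\mathbf{z}=\mathbf{z}^*$.} Suppose $z_i=s_i$ for almost every $i$, after relabeling. Then each cluster $g$ asymptotically consists of draws with mean $\boldsymbol{\mu}^0_g$ and variance $\Sigma^0_g$, by Assumption \ref{ass:7}. These draws are uncorrelated and have finite fourth moments, so a weak LLN (Chebyshev) applies to $\hat{\boldsymbol{\mu}}_g$ and $\hat S_g$. Assumption \ref{ass:8} guarantees $n_g\to\infty$. Hence $\hat{\boldsymbol{\mu}}_g\to_p\boldsymbol{\mu}^0_g$ and $\hat S_g\to_p\Sigma^0_g$.

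The ``almost every'' qualifier means a vanishing fraction of misassigned points. Their contribution to the cluster moments is $o_p(1)$, because the fourth moments are bounded uniformly across the finitely many groups. Positive definiteness of $\Sigma^0_g$ ensures the limit lies in $\Omega$, which validates the continuity argument of Step 2.

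\emph{Main obstacle.} I expect the hardest part to be the rigorous justification of interchanging $\arg\min$ and $\plim$ in Step 2. The parameter space $\Omega$ is not compact, and $\log\det\Sigma$ is unbounded near the boundary of $\Omega$. I would handle this by exploiting the explicit profile solution from Step 1 rather than invoking a general argmin theorem. The profiled objective is $\sum_g n_g\{p+\log\det\hat S_g(\mathbf{z})\}$, and consistency of the minimizer follows directly from the consistency of $\hat S_g$ together with continuity of $\log\det$ on $\Omega$.
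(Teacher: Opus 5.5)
Your proposal is correct and follows essentially the same route as the paper. You split the TSMD by group, profile out $\boldsymbol{\mu}_g$ by the within-cluster mean and $\Sigma_g$ by the trace--log-det eigenvalue argument (the paper's Lemma S.1), pass to the limit of the $N$-normalized criterion, and then use $\bar{\alpha}_{g(g)}=1$ under near-correct classification to obtain $\bar{\boldsymbol{\mu}}_g=\boldsymbol{\mu}^0_g$ and $\bar{\Sigma}_g=\Sigma^0_g$. The differences are minor: you minimize before taking limits, whereas the paper takes limits and then bounds from below, and your bounded-fourth-moment treatment of misassigned points is more careful than the paper's. The argmin--plim interchange you flag is unnecessary, because the pointwise limit of $N^{-1}d^{2TM}$ at fixed $(\boldsymbol{\mu},\Sigma)$ has the same quadratic/log-det form in the limiting moments, so your Step 1 inequality applies to it directly.
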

	\begin{corollary}\label{cor:11}
		Suppose Assumptions \ref{ass:7}-\ref{ass:8} are verified. Let the Mahalanobis distance classifier $\hat{z}_i(\cdot,\cdot)$ and the total squared Mahalanobis distance $d^{2TM}(\cdot,\cdot,\cdot)$ be defined as in Definition \ref{def:1}$(ii)$ and \ref{def:1}$(iii)$, respectively. Then
		\begin{align*}
			(\hat{\mathbf{z}}({\boldsymbol{\mu}}^0,{\Sigma}^0), {\boldsymbol{\mu}}^0,{\Sigma}^0) = \arg \min_{\mathbf{z}\in \mathcal{Z},\boldsymbol{\mu} \in \Theta^{S},\Sigma \in \Omega^S}   \plim_{N,p \to \infty} d^{2TM}(\mathbf{z},\boldsymbol{\mu},\Sigma),
		\end{align*}
		provided a suitable permutation in group labels, and where $\hat{\mathbf{z}}(\cdot,\cdot) = (\hat{z}_1(\cdot,\cdot),...,\hat{z}_N(\cdot,\cdot))^{\top}$.
	\end{corollary}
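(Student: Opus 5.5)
The plan is to combine Theorem \ref{th:2} and Theorem \ref{th:1} through a two-stage (profile) minimization of the probability limit of the TSMD. For fixed $\mathbf{z}$, Theorem \ref{th:2} identifies the inner minimizer $(\bar{\boldsymbol{\mu}}(\mathbf{z}),\bar{\Sigma}(\mathbf{z}))$. I will therefore write
\begin{align*}
\min_{\mathbf{z},\boldsymbol{\mu},\Sigma} \plim d^{2TM}(\mathbf{z},\boldsymbol{\mu},\Sigma) = \min_{\mathbf{z}\in\mathcal{Z}} \plim d^{2TM}(\mathbf{z},\bar{\boldsymbol{\mu}}(\mathbf{z}),\bar{\Sigma}(\mathbf{z})),
\end{align*}
and I will show that the outer minimum over $\mathbf{z}$ is attained, as $p\to\infty$, at the true partition $\mathbf{z}^*$. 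By Theorem \ref{th:1}, this partition coincides asymptotically with $\hat{\mathbf{z}}(\boldsymbol{\mu}^0,\Sigma^0)$. The inner parameters at $\mathbf{z}^*$ are then $(\boldsymbol{\mu}^0,\Sigma^0)$ by the second part of Theorem \ref{th:2}.

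\textbf{Step 1: any partition gives a Gaussian-type profile value.} For a fixed $\mathbf{z}$, the inner minimizers are the within-cluster sample means and covariances. Plugging them in yields a profiled objective of the form $\sum_g n_g\,(p + \log\det \bar{\Sigma}_g(\mathbf{z}))$, divided by $N$. This is the standard identity that the trace term equals $p$ when evaluated at the empirical covariance. Minimizing over $\mathbf{z}$ is thus equivalent to minimizing the weighted sum of log-determinants of the within-cluster covariance limits.

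\textbf{Step 2: the true partition minimizes that sum.} Any cluster mixing true groups has a limiting covariance equal to a mixture covariance, namely $\sum_h w_h\Sigma^0_h$ plus a between-group mean term. I will use concavity of $\log\det$:
\begin{align*}
\log\det\Big(\sum_h w_h\Sigma_h^0 + B\Big) \ge \sum_h w_h\log\det\Sigma^0_h.
\end{align*}
Equality holds only if all mixed $\Sigma^0_h$ coincide and $B=0$. Assumption \ref{ass:71} excludes this for distinct groups when a nonvanishing fraction of the entries differ. Assumption \ref{ass:8} guarantees that every mixed group carries nonvanishing weight. Assumption \ref{ass:7} guarantees that the limits exist: there are finite fourth moments, so the sample covariances converge.

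\textbf{Step 3: pass to $p\to\infty$ and identify the minimizer.} Since partitions agreeing with $\mathbf{z}^*$ except on a vanishing fraction give the same limit, the argmin is only identified up to such sets. Here Theorem \ref{th:1} enters. The classifier $\hat{z}_i(\boldsymbol{\mu}^0,\Sigma^0)$ misclassifies each $i$ with probability $O(p^{-1})$, so the expected fraction of disagreements with $\mathbf{z}^*$ vanishes as $p\to\infty$. Markov's inequality then makes the fraction vanish in probability. Hence $\hat{\mathbf{z}}(\boldsymbol{\mu}^0,\Sigma^0)$ attains the same limit as $\mathbf{z}^*$. Combined with Step 2 and Theorem \ref{th:2}, the triple $(\hat{\mathbf{z}}(\boldsymbol{\mu}^0,\Sigma^0),\boldsymbol{\mu}^0,\Sigma^0)$ is the joint argmin, up to label permutation.

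\textbf{Main obstacle.} The hard step is Step 2 uniformly as $p\to\infty$. The gap in log-determinants must not vanish relative to the scale of the objective, which grows like $p$. I would normalize by $p$ and use Assumption \ref{ass:71}, through the Cholesky factors $W_g$, to argue that the per-coordinate discrepancy $p^{-1}\big[\log\det(\text{mixture}) - \sum_h w_h\log\det\Sigma_h^0\big]$ stays bounded away from zero. Bounded diagonal entries, from Assumption \ref{ass:7}, control the eigenvalues in this argument. The order of the limits in $N$ and $p$ would be handled by taking $N\to\infty$ first for fixed $p$, as in Theorem \ref{th:2}, and then letting $p\to\infty$.
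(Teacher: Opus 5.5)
\textbf{Gap in Step 2: the concavity bound does not apply to data-dependent partitions.}

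Your architecture is the paper's own. The paper splits each working cluster $g$ into its true-group pieces $(g,j)$ and applies Lemma S.1 to each piece separately, which gives the ``infeasible'' bound $p+\sum_j \pi_{.j}\log\det\Sigma^0_j$. Theorem \ref{th:2} then shows that an almost-correctly classified $\mathbf{z}$ attains it, and the WLLN plus Theorem \ref{th:1} transfer this to $\hat{\mathbf{z}}(\boldsymbol{\mu}^0,\Sigma^0)$. Your concavity-of-$\log\det$ inequality is the same bound in non-variational form. Using Assumption \ref{ass:71} for strictness and Markov's inequality instead of the WLLN are mild improvements.

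Step 2 nevertheless has a genuine gap, and the paper's proof shares it at the line asserting $\bar{\Sigma}_{gj}\xrightarrow{p}\Sigma^0_j$ ``by construction''. Your formula ``limiting covariance of a mixed cluster $=\sum_h w_h\Sigma^0_h+B$'' requires that, within each true group, membership in a cluster be unrelated to $X_i$. That is exactly what treating $\mathbf{z}$ as fixed in the WLLN presumes. The argmin, however, runs over all of $\mathcal{Z}$, and the partition you must place in it, $\hat{\mathbf{z}}(\boldsymbol{\mu}^0,\Sigma^0)$, is a function of the $X_i$. The group-$h$ points it puts in a cluster form a truncated sample whose covariance is not $\Sigma^0_h$, so concavity gives no lower bound.

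In fact the conclusion of Step 2 is false at fixed $p$:
\begin{itemize}
\item Because $\hat{z}_i(\boldsymbol{\mu}^0,\Sigma^0)$ minimizes $d_i^{2M}(\boldsymbol{\mu}^0_g,\Sigma^0_g)$ over $g$, one has $d^{2TM}(\hat{\mathbf{z}},\boldsymbol{\mu}^0,\Sigma^0)\le d^{2TM}(\mathbf{z}^*,\boldsymbol{\mu}^0,\Sigma^0)$ term by term, with a strictly positive gain at each misclassified $i$.
\item At fixed $p$, misclassification has positive probability. Theorem \ref{th:1} only bounds it by $O(p^{-1})$, and for Gaussian groups it is never zero. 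The total gain is therefore of order $N$.
\item Re-optimizing $(\boldsymbol{\mu},\Sigma)$ at $\mathbf{z}^*$ lowers the objective by only $o_p(N)$, because the within-group means and covariances are consistent there.
\end{itemize}
Hence $N^{-1}\min_{\boldsymbol{\mu},\Sigma}d^{2TM}$ is asymptotically strictly smaller at $\hat{\mathbf{z}}$ than at $\mathbf{z}^*$; this is the classical inconsistency of classification maximum likelihood.

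The same fact breaks Step 3. Your ``vanishing fraction'' is a fixed-$p$, $N\to\infty$ notion, whereas the disagreement fraction of $\hat{\mathbf{z}}$ tends to $\Pr[\hat{z}_i\ne s_i]>0$ for each fixed $p$.

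\textbf{What a repair needs.} You cannot pin down the minimizer at fixed $p$ and only then let $p\to\infty$. The comparison has to be carried out within the $p\to\infty$ limit, presumably after normalizing by $p$ as well as $N$. The reason is that a misclassified point can gain an amount of order $p$, while Theorem \ref{th:1} only makes such points $O(p^{-1})$-rare. Two bounds are needed:
\begin{itemize}
\item an upper bound on the reduction of $\log\det$ that any data-dependent reassignment can achieve;
\item a lower bound, which is where a quantitative form of Assumption \ref{ass:71} must enter, showing that partitions merging or splitting true groups on a nonvanishing fraction of observations pay a penalty of strictly larger order.
\end{itemize}
Your ``main obstacle'' paragraph points this way. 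However, it controls only the mixture gap in the concavity inequality, not this selection gain, and the selection gain is what actually defeats Step 2.
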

	\begin{corollary}\label{cor:1}
		Suppose Assumption \ref{ass:1} and Assumptions \ref{ass:3}-\ref{ass:8} are verified. Let the Mahalanobis distance classifier $\hat{z}_i(\cdot,\cdot)$ be defined as in Definition \ref{def:1}$(ii)$ and the true CATE $\tau_{d,g}(a)$ be defined as in Assumption \ref{ass:1}. Then
		\begin{align*}
			\hat{\tau}_{d,\hat{z}_i(\hat{\boldsymbol{\mu}},\hat{\Sigma})}(a) - \tau_{d,(s_i)}(a) = O_p(\tilde{N}_{(s_i)}^{-1/2}),
		\end{align*}
		as $p/N \to \infty$ for any triple $(\hat{z}_i(\cdot,\cdot), d, a) \in \mathcal{S} \times \mathcal{D} \times [0,1]$, where $\hat{\tau}_{d,g}(a) := \hat{m}_{d,g}(a,x) - \hat{m}_{0,g}(1,x)$, $(s_i)$ is a suitable permutation in the group label $s_i$ for all $i \in [N]$, and where $(\hat{\boldsymbol{\mu}},\hat{\Sigma})$ is the global minimizer of $d^{2TM}(\hat{\mathbf{z}}(\boldsymbol{\mu},\Sigma),\boldsymbol{\mu},\Sigma)$ obtained with Algorithm \ref{alg:Kmeans}.
	\end{corollary}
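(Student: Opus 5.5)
The argument chains the earlier results. Corollary \ref{cor:11} and Theorem \ref{th:1} give asymptotically correct classification. Assumption \ref{ass:6} then turns correct classification into $\sqrt{\tilde N_g}$-consistency of the outcome models. Finally, Assumptions \ref{ass:1} and \ref{ass:3} allow the difference of the two outcome models to be read as the CATE.

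\textbf{Step 1: the minimizer targets the oracle classifier.} Take $(\hat{\boldsymbol{\mu}},\hat{\Sigma})$ to be the global minimizer of $d^{2TM}(\hat{\mathbf{z}}(\boldsymbol{\mu},\Sigma),\boldsymbol{\mu},\Sigma)$ produced by Algorithm \ref{alg:Kmeans}. Lemma \ref{lem:1} guarantees that the objective is monotone along the iterations, so a limit point exists after finitely many steps. Corollary \ref{cor:11} then identifies $(\hat{\mathbf{z}}(\boldsymbol{\mu}^0,\Sigma^0),\boldsymbol{\mu}^0,\Sigma^0)$ as the minimizer of the probability limit as $N,p\to\infty$. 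Combining this with Theorem \ref{th:2} gives, up to a label permutation, $\hat{\boldsymbol{\mu}}\to\boldsymbol{\mu}^0$ and $\hat\Sigma\to\Sigma^0$. Because $\hat z_i$ is an argmin over finitely many groups, it is locally constant in $(\boldsymbol{\mu},\Sigma)$ wherever the minimum is unique. Hence $\hat z_i(\hat{\boldsymbol{\mu}},\hat\Sigma)=\hat z_i(\boldsymbol{\mu}^0,\Sigma^0)$ with probability tending to one.

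\textbf{Step 2: uniform correct classification.} By Theorem \ref{th:1}, $\Pr[\hat z_i(\boldsymbol{\mu}^0,\Sigma^0)\neq s_i]=O(p^{-1})$. A union bound gives
\[
\Pr\bigl[\exists i\in[N]:\hat z_i\ne (s_i)\bigr]\le N\cdot O(p^{-1})=O(N/p),
\]
and this vanishes exactly under the rate condition $p/N\to\infty$ in the statement. So with probability approaching one every observation is correctly classified, which is the hypothesis on the left-hand side of Assumption \ref{ass:6}.

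\textbf{Step 3: from outcome models to CATEs.} On the event from Step 2, Assumption \ref{ass:6} gives
\[
\hat m_{d,(z_i)}(a,x)-m_{d,g}(a,x)=O_p(\tilde N_g^{-1/2}),
\]
and the same holds for the null-treatment model. The events of vanishing probability can be absorbed into the $O_p$ statement. Assumptions \ref{ass:5} and \ref{ass:8} ensure that every treatment arm in every group has a number of observations growing proportionally to $\tilde N_g$, so both models are estimable. Assumption \ref{ass:3} makes $m_{d,g}$ a causal rather than a merely selected regression. Assumption \ref{ass:1} makes the difference $m_{d,g}(a,x)-m_{0,g}(\cdot,x)$ equal to $\tau_{d,g}(a)$, independent of $x$. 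Assumption \ref{ass:4} keeps all the quantities bounded. Applying the triangle inequality to the difference of the two outcome-model errors then yields the $O_p(\tilde N_{(s_i)}^{-1/2})$ rate.

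\textbf{Main obstacle.} The weakest step is Step 1. It requires that the algorithm actually attains the global minimizer, and that convergence of the population minimizer transfers to the sample argmin, which needs some uniform convergence of $d^{2TM}/N$. It also requires that the argmin in $\hat z_i$ is unique, i.e.\ that there are no ties at the true parameters. My first attempt would be to lean on the phrasing of Corollary \ref{cor:11}, which defines the estimator as the global minimizer, and to treat ties as a null event using the positive-definiteness in Assumption \ref{ass:7}.

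A second subtlety is that the union bound in Step 2 needs $p/N\to\infty$ with $N\to\infty$. Both limits must therefore be taken jointly, and the argument should track them along a joint sequence rather than sequentially.
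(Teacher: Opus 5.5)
\textbf{Verdict: the proposal has a genuine gap in Step 1.}

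Your overall chain is the paper's: exact classification of every observation, then Assumption \ref{ass:6}, then the rate. Your Step 2 is a useful addition. The union bound over Theorem \ref{th:1} gives $\Pr[\exists i:\hat z_i(\boldsymbol{\mu}^0,\Sigma^0)\neq (s_i)]=O(N/p)$, which explains exactly why the statement needs $p/N\to\infty$. But this covers only the oracle classifier. The paper never passes through $(\boldsymbol{\mu}^0,\Sigma^0)$. It takes uniform consistency of the classifier evaluated at the global TSMD minimizer directly from Theorem 2 of \cite{langevin_bias-reduced_2026}. Step 1, your replacement for that imported step, is where the proposal fails, for two reasons.

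\textbf{Incompatible regimes.} Steps 1 and 2 need incompatible conditions. Once $p/N\to\infty$ we have $N_g\le N<p$, so the within-group sample covariance has rank at most $N_g-1<p$. The unconstrained $d^{2TM}$ is then unbounded below over $\Omega^S$ through $\log\det\Sigma_g$, so there is no unconstrained global minimizer. Under the eigenvalue floor of Algorithm \ref{alg:Kmeans}, at least $p-N_g+1$ eigenvalues of $\hat\Sigma_g$ are set to $\lambda_{min}$. Hence $\hat\Sigma_g$ does not converge to $\Sigma^0_g$ in the operator-norm sense needed to control $p$-dimensional quadratic forms. Theorem \ref{th:2} and Corollary \ref{cor:11} cannot supply that convergence anyway. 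The proof of Theorem \ref{th:2} explicitly holds $p$ fixed with $\bar C_g$ in the interior of $\Omega$. Corollary \ref{cor:11} describes the minimizer of the probability limit of the objective, not the limit of the sample minimizer.

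\textbf{Pointwise, not uniform.} Even granting parameter consistency, local constancy of the argmin is a pointwise statement. It gives $\hat z_i(\hat{\boldsymbol{\mu}},\hat\Sigma)=\hat z_i(\boldsymbol{\mu}^0,\Sigma^0)$ with probability tending to one for each fixed $i$. Assumption \ref{ass:6} needs this for all $i\in[N]$ simultaneously as $N\to\infty$. That requires showing $\max_{i,g}|d_i^{2M}(\hat{\boldsymbol{\mu}}_g,\hat\Sigma_g)-d_i^{2M}(\boldsymbol{\mu}^0_g,\Sigma^0_g)|$ is below half the smallest oracle classification margin across the sample, where the distances themselves are of order $p$. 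Ruling out ties only makes each margin positive almost surely; it does not bound the margins away from zero uniformly in $i$.

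What is missing is a direct argument that the partition attained by the global TSMD minimizer is exactly correct when $p/N\to\infty$, without routing through consistent estimation of $\Sigma^0$. That is precisely the step the paper does not prove internally but imports from the cited theorem.
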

	\par
	Lemma \ref{lem:1} shows that using the weighted K-means algorithm never increases the TSMD between two consecutive iterations. This lemma is very practical for searching the global minimum of the TSMD across the entire parameter space, $\mathcal{Z} \times \Theta^{S} \times \Omega^S$. The addition of the logarithm of the determinant of $\Sigma_g$ in the definition of the squared Mahalanobis distance ensures that the TSMD will never increase when going from $\Sigma^{(k)}_g$ to $\Sigma^{(k+1)}_g$. However, this lemma requires every $\Sigma^{(k)}_g$ to be positive-definite at each iteration of the algorithm. This may pose a problem in finite samples if $p > N^{(k)}_g$, where $N^{(k)}_g$ denotes the number of observations assigned to the $g^{th}$ group at the $k^{th}$ iteration. The matrix $\Sigma^{(k)}_g$ will also be singular if some covariates show no within-group variation or near multicollinearity among each other, which is likely to occur in small groups and/or if $X_i$ contains binary or discrete elements.
	\par
	To solve this type of problem, I use the eigenvalue decomposition of $\Sigma^{(k)}_g$ to enforce a minimum, strictly positive eigenvalue for every group at every iteration of the weighted K-means algorithm. Doing so guarantees that the resulting covariance matrix will be invertible and positive-definite. This can however invalidate the conclusion of Lemma \ref{lem:1} as the resulting set of covariance matrices will no longer correspond to the global minimizer of the TSMD. In the empirical application, the objective function decreases between two consecutive iterations for all sets of initial parameter values, except when the enforced minimum eigenvalue, denoted by $\lambda_{min}$, is set at a relatively high level (i.e., more than 1.5). This is why I recommend to set $\lambda_{min}$ as close to zero as possible in order to limit the impact of this restriction on the convergence behavior of the algorithm.\footnote{Note also that the weighted K-means algorithm reduces to the standard K-means algorithm when $\lambda_{min} \ge \max \{\lambda_{1,max},...,\lambda_{S,max}\}$ at each iteration, where $\lambda_{g,max}$ is the maximum eigenvalue of $\Sigma_g$ for a given $g \in \mathcal{S}$. In this case, it is easy to show that $d_i^{2M}(\boldsymbol{\mu}_{g}, \Sigma_g) = \lambda_{min}^{-1} (X_i - \boldsymbol{\mu}_{g})^{\top} (X_i - \boldsymbol{\mu}_{g}) + p\log (\lambda_{min})$, and that minimizing $d^{2TM}(\mathbf{z},\boldsymbol{\mu}, \Sigma)$ with respect to $\mathbf{z}$, $\boldsymbol{\mu}$, and $\Sigma$ will yield the same results as the standard K-means algorithm. Such an equivalence also highlights the fact that $\lambda_{min}$ determines the maximum weight that can be given to any covariate introduced in the weighted K-means algorithm.}
	\par
	The main benefit of using the minimum eigenvalue approach is that it allows the use of all available covariates to estimate all group memberships. As shown in Theorem \ref{th:1}, the probability of misclassifying any observation in the sampled population goes to zero as the number of covariates goes to infinity. This theorem is similar to Theorem 1 of \cite{langevin_bias-reduced_2026} for a finite number of groups and offers a clear advantage to the Mahalanobis distance over the Euclidean distance for classification tasks under mild regularity conditions. Consequently, it becomes crucial to include as many covariates as possible in the weighted K-means algorithm to bring down the misclassification rate to zero in finite samples, even though this might lead to $p > N^{(k)}_g$. This is why the minimum eigenvalue approach is preferred over other approaches based on \textit{l}$_1$/\textit{l}$_2$-regularization, or on direct estimation of $\{\Sigma^{(k)}_g\}^{-1}(X_i - \boldsymbol{\mu}^{(k)}_{g})$ \citep{cai_chime_2019}. Note also that $p \to \infty$ does not necessarily violate Assumption \ref{ass:6} if $p/N \to 0$ as $N,p \to \infty$, or if the set of relevant predictive variables is different from the set of relevant prescriptive variables.
	\par
	Theorem \ref{th:2} states that the asymptotic location of the global minimum of the TSMD is located at the true mean values and covariance matrices if almost every observation in the sample is correctly classified asymptotically. Note that the asymptotic global minimum of the TSMD is unique under Assumptions $\ref{ass:7}$-$\ref{ass:8}$, unless two groups (or more) share the same vector of mean values and the same covariance matrix, which is (very) unlikely when $p$ is (very) large. Combining the results of both theorems implies that globally minimizing the TSMD leads to consistent estimates of all mean values, covariance matrices, and group memberships (Corollary \ref{cor:11}). Note also that Corollary \ref{cor:11} does not require $p$ to increase at a faster rate than $N$, which limits the occurrence of the issues mentioned above under Assumption $\ref{ass:8}$. In fact, $p$ can be much smaller than the number of observations within each group without sacrificing consistency, even though some observations may remain misclassified asymptotically.
	\par
	Corollary \ref{cor:1} requires that $p$ increases at a faster rate than $N$ to obtain $\sqrt{\tilde{N}_g}$-consistent estimates of all CATEs under Assumption \ref{ass:6}. If this is not the case, the rate of convergence of the CATEs in the $g^{th}$ group might be slower than $\tilde{N}_g^{-1/2}$, but the convergence of the weighted K-means algorithm will be facilitated if the number of covariates remains relatively small. This shows that the proposed approach entails a trade-off between efficiency and practicability under Assumption \ref{ass:1} and Assumptions \ref{ass:3}-\ref{ass:8}. It is however possible to have both $p < N$ and $z_i = (s_i)$ for \textit{every} $i \in [N]$ in finite samples under the same set of assumptions. Such a situation will occur if $\boldsymbol{\mu}^0_g$ and $\Sigma^0_g$ vary sufficiently across groups such that it will bring classification error down to zero, and therefore lead to efficient estimation of all CATEs under Assumption \ref{ass:6}.
	\par
	Overall, the above results show that minimizing the TSMD through the weighted K-means algorithm leads to the recovery of almost every true group membership under Assumptions \ref{ass:7}-\ref{ass:8}. Predicted group memberships can then be used to obtain consistent estimates of all CATEs. The precision of the estimates will depend on both the ``quality and quantity" of the covariates introduced in the weighted K-means algorithm. To note, there is no guarantee that the global minimum of the TSMD lies close to the set of true group memberships in finite samples with a relatively small number of covariates, or when most covariates do not satisfy Assumption \ref{ass:7}. This is why I use 5-fold cross-validation with several data splittings to validate the results obtained in the empirical application presented in Section \ref{sec:4}.
	\par
	The proposed methodology does not require the estimation of the treatment selection model $\Pr[A_{i,d}>0|s_i=g,X_i=x], \ \forall (d,g,x) \in \mathcal{D} \times \mathcal{S}\times \mathcal{X}$, as long as Assumptions \ref{ass:5}-\ref{ass:6} are satisfied and all treatment options are observed within each group. Estimating $\Pr[A_{i,d}>0|s_i=g,X_i=x]$ is useful when Assumption \ref{ass:6} is violated and orthogonalizaton of the adherence/assigned treatment variable is required (as shown in \cite{chernozhukov_doubledebiased_2018}), or when a doubly-robust approach is considered \citep{athey_policy_2021,zhang_subgroup_2022,li_optimal_2023}. The latter is usually performed by plugging the different CATEs into a ``score" vector that is based on augmented inverse probability weights (AIPW). However, it is unclear how the double-robustness property is maintained in the multi-action framework when the vector of estimated scores is defined as follows \citep{zhou_offline_2023,zhan_policy_2024}
	\begin{align}\label{eq:10}
		\hat{\Gamma}_{i,g} = \frac{Y_i - \hat{m}_{D_i,g}(A_{i,D_i},X_i)}{\Pr[A_{i,D_i}>0|s_i=g,X_i=x]} \mathbbm{1}[D_i=d] + 
		\begin{bmatrix}
			\hat{m}_{0,g}(A_{i,D_i},X_i)\\
			...\\
			\hat{m}_{D,g}(A_{i,D_i},X_i)\\
		\end{bmatrix},
	\end{align}
	for each $d \in \mathcal{D}$, where adherence is set equal to $A_{i,D_i}$ across all treatment options for simplicity. When $d \ne D_i$, the $d^{th}$ element of $\hat{\Gamma}_{i,g}$ is equal to $\hat{m}_{d,g}(A_{i,D_i},X_i)$, whose consistency will depend on the chosen outcome model regardless of how one models the conditional selection process $\Pr[A_{i,d}>0|s_i=g,X_i=x]$ for each $d \in \mathcal{D}$.
	\par
	Moreover, the fact that the realized outcome $Y_i$ is included in the definition of the estimated score $\hat{\Gamma}_{i,g}$ necessarily requires an additional prediction step to identify the policy rule (e.g., Equation (38) of \cite{athey_policy_2021} and step 7 in Algorithm 1 of \cite{zhou_offline_2023}) when the class of feasible policies $\Pi$ is left unconstrained. If the first term on the RHS of Eq. (\ref{eq:10}) is dropped, then the policy rule can be written as $\hat{\pi}^*(X_i) = \arg \max_{d\in \mathcal{D}} \hat{m}_{d,g}(A_{i,D_i},X_i)$ for a given value of $g \in \mathcal{S}$. 
	\par
	In contrast, the methodology proposed in this paper does not rely on the inclusion of the observed or predicted outcome value in the weighted K-means algorithm, which is believed to reduce generalization error and improve OOS prediction accuracy. Using the estimated residuals from the outcome models $\hat{m}_{d,g}(A_{i,D_i},X_i)$ to identify the CATEs also creates a form of redundancy in the weighted K-means algorithm, where the estimated CATEs are present in both the inputs and the output of the algorithm, and therefore have to be removed for test samples where the outcome is not yet revealed. Sensitivity analyses also indicate that doing so leads to worse OOS prediction accuracy in the empirical application (results not shown).
	\par
	Other approaches based on nonparametric estimators and statistical learning, such as kernel regression, sieve-based methods, random forests, and neural networks, can certainly be used to estimate CATEs \citep{foster_subgroup_2011,athey_policy_2021}. However, those algorithms/estimators mostly rely on the standard unconfoundedness assumption and/or might not offer good OOS performance if the size of the training dataset is not large enough. Moreover, most of the literature on PL focuses on the discrete treatment case, whereas the empirical application considered in this paper is a multi-action PL problem with a semicontinuous treatment variable for each action (i.e., the observed adherence). Despite this, empirical results obtained with the proposed approach are compared to the ones obtained by an ``off-the-shelf" approach based on random forests under discrete enforcement of the treatment variable (see Section \ref{sec:52} for more details). Additional comparisons between the proposed approach and other ML techniques are however left to further research.
	\subsection{Incorporating Imperfect Adherence to Treatment}\label{sec:32}
	If adherence to treatment can be imperfect in practice, it is important to account for this feature when learning a policy rule from observational data. Adherence is modeled as a function of expected treatment effectiveness and of a set of observed individual and environmental features (not necessarily a subset of $\mathcal{X}$). Taking on the Bayesian perspective, it is thus possible to modify the policy rule as follows
	\begin{align*}
		\pi^{*}_{B} &= \arg \max_{\pi \in \Pi}  \sum_{i=1}^N \sum_{g\in \mathcal{S}} Y_i(\pi(X_i),a(W_i, \mathbb{E}_S[\tau_{\pi(X_i),s_i}(1)]),g) \Pr[s_i=g],
	\end{align*}
	where $a(\cdot,\cdot)$ denotes the adherence function, $W_i \in \mathcal{W} \subseteq \mathbb{R}^{p'}$ with $p'>0$, and where $\mathbb{E}_S[\tau_{\pi(X_i),s_i}(1)]$ refers to the expected effectiveness of the selected treatment with perfect adherence. Under Assumption \ref{ass:1}, it is possible to rewrite this equation as follows
	\begin{align}
		\pi^{*}_{B} &= \arg \max_{\pi \in \Pi}  \sum_{i=1}^N \sum_{g\in \mathcal{S}} \tau_{\pi(X_i),g}(a(W_i, \mathbb{E}_S[\tau_{\pi(X_i),s_i}(1)]))\Pr[s_i=g],
	\end{align}
	where it is now explicit that the policy rule $\pi^{*}_{B}$ depends directly on the CATE $\tau_{d,g}(\cdot)$. The choice of elements contained in $W_i$ can be determined using, for instance, LASSO-based regularization procedures or theoretical insights from the relevant literature.
	\par
	In practice, this results in a two-step estimation procedure where the first step uses observed adherence $A_{i,D_i}$ to estimate all CATEs via the approach described in the previous section. The second step models the observed adherence $A_{i,D_i}$ as a function of $W_i$ and predicted treatment effectiveness under perfect adherence for each treatment option. This last quantity can be estimated for each $i \in [N]$ with
	\begin{align}\label{eq:tau_1}
		\hat{\tau}_{d}(1) := \sum_{g\in \mathcal{S}} \hat{\tau}_{d,g}(1)\Pr[s_i=g],
	\end{align}
	for each $d \in \mathcal{D}$ using the estimated CATEs $\hat{\tau}_{d,g}(\cdot)$ from the first step and a subjective prior over $s_i$. Introducing predicted treatment effectiveness in the model allows adherence to depend on the patients' or the physicians' anticipations regarding treatment effectiveness, which is a natural choice if adherence is assumed to be \textit{unconditionally} endogenous.\footnote{As opposed to being conditionally endogenous, which would violate Assumptions \ref{ass:2}-\ref{ass:3}.} Indeed, a poor response to treatment is likely to create negative anticipations over time, which will entail lower adherence values, hence leading to a positive, endogenous correlation between observed adherence and potential outcomes. Predicted treatment effectiveness thus acts as a proxy for anticipated treatment effectiveness and should not be omitted when modeling adherence.
	\par
	Given that observed adherence is typically a semi-continuous variable whose distribution is bounded between 0 and 1 with a probability mass at 1 exactly, I recommend modeling adherence using a two-part model \citep{leung_choice_1996}.
	The first part models whether or not the $i^{th}$ individual completely adheres to treatment, whereas the second part models the degree of adherence in case the $i^{th}$ individual does not completely adhere to treatment. Once a proper adherence model $\hat{a}(W_i,\hat{\tau}_{d}(1))$ is estimated, predicted adherence values can be plugged into the outcome model in order to obtain the Bayesian version of the estimated policy rule
	\begin{align}\label{eq:9}
		\hat{\pi}^{*}_{B} = \arg \max_{\pi \in \Pi}  \sum_{i=1}^N \sum_{g\in \mathcal{S}} \hat{\tau}_{\pi(X_i),g}(\hat{a}(W_i,\hat{\tau}_{\pi(X_i)}(1))) \Pr[s_i=g],
	\end{align}
	where $\hat{\tau}_{d,g}(\cdot)$ is defined as in Corollary \ref{cor:1}, and where $\hat{\tau}_{d}(1)$ is defined as in Eq. (\ref{eq:tau_1}). For simplicity, the adherence model is assumed to be independent of the state variable. This can be relaxed via the estimation of within-group adherence models provided sufficient variation in observed adherence across groups.
	\subsection{Identifying a Feasible Policy Rule}\label{sec:33}
	Under Assumptions \ref{ass:1} and \ref{ass:3}, and no global constraint over the class of feasible policies $\Pi$, the infeasible (and globally optimal) policy rule is the one we would obtain if $s_i$ were known for all observations. In this case, we can define the estimated version of the infeasible policy rule for observation $i$ as follows
	\begin{align*}
		\hat{\pi}_i^{I}(W_i) &:= \arg \max_{d \in \mathcal{D}}
		\hat{\tau}_{d, s_i}(\hat{a}(W_i,\hat{\tau}_{d}(1))),
	\end{align*}
	where I now denote the estimated policy rule for the $i^{th}$ observation as a function of $W_i$, and where $\hat{\tau}_{d}(1)$ is defined as above. If predicted adherence does not significantly differ across treatment options in the sampled population, then optimal treatment choices depend only on the value of $s_i$ for all $i \in [N]$.
	\par
	Since $s_i$ is unobserved, replacing $s_i$ by $\hat{z}_i(\hat{\boldsymbol{\mu}},\hat{\Sigma})$ for each observation in the sample leads to an efficient policy rule as a result of Theorem \ref{th:1}, where $(\hat{\boldsymbol{\mu}},\hat{\Sigma})$ correspond to the global minimizers of the TSMD. Because $\hat{z}_i(\hat{\boldsymbol{\mu}},\hat{\Sigma})$ is a consistent estimator of $s_i$ for any $i \in [N]$, we can treat $\hat{z}_i(\hat{\boldsymbol{\mu}},\hat{\Sigma})$ as the true group membership value for every observation in the sample. However, replacing $s_i$ by $\hat{z}_i(\hat{\boldsymbol{\mu}},\hat{\Sigma})$ typically requires the use of a large number of covariates to ensure that the number of misclassified observations is close to zero in finite samples. This is why I still call this policy rule ``infeasible" given that it is based on too large an amount of information to be practical in most clinical contexts.
	\par
	I define the feasible policy rule as the rule that optimally predicts $\hat{z}_i(\hat{\boldsymbol{\mu}},\hat{\Sigma})$ using only a subset $V \in \mathcal{V}_K \subset \mathcal{X}$, where $\mathcal{V}_K$ is the space of all subsets of $\mathcal{X}$ such that $|V| = K < |X|$ with $|A|$ denoting the size of the $A$, and with $K$ being a small positive discrete number (often less than 10). If we consider $\mathcal{H}$ as the class of all real-valued functions $h: \mathcal{V}_K \to \mathcal{S}$, then we can define the following \textit{misclassification rate}
	\begin{align}\label{eq:misc}
		\text{Mis}(h(\cdot),V) := N^{-1}\sum_{i=1}^N \mathbbm{1}[ \hat{z}_i(\hat{\boldsymbol{\mu}},\hat{\Sigma}) \ne h(V_i)],
	\end{align}
	where $V = (V_1,...,V_i,...,V_N)^{\top}$, with $V_i$ corresponding to the $K$-sized vector of selected covariates for individual $i$. It is then possible to define the optimal function $h^*(\cdot) \in \mathcal{H}$ with optimal subset $V^*$ as follows
	\begin{align}\label{eq:13}
		h^*(V^*) := \arg \min_{h \in \mathcal{H}, V \in \mathcal{V}_K} N^{-1}\sum_{i=1}^N \mathbbm{1}[ \hat{z}_i(\hat{\boldsymbol{\mu}},\hat{\Sigma}) \ne h(V_i)].
	\end{align}
	\par
	Combining this definition with the definition of $\hat{\pi}^{I}_i$ gives us the estimated feasible policy rule for the $i^{th}$ observation
	\begin{align*}
		\hat{\pi}^{F}_{C,i}(W_i,V^*_i) := \arg \max_{d \in \mathcal{D}} \hat{\tau}_{d,h^*(V_i^*)}(\hat{a}(W_i,\hat{\tau}_{d}(1))),
	\end{align*}
	where the subscript $C$ refers to ``Certainty", as in Section \ref{sec:22}. It is also possible to apply the Bayesian decision criterion to obtain the Bayesian version of this feasible policy rule for observation $i$, which is written as follows
	\begin{align}\label{eq:11}
		\hat{\pi}^{F}_{B,i}(W_i,V^*_i) := \arg \max_{d \in \mathcal{D}} \sum_{g \in \mathcal{S}} \hat{\tau}_{d,g}(\hat{a}(W_i,\hat{\tau}_{d}(1))) \Pr[\hat{z}_i(\hat{\boldsymbol{\mu}},\hat{\Sigma}) = g|V^*_i=v],
	\end{align}
	where $\Pr[\hat{z}_i(\hat{\boldsymbol{\mu}},\hat{\Sigma}) = g|V^*_i=v]$ is a data-dependent prior that defines the probability of belonging to group $g$ conditional on the value taken by $V^*_i$ for any pair $(i,g) \in [N]\times \mathcal{S}$. This last policy rule has the benefit of accounting for the uncertainty coming from the fact that only the subset of characteristics $V^*$ is used to identify group membership. Even though the function $h^*(\cdot)$ and the subset $V^*$ are optimal in terms of the minimization of the misclassification rate, they will yield suboptimal results compared to using $\hat{z}_i(\hat{\boldsymbol{\mu}},\hat{\Sigma})$ in the policy rule, unless the true group memberships are entirely determined by the subset $V^*$ and there exists a function $h^*(\cdot)$ such that the misclassification rate $\text{Mis}(h^*(V^*),V^*)$ is equal to zero in general.
	\par
	Finally, it is possible to (re)introduce global constraints on the class of feasible policies $\Pi$ by summing $\hat{\pi}^{F}_{B,i}$ over all observations as follows
	\begin{align}\label{eq:12}
		\hat{\pi}^{F}_{B}(W,V^*) := \arg \max_{d_1,...,d_N \in \Pi} \sum_{i=1}^N \sum_{g \in \mathcal{S}} \hat{\tau}_{d_i,g}(\hat{a}(W_i,\hat{\tau}_{d_i}(1))) \Pr[\hat{z}_i(\hat{\boldsymbol{\mu}},\hat{\Sigma}) = g|V^*_i=v],
	\end{align}
	where $d_i$ is the selected treatment option for observation $i$, and where $\Pi \subset \mathcal{D}^N$ such that all global constraints are satisfied. Note that Eq. (\ref{eq:11}) and Eq. (\ref{eq:12}) can be easily modified to include other decision criteria (i.e., maximin or minimax-regret).
	\par
	In practice, the most difficult part is to determine which function $h(\cdot) \in \mathcal{H}$ and subset $V \subset X$ are such that the misclassification rate Mis$(h,V)$ is minimized. To do so, several authors have relied on decision trees given their flexibility and interpretability \citep{sverdrup_policytree_2020,athey_policy_2021,zhou_offline_2023}. Other methods, such as simple linear regression or multinomial logistic regression, can also be used, but they tend to be less flexible than decision trees and are not necessarily easy to use for routine decision-making processes. 
	\section{Empirical Application}\label{sec:4}
	This section is divided into four subsections. The first subsection briefly describes the cohort used for the empirical application. The second describes the health condition shared by the cohort's participants and the different treatment options currently used to treat this condition. The third subsection briefly describes the clinical guidelines concerning HepC treatment in both mono- and HIV/HCV co-infected patients. Finally, the fourth subsection then describes the details of the employed PL strategy.
	\subsection{The Canadian Co-Infection Cohort}\label{sec:41}
	The Canadian Co-Infection Cohort (CCC) prospectively follows around 2,200 patients with documented HIV infection and chronic Hepatitis C (HepC) infection or evidence of HCV exposure across 6 Canadian provinces between April 2003 and December 2023.\footnote{A chronic HepC infection, as opposed to an acute HepC infection, refers to an infection that does not \textit{spontaneously clear} 6 months after a first positive HCV blood test.} Inclusion criteria into the cohort are very broad~: $>$ 16 years old, documented HIV seropositive infection, documented HepC infection or evidence of exposure, and capacity to provide informed consent. All eligible patients were approached to participate in order to avoid selection bias \citep{klein_cohort_2010}. A total of 19 clinical sites are participating in the CCC as of the end of 2023. Figure \ref{fig:1} shows the geographical distribution of all participants enrolled in the CCC between 2003 and 2023.
	\begin{figure}[t]
		\centering
		\includegraphics[width=12cm]{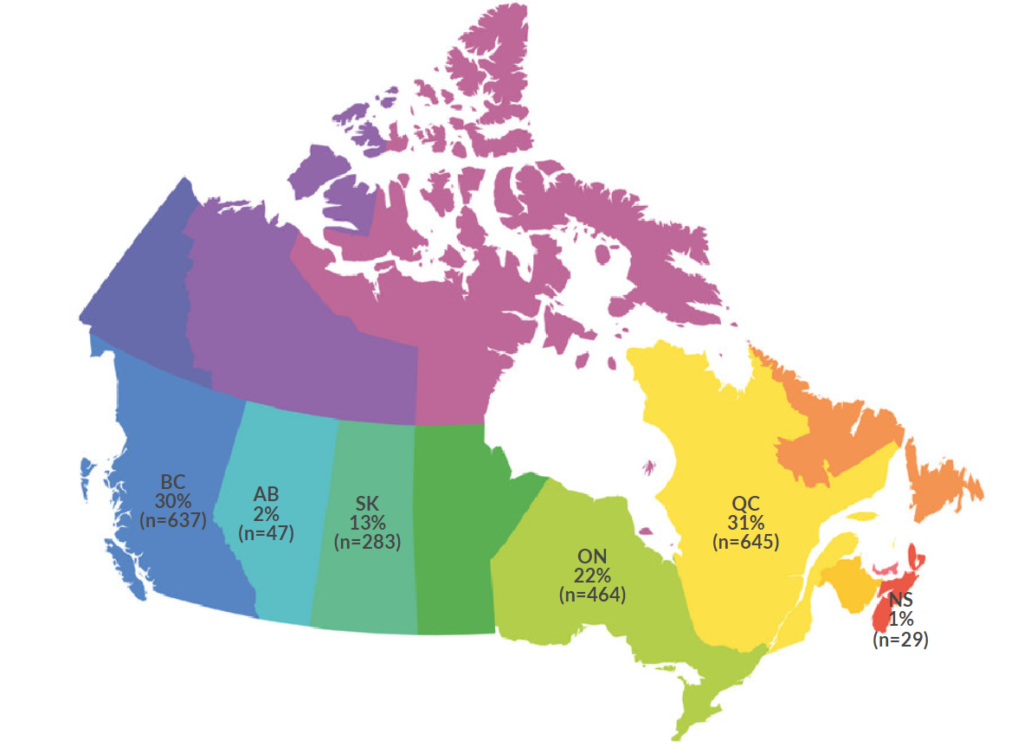}
		\caption{Geographical distribution of the participants enrolled in the Canadian Co-infection Cohort, 2003-2023}
		\label{fig:1}
	\end{figure}
	\par
	The primary objective of the CCC is to evaluate the effect of highly active antiretroviral therapy (HAART) on the progression of end-stage liver disease (ESLD) in HIV/HCV co-infected individuals, while the secondary objective is to evaluate the ``role of HCV treatment in the evolution of liver disease with a particular emphasis on evaluating access to treatment" \citep{klein_cohort_2010}. To meet these objectives, socio-demographic, medical, behavioral, and patient-reported quality of life information is collected for all participants through questionnaires at baseline and during follow-up visits. Follow-up visits are planned every 6 months to update information on risk behaviors, medical treatments, and diagnoses. Blood tests are also performed at each follow-up visit, which includes liver profile (i.e., concentration of various liver enzymes), blood cell count, hepatitis A and B serology, plasma HIV and HCV RNA, and HCV genotype. Causes of death, withdrawals from the study, and losses to follow-up (defined as missing more than 2 consecutive follow-up visits) are also systematically recorded.
	\par
	It is estimated that the CCC has enrolled 23\% of the total HIV/HCV co-infected population in care in Canada \citep{saeed_frequent_2024}. The CCC is representative of the co-infected population engaged in clinical care \citep{young_rate_2023}. Figure \ref{fig:2} presents the socioeconomic characteristics of CCC participants at enrollment. Note the large proportion of Indigenous people and low-income individuals (less than CAN\$1,500 per month) compared to the overall Canadian population.\footnote{According to Statistics Canada's Canadian Income Survey, around 15\% of the Canadian population lives on less than CAN\$1,500 per month \citep{statistics_canada_table_2024}. The 2021 Census data also revealed that Indigenous people comprise about 5\% of the overall Canadian population \citep{government_of_canada_canadas_2023}.} Because HCV and HIV are blood-borne diseases that spread mostly through injection drug use, low-income Indigenous people are at increased risk of being co-infected. In Canada, evidence suggests that Indigenous peoples account for 70\% to 80\% of new HepC infections among individuals who inject drugs \citep{saeed_frequent_2024}.
	\begin{figure}[t]
		\centering
		\includegraphics[width=12cm]{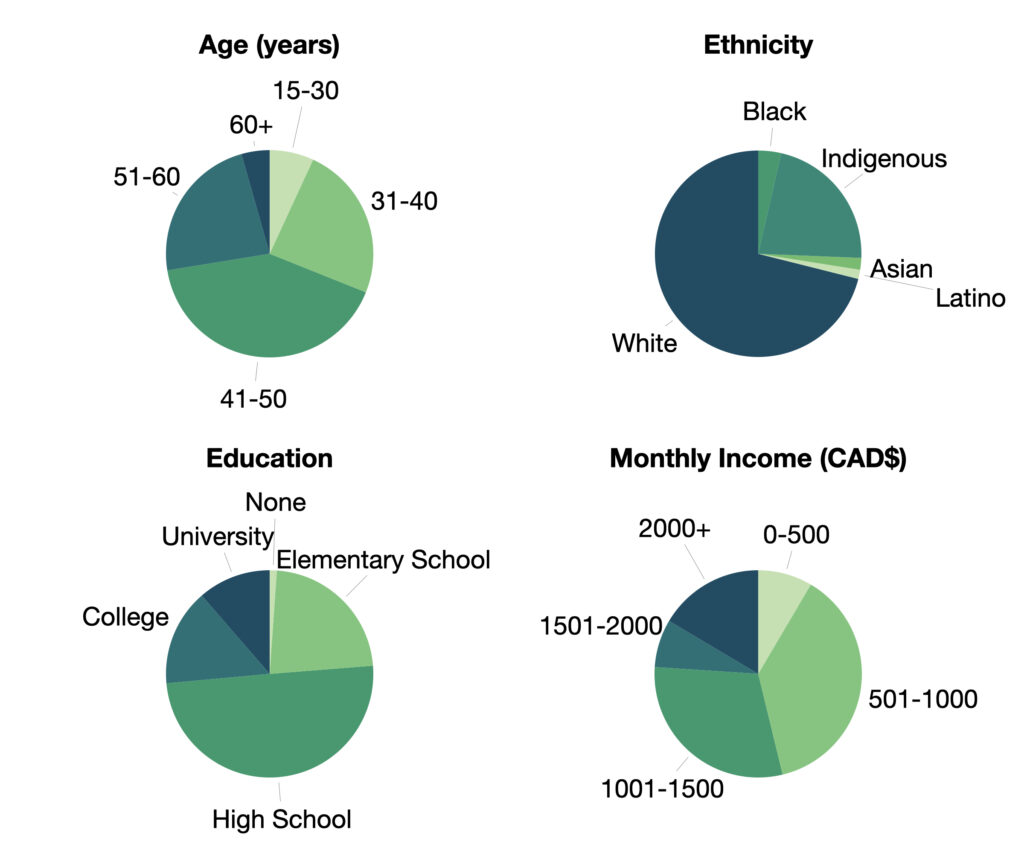}
		\caption{Socioeconomic characteristics at enrollment, Canadian Co-Infection Cohort, 2003-2023}
		\label{fig:2}
	\end{figure}
	\subsection{HIV, HCV, and Direct-Acting Antiviral Agents}\label{sec:42}
	Around 6 million people are believed to be co-infected with HIV and HCV worldwide \citep{soriano_changing_2013}. Since the approval of the first antiretroviral therapy (ART) in 1987, several other antiretroviral drugs have been approved and used to treat HIV infection. This has led to a dramatic reduction in HIV morbidity and mortality in the developed world \citep{tseng_evolution_2015}.
	\par
	HepC is a viral infection that causes liver inflammation, which can lead to lifelong illnesses such as liver cirrhosis and cancer. In Canada, it is estimated that more than 200,000 persons were living with chronic Hepatitis C infection as of the end of 2021, with an incidence of nearly 8,000 new cases per year. HepC-related mortality has been decreasing steadily since the approval of the new direct-acting antiviral (DAA) agents by Health Canada in 2014 \citep{public_health_agency_of_canada_canadas_2024}. Although DAAs have been covered by public health insurance plans in all Canadian provinces since 2018, recent evidence suggests that treatment initiation rates are too low to reach the HCV reduction targets in Canada by 2030 \citep{young_rate_2023, feld_timing_2022, tian_feasibility_2024}.\footnote{The Canadian HCV elimination targets aim to reduce chronic hepatitis C infections by 80\% and related deaths by 65\% between 2016 and 2030.} Systemic barriers such as incarceration and discrimination toward Indigenous people are believed to represent some of the main causes for current low DAA uptake despite universal health insurance coverage in Canada \citep{jiang_hcv_2023, hosein_increased_2024}. Therefore, extensive scale-up in screening, treatment, and harm-reduction strategies may be necessary to reach the Canadian targets \citep{tian_feasibility_2024}. Note that, unlike Hepatitis B, there is no vaccine against HepC, which implies that individuals can be reinfected after being completely cleared of HepC viral load.
	\par
	In HCV mono-infected patients, clinical trials have shown more than 90\% efficacy of DAAs in achieving SVR 12 weeks after treatment, regardless of HCV genotype \citep{esteban_efficacy_2018,laurain_safety_2019}. As mentioned in the introduction, reaching SVR is equivalent to being cured from the disease and is the health outcome of interest in most studies on the efficacy of HCV treatments. Similar results have been obtained in trials with both HCV mono-infected and HIV/HCV co-infected patients \citep{ford_expanding_2012, rockstroh_efficacy_2015}. However, the generalizability of those results outside the clinical trial setting is not obvious due to the very restrictive eligibility criteria used in most DAA trials. Specifically, uptake of certain antiretroviral therapy regimens, active illicit drug use, low CD4 cell count, and detectable HIV RNA are the reasons that lead to the exclusion of co-infected patients from most DAA trials \citep{saeed_how_2016}. In parallel, observational studies seeking to estimate the effectiveness of DAAs in co-infected patients obtain SVR rates that are between 10 and 17 percentage points lower for co-infected persons compared to HCV mono-infected persons \citep{alam_real-world_2017, wilton_real-world_2020}.
	\par
	The lower effectiveness of DAAs in co-infected individuals can be attributed to different factors. A recent study looking at 4098 co-infected individuals across 6 different countries found that unsuccessful DAA treatment was more likely among individuals with evidence of higher immunodeficiency and among those who reported recently injecting drugs \citep{harney_unsuccessful_2025}. Other evidence suggests that the non-achievement of SVR for persons who initiate DAA therapies is mainly driven by nonvirological factors such as treatment discontinuation, nonattendance for viral load testing after treatment initiation, or death \citep{darvishian_loss_2020, wilton_real-world_2020}. Other elements, such as drug-drug interactions (DDIs) with current and/or past ART, high degree of hepatic fibrosis, and HCV genotype, might also affect DAA effectiveness in co-infected individuals \citep{chen_treatment_2015, kwo_regimens_2015, martel-laferriere_prise_2022}.
	\subsection{DAA Regimens and Clinical Guidelines}\label{sec:43}
	Three main DAAs are currently used to treat HepC infection in co-infected individuals, with the following commercial names~: Epclusa (EPC), Mavyret (MAV), and Harvoni (HAR). Both the Epclusa and Harvoni treatment regimens consist of a single pill to be taken daily for 12 weeks. The Mavyret treatment regimen consists of three pills to be taken daily for 8 weeks. The American Association for the Study of Liver Diseases (AASLD), which provides guidelines and practical recommendations to health practitioners in the United States regarding HCV treatment, indicates that Harvoni features fewer suspected DDIs with ARTs than Epclusa, which in turn features fewer suspected DDIs with ARTs than Mavyret \citep{american_association_for_the_study_of_liver_diseases_patients_2022}. This may be part of the reason why the price negotiated by pan-Canadian Pharmaceutical Alliance for Harvoni ($\approx$ CAN\$798/pill) is higher than Epclusa's ($\approx$ CAN\$714/pill), which is in turn higher than Mavyret's ($\approx$ CAN\$238/pill).
	\par
	For patients who do not have any one of the following~: a current Hepatitis B infection, an ongoing pregnancy, a current hepatocellular carcinoma (HCC) diagnosis, a history of liver transplant, and who were never treated with DAAs or older interferon-based treatment regimens, Canadian guidelines recommend prescribing either Epclusa or Mavyret for their respective length, regardless of the HCV genotype \citep{martel-laferriere_prise_2022-1}. On the other hand, the AASLD guidelines recommend extending Mavyret's regimen length to 12 weeks in case of a co-infection with HIV. Unlike the AASLD guidelines, Canadian guidelines do not distinguish between HCV mono-infected and HIV/HCV co-infected individuals \citep{american_association_for_the_study_of_liver_diseases_patients_2022}. Given the general lower SVR rates for HIV/HCV co-infected patients, the absence of specific Canadian guidelines for co-infected patients might lead to suboptimal health outcomes within the Canadian co-infected population. Finally, the AASLD recommends using the Harvoni treatment regimen only against certain HCV genotypes (1a, 1b, 4, 5, and 6), whereas Canadian guidelines recommend using Harvoni only in patients where interferon-based treatment regimens have not been successful in the past \citep{martel-laferriere_prise_2022-1}.
	\par
	Since there is no vaccine for HepC, it is not possible to prevent the spread of the virus through mass vaccination campaigns. Moreover, it is estimated that 20\% to 30\% of Canadians infected with HIV are co-infected with HCV. This percentage is likely to rise in the future due to higher reinfection rates in co-infected individuals, especially among those who inject drugs \citep{young_rate_2023,hosein_increased_2024}. Therefore, it is essential to treat every co-infected patient with the most effective option in order to maintain the HCV transmission rate as low as possible in the future. Given that the marginal effect of each DAA regimen on the probability of reaching SVR in co-infected patients is uncertain, the proposed estimation strategy can be used to derive a policy rule that will identify which DAA among the three listed above is the most effective for any given patient with observed characteristics $X_i$ subject to a set of predetermined (cost) constraints. The next section details such a strategy.
	\subsection{Estimation Strategy}\label{sec:44}
	\subsubsection{Sampled Population}\label{sec:441}
	The sampled population consists of all HCV-infected CCC participants (regardless of the enrollment date) who were still alive and not lost to follow-up before January 1st, 2014, and who were prescribed either Epclusa, Mavyret, or Harvoni after January 1st, 2014, or who were never prescribed any treatment against HCV between January 1st, 2010, and September 1st, 2024. These never-treated individuals will act as the set of ``untreated individuals" to which every treatment option will be compared, and allows for the fact that any individual infected with HepC can spontaneously clear the virus. The outcome $Y_i$ corresponds to whether an individual has reached SVR between January 1st, 2014, and September 1st, 2024. Individuals who were treated multiple times with DAAs or with older interferon-based treatment regimens and at least one DAA regimen between January 1st, 2014, and September 1st, 2024, are not excluded from the sample. This implies that the same treated individual can be observed multiple times in the sampled population.\footnote{Because this situation concerns only 39 patients out of 1,187, all within-patient outcome values were assumed to be independent from each other after conditioning on a set of dummy variables indicating all previously assigned treatment options. Observations with missing or unknown outcome values were also discarded from the analysis (less than 20 observations out of 745 remaining treated observations).} SVR outcomes for both treated and untreated patients are provided by the participating centers and are confirmed via a careful examination of the laboratory test results. Additional details concerning the identification of spontaneous clearance in untreated patients and SVR outcomes in treated patients are described in Appendix \ref{app:D}.
	\subsubsection{Classification Procedure and Outcome Models}\label{sec:442}
	I use a total of 146 pre-treatment variables to perform the classification procedure, including standard individual characteristics (e.g., age, gender, weight, etc.), socioeconomic indicators (e.g., income brackets, educational attainment, etc.), medical diagnoses, past and current ART medications, previous HCV treatment regimens (and the respective SVR outcome), and other health-related information. Detailed information about the construction of the set of variables entering the classification procedure is provided in Appendix \ref{app:D}.
	\par
	I then introduce this set of variables into the weighted K-means algorithm to minimize the TSMD, as shown in Definition \ref{def:1}$(iii)$ and Algorithm \ref{alg:Kmeans}. Because the TSMD is a non-convex function of the estimated parameters and group memberships, I use a thousand initial random assignment values (i.e., $\mathbf{z}^{(0)}$) to search for the global minimum of the TSMD for each combination of hyperparameters in a large grid of realistic hyperparameter values. To limit the influence of missing values on the classification procedure and given the large number of missing values in the CCC data, I impute missing covariates' values with the group mean values at each iteration of the algorithm, which allows me not to drop any observations during the classification procedure.\footnote{One caveat of such an imputation scheme is that it underestimates (co)variance parameters, which is however compensated through the enforcement of the minimum eigenvalue $\lambda_{min}$. Direct adjustment to appropriately ``inflate" the estimated (co)variance parameters led to several convergence issues, which is why I did not consider this approach in the empirical application. Note also that the number of missing values varies across covariates, from a few missing values to several hundreds.}
	\par
	The weighted K-means algorithm is governed by two hyperparameters~: the total number of groups $S$, and the minimum eigenvalue of the variance-covariance matrix $\Sigma_g$ for all $g \in \mathcal{S}$, $\lambda_{min}$. As mentioned earlier, enforcing a given value of $\lambda_{min} > 0$ guarantees that every $\Sigma_g$ remains positive-definite at each iteration of the algorithm, which is necessary for the algorithm to converge. Increasing the value of $\lambda_{min}$ reduces the weight associated with variables that feature a relatively small degree of within-group variation. Since some groups might be defined by a specific value for a single or a set of covariates, it is important not to automatically discard variables that show little or no within-group variation. Enforcement of positive-definiteness through a common minimum eigenvalue allows me to do so.
	\par
	I model the SVR outcomes $Y_i$ via a linear probability model (LPM) within each group identified by the weighted K-means algorithm. LPMs are convenient in this context given that~: 1) they are more likely to meet Assumption \ref{ass:1} than nonlinear binary choice models; 2) they easily deal with imperfect adherence since $\hat{\tau}_{d,g}(a) \equiv \hat{\tau}_{d,g}(1)\times a$ for any $a \in [0,1]$, where $\hat{\tau}_{d,g}(1)$ is estimated using observed adherence $A_{i,D_i}$ for any pair $(d,g) \in \mathcal{D} \times \mathcal{S}$; and 3) they can identify treatment effects even if all treated individuals with non-null adherence are associated with a non-null outcome value in a given group.\footnote{This is not the case for nonlinear binary choice models where identification of treatment effects requires the presence of at least one null (i.e., zero) outcome for each treatment option in a given group.} Moreover, information criteria (e.g., AIC and BIC) indicate better in-sample goodness-of-fit with LPMs compared to Probit binary choice models in most cases (results not shown). Observed adherence $A_{i,d} \in [0,1]$ was computed for each treated individual using the recorded DAA start and end dates and the standard treatment length for each treatment option (8 or 12 weeks). Complete adherence to the assigned treatment was also confirmed by the participating centers for each patient.
	\par
	I use 5-fold cross-validation with (at least) four different data splittings to validate the choice of the total number of groups $S$, the choice of the minimum eigenvalue $\lambda_{min}$, and the specification of each within-group outcome model. Both the weighted K-means algorithm and the estimation of all outcome models were performed using only the ``training set" of observations for each fold and each data splitting. This leads to the estimation of OOS prediction errors over an OOS size (at least) four times the size of the sampled population. I use the area under the receiver operating characteristic (AUC-ROC) curve and the maximum sum of sensitivity and specificity obtained by the OOS predicted outcome values to assess the OOS accuracy of each combination of hyperparameter values and choice of outcome models. To facilitate the cross-validation procedure, I use the optimal assignment values obtained with the entire sample, $\hat{\mathbf{z}}(\hat{\boldsymbol{\mu}},\hat{\Sigma})$, as the initial assignment values for each training set. Finally, inference on the selected within-group CATEs is performed by increasing the number of total splits to 200 (which leads to 1,000 values for all CATEs) and by using quantile aggregation to generate median values and confidence intervals with 95\% coverage, as shown in \cite{chernozhukov_fisher-schultz_2025}.
	\subsubsection{Adherence Model}\label{sec:443}
	As mentioned in Section \ref{sec:32}, I model adherence with a two-part model where the first part models the probability of fully adhering to treatment, whereas the second part models the continuous adherence value if the observed adherence is below unity. For simplicity, both parts are modeled using a linear model with a relatively small set of not fully overlapping covariates for each part of the model. This namely includes~: predicted treatment effectiveness under full adherence ($\hat{\tau}_d(1)$; first part only), clinical center fixed effects (first part only), the degree of hepatic cirrhosis (first part only), recent injection drug use (both parts), ethnicity indicators (both parts), and a fixed address indicator (both parts). The final choice of covariates in each part of the model is based on statistical significance and theoretical importance of the variables. To compute $\hat{\tau}_d(1)$, I use a subjective prior distribution where the group that minimizes the squared Mahalanobis distance $d_i^{2M}(\hat{\boldsymbol{\mu}}_g,\hat{\Sigma}_g)$ is assigned a weight of one for all $i \in [N]$. This choice of prior distribution is based on the idea that the estimated group memberships are converging to the true ones as $p$ tends to infinity according to Theorem \ref{th:1}.
	\par
	The number of covariates included in the second part is much lower than in the first part since the majority of patients perfectly adhered to their prescribed treatment. 
	I then incorporate predicted adherence values in each outcome model to yield the estimated CATEs with the predicted adherence $\hat{\tau}_{d,g}(\hat{a}(W_i,\hat{\tau}_{d}(1)))$ for every observation in the sampled population and every treatment option. However, I set $\hat{a}(W_i,\hat{\tau}_{d}(1)) = A_{i,d}$ if $d=D_i$ such that the adherence model features no prediction error when the selected treatment option corresponds to the observationally assigned one.
	\subsubsection{Feasible Policy Rule}\label{sec:444}
	Once I obtain the set of estimated group memberships $\hat{\mathbf{z}}(\hat{\boldsymbol{\mu}},\hat{\Sigma})$ from the weighted K-means algorithm, I then use the \textit{rpart} command from the \textbf{rpart} package in R to predict $\hat{z}_i(\hat{\boldsymbol{\mu}},\hat{\Sigma})$ for each $i \in [N]$ in order to derive a feasible policy rule. The \textit{rpart} command allows the econometrician to fix the maximum depth of the decision tree and the degree of complexity of the tree. For simplicity, I set the degree of complexity of the decision tree to the largest degree (i.e., less complex) such that each group membership is predicted by at least one terminal leaf. Missing/unknown values in the matrix of covariates $X$ are handled by surrogate rules following the recommendation of \cite{breiman_classification_2017}. The \textit{rpart} command also directly provides the conditional probability $\text{Pr}[\hat{z}_i(\hat{\boldsymbol{\mu}},\hat{\Sigma}) =g |L_i = k]$, where $L_i$ is the terminal leaf associated with the $i^{th}$ observation, and where $k \in \{1,2...,|L|\}$ with $|L|$ corresponding to the number of terminal leaves in the tree. Those probabilities are used to compute the Bayesian version of the feasible policy rule, as described in Section \ref{sec:33}.
	\subsubsection{Cost-Effectiveness Analysis}\label{sec:445}
	Cost-effectiveness analyses (CEAs) are then realized for both perfect and imperfect (i.e., predicted) adherence to treatment and for each ``type" of estimated group memberships, which leads to four different CEAs. The goal of each CEA is to show how the optimal treatment allocation changes in the sampled population according to the policymaker's WTP. WTP is expressed in terms of costs (\$) per additional percentage point of probability of reaching SVR in each CEA. I define the incremental cost-effectiveness ratio (ICER) of treatment option $d$ for any individual $i$ with adherence level $a_i \in [0,1]$ as follows
	\begin{align*}
		\text{ICER}_{i,d}(a_i) := \frac{c_d}{\hat{\tau}_{d,\hat{s}_i}(a_i)},
	\end{align*}
	where $c_d$ refers to the cost of prescribing treatment option $d$ to any individual in the sample, $\hat{s}_i = \{\hat{z}_i(\hat{\boldsymbol{\mu}},\hat{\Sigma}), \hat{h}(V^*_i)\}$ is an estimator of the true group membership $s_i$, and where the denominator is expressed in percentage points.
	\par
	Optimal treatment allocation is then obtained by selecting the treatment option that maximizes the chance of reaching SVR for every individual in the sample, and for which the corresponding ICER value remains below the policymaker's WTP. Formally, this can be represented by the following constrained maximization program
	\begin{align}\label{eq:17}
		\hat{\pi}_C^*(\mathbf{a},\omega) &=  \arg  \max_{d_1,...,d_N\in \mathcal{D}^N}  \sum_{i=1}^N\hat{\tau}_{d_i,\hat{s}_i}(a_i),\\
		&\text{s.t.} \ \ \text{ICER}_{i,d_i}(a_i) \le \omega, \ \forall \ i \in [N], \nonumber
	\end{align}
	where the feasible policy rule is now treated as a function of the adherence vector $\mathbf{a} = (a_1,...,a_i,...,a_N)^{\top}$, and the policymaker's WTP, denoted by $\omega$, and where the $C$ subscript refers to the fact that $\hat{s}_i$ is treated as the ``true" group membership in the above maximization program. It is then possible to illustrate how the in-sample optimal treatment allocation changes as a function of adherence and/or the policymaker's WTP. It is also possible to use the predicted adherence $\hat{a}_j(\cdot,\cdot)$ to identify the optimal treatment choice for a new patient $j$ and her corresponding set of covariates $V^*_j$ for a given value of $\omega$. It is also straightforward to adapt such a CEA to any decision criterion by modifying the maximand in Eq. (\ref{eq:17}) accordingly, as shown in Section \ref{sec:33}.
	\par
	Finally, I also compare the total costs and predicted outcome values provided by any treatment allocation $\hat{\pi}(\mathbf{a},\omega)$ to a ``benchmark" allocation. This comparison is performed through the computation of an aggregate ICER value, which is allowed to vary with adherence, WTP, and the choice of benchmark allocation. This aggregate ICER value is defined as follows
	\begin{align}\label{eq:171}
		\text{ICER}_{A}(\mathbf{a},\mathbf{\tilde{a}},\mathbf{b},\omega) := \frac{ \sum_{i \in \mathcal{B}} c_{\hat{\pi}_i(\mathbf{a},\omega)} - c_{b_i}}{\sum_{i \in \mathcal{B}} \hat{\tau}_{\hat{\pi}_i(\mathbf{a},\omega), \hat{z}_i(\hat{\boldsymbol{\mu}},\hat{\Sigma})}(a_i) -  \hat{\tau}_{b_i,\hat{z}_i(\hat{\boldsymbol{\mu}},\hat{\Sigma})}(\tilde{a}_i)},
	\end{align}
	where $\hat{\pi}_i(\mathbf{a},\omega)$ stands as the treatment choice for individual $i$ at WTP $\omega$ and adherence vector $\mathbf{a}$, $\mathbf{b} = \{b_i\}_{i \in \mathcal{B}}$ is the set of benchmark treatment choices for all patients in the subset $\mathcal{B} \subseteq [N]$ with $|\mathcal{B}| = N_b$, and where $\mathbf{\tilde{a}} = (\tilde{a}_1,...,\tilde{a}_i,...,\tilde{a}_N)^{\top}$ corresponds to the set of adherence values used for benchmarking (which may be equal to $\mathbf{a}$). Note that $\hat{z}_i(\hat{\boldsymbol{\mu}},\hat{\Sigma})$ is used as the true group membership for any $i \in [N]$. The value of $\text{ICER}_A(\mathbf{a},\mathbf{\tilde{a}},\mathbf{b},\omega)$ can then be plotted against WTP and/or adherence values for a given benchmark allocation $\mathbf{b}$ to identify the situations in which the treatment allocation $\hat{\pi}(\mathbf{a},\omega)$ dominates (or is dominated) by the benchmark at the aggregate level. Note that if $\text{ICER}_A(\mathbf{a},\mathbf{\tilde{a}},\mathbf{b},\omega)$ is positive, this implies that there is a trade-off between total health benefits and total costs, and neither $\hat{\pi}(\mathbf{a},\omega)$ nor the benchmark allocation dominates the other one. If $\text{ICER}_A^*(\mathbf{a},\mathbf{\tilde{a}},\mathbf{b},\omega)$ is negative, then one of the two allocations dominate the other at the aggregate level, which \textit{does not} imply that one of the two allocations is pareto optimal. However, Pareto optimality can be achieved by enforcing an additional positivity constraint such that every term in the sum of the denominator of Eq. (\ref{eq:171}) is positive for any $(\mathbf{a},\mathbf{\tilde{a}},\mathbf{b},\omega) \in [0,1]^{2N} \times \mathcal{D}^{N_b} \times \mathbb{R}_{>0}$.
	\section{Results}\label{sec:5}
	\subsection{Descriptive Statistics}\label{sec:51}
	\begin{table}[t!]
		\begin{center}
			\caption{\centering Descriptive Statistics by Treatment Regimen in the Sampled Population \label{tab:1}}
			\begin{tabular}{c c c c c c}
				\toprule\toprule
				\multirow{2}{*}{Variables} & Never-treated & Harvoni  & Epclusa  & Mavyret  & Full Sample \\ 
				& (1) & (2) & (3) & (4) & (5)\\
				\midrule
				\multirow{2}{*}{Male (1)}  &  0.651   & 0.804  & 0.684  & 0.618  & 0.700\\
				&(0.48)&(0.40) & (0.47)&(0.49) & (0.46)\\[1mm]
				\multirow{2}{*}{Age (2)} & 44.9  & 51.3   & 48.1 & 45.2 & 47.5\\
				&(9.3)&(8.3)&(10.2) &(10.6) & (9.7) \\[1mm]
				\multirow{2}{*}{Liver Stiffness (3)} & 9.6 & 10.8  & 8.4   & 7.2  &9.5\\
				& (10.7)&(8.2) &(8.3)&(3.8)& (8.8)\\[1mm]
				\multirow{2}{*}{Income Level (4)} & 2.78  & 3.31  &3.12 & 3.10 & 3.04\\
				&(1.41)&(1.76)&(1.55)&(1.76)&(1.58)\\[1mm]
				\midrule
				\multirow{2}{*}{Adherence (5)} &0.000  & 0.978&0.965 &0.985 & 0.974\\ 
				&(0.00)& (0.09)&(0.15)&(0.12)&(0.12)\\[1mm]
				\multirow{2}{*}{SVR (6)} &0.182 &0.952 &0.910 &0.912  & 0.634\\
				&(0.38)&(0.21)&(0.29)&(0.29)&(0.48)\\[1mm]
				\midrule
				Nb. of obs.  & 487   &332 & 345   & 68  & 1,232\\
				(\%)&(0.40)&(0.27)&(0.28)&(0.06)&(1.00)\\
				\bottomrule\bottomrule
			\end{tabular}
		\end{center}
		\small \textbf{Notes}~: All statistics are based on information collected at the moment of treatment, or at the spontaneous clearance date (if it exists) for the never-treated. Standard errors of the corresponding means are shown in parentheses. Liver stiffness is based on FibroScan measures that are realized during the pre-treatment follow-up visit and are expressed in kilopascals (kPa). Monthly income levels are based on the following scale: 1=\$0-500; 2=\$501-1,000; 3=\$1,001-1,500; 4=\$1,501-2,000; 5=\$2,001-2,500; 6=\$2,501-3,000; 7=\$3,001-4,000; 8=\$4,001 or more.
	\end{table}
	Table \ref{tab:1} shows various descriptive statistics by treatment regimen in the sampled population. All the values presented in Table \ref{tab:1} are mean values with their respective standard errors in parentheses. Line (1) indicates that around 80\% of the patients prescribed Harvoni in the sample are men, compared to less than 70\% for all other treatment regimens. Line (2) shows that never-treated patients are younger, on average, than all treated patients in the sample, although this difference is not statistically significant. Line (3) indicates that there exists a slight gradient in terms of liver stiffness across treatment regimen on average, where the average patient prescribed Harvoni has a higher liver stiffness than the one prescribed Epclusa, which has a higher liver stiffness than the one prescribed Mavyret.\footnote{A normal liver stiffness corresponds to a FibroScan measure that lies between 2 and 7 kilopascals (kPa). A FibroScan measure higher than 19 kPa is usually associated with liver cirrhosis.} Line (4) also shows that untreated patients have somewhat lower income levels, which can be explained by financial constraints preventing access to DAA treatments before the advent of public coverage by all Canadian provinces in 2018 (except Newfoundland). Typically, such differences in observed covariates likely indicate the presence of endogenous selection into treatment, but only unconditionally. Under Assumption \ref{ass:3}, conditioning on the unobserved group membership leads to plausible exogenous treatment assignments for every individual in the sample.
	\par
	Lines (5) and (6) of Table \ref{tab:1} show the average adherence level to the corresponding regimen and the relative occurrence of SVR by treatment regimen, respectively. Line (6) indicates that SVR is achieved slightly more frequently under Harvoni compared to the two other DAA regimens, even if patients who are prescribed Harvoni are somewhat older and have worse liver conditions, on average, than the average patient associated with any other treatment option. Line (5) of Table \ref{tab:1} also shows that adherence levels are very high for all treatment regimens. This is explained by the fact that adherence is perfect for more than 90\% of all treated individuals for each treatment regimen, separately. Note that for the Mavyret treatment regimen, only one observation presented imperfect adherence with an adherence value virtually equal to zero (i.e., the patient was prescribed Mavyret but did not initiate the treatment). Note also that the first column of Table \ref{tab:1} indicates a spontaneous clearance rate of 18\% in the never-treated population.
	\subsection{Classification Procedure and Outcome Models}\label{sec:52}
	\begin{table}[t]
	\begin{center}
		\caption{\centering Results of the Classification and Regression Procedures \label{tab:2}}
		\begin{tabular}{c c c   c c  c c c }
			\toprule\toprule
			\multirow{3}{2cm}{Total number of groups, $S$} & \multirow{2}{0.75cm}{$\lambda^*_{min}$}  & \multirow{2}{4cm}{$d^{2TM}(\hat{\mathbf{z}}(\hat{\boldsymbol{\mu}},\hat{\Sigma}),\hat{\boldsymbol{\mu}},\hat{\Sigma})$}  & \multicolumn{2}{ c }{AUC-ROC} && \multicolumn{2}{c}{Max (Spec.+Sens.)} \\
			\cmidrule{4-5} \cmidrule{7-8}
			&&& In-sample & OOS && In-sample & OOS\\
			&(1)& (2) & (3) & (4) && (5) & (6) \\
			\midrule
			1 &-&249,711.0&0.7996&0.7986&&1.7909&1.7906 \\
			2 &1.00&205,078.8&0.8014&0.8000&&1.7906&1.7901 \\
			3 &0.04&-101,715.0&0.8399&0.8303&&1.8134&1.8016 \\
			4 &1.00&190,252.3&0.8177&0.8166&&1.8013&1.8040 \\
			5 &0.70&139,680.8&0.8225&0.8196&&1.8035&1.8048 \\
			6 &1.60&249,711.0&0.8276&0.8213&&1.8048&1.8038 \\
			\bottomrule\bottomrule
		\end{tabular}\\
		\vspace{1.2mm}
		\small \textbf{Notes}~: AUC-ROC = Area under the curve of the receiver operating characteristic curve, Spec. = Specificity, Sens. = Sensitivity, OOS = Out-of-sample.
	\end{center}
	\end{table}
	Table \ref{tab:2} shows the results of the classification and regression procedures both in- and out-of-sample, with OOS referring to the results obtained with the 5-fold cross-validation procedure. The first line of Table \ref{tab:2} presents the results of the regression procedure when there is only one group, hence without applying the weighted K-means algorithm. The first column of Table \ref{tab:2} shows the value $\lambda^*_{min}$, which corresponds to the minimum eigenvalue $\lambda_{min}$ that is associated with the largest OOS maximum sum of specificity and sensitivity (column (6)).\footnote{Here, I consider that the OOS maximum sum of specificity and sensitivity is better than the OOS area under the ROC curve to identify the general performance of a binary classifier. This is because the former indicator relies exclusively on the optimal point in the ROC curve, which is the point that is closest to the top left corner of the ROC curve (i.e., the ``perfect classifier"). In fact, the area under the ROC curve has often been criticized for considering regions of the ROC curve that are not relevant regarding the predictive accuracy of a binary classifier \citep{halligan_disadvantages_2015,janssens_reflection_2020}.} The second column of Table \ref{tab:2} shows the value of global minimum of the TSMD, $d^{2TM}(\hat{\mathbf{z}}(\hat{\boldsymbol{\mu}},\hat{\Sigma}),\hat{\boldsymbol{\mu}},\hat{\Sigma})$, obtained by the weighted K-means algorithm for each value of $S \in \{2,...,6\}$,  where $\hat{\mathbf{z}}(\hat{\boldsymbol{\mu}},\hat{\Sigma})$ corresponds to the group memberships associated with the global minimizers $\hat{\boldsymbol{\mu}}$ and $\hat{\Sigma}$ identified by the algorithm. To note, the total distance $d^{2TM}(\hat{\mathbf{z}}(\hat{\boldsymbol{\mu}},\hat{\Sigma}),\hat{\boldsymbol{\mu}},\hat{\Sigma})$ can be negative if $\lambda_{min}$ is between zero and one due to the presence of the term $\log \det(\Sigma_g)$ in the objective function.
	\par
	Columns (3) to (6) in Table \ref{tab:2} show that the maximum OOS sum of specificity and sensitivity is obtained when $S=5$ and $\lambda_{min} = 0.70$. The OOS area under the ROC curve associated with this combination of hyperparameters is also large (0.8196), but not the largest among all results. Taken together, I consider that $S=5$ and $\lambda_{min} = 0.70$ identify group memberships and outcome models with the best OOS predictive accuracy among all results. A more thorough search across the hyperparameter space could eventually yield marginally better hyperparameters, but at the cost of greater computational overhead. Note also that small values of $\lambda_{min}$ tend to lead to high levels of predictive accuracy in-sample, but to lower predictive accuracy out-of-sample (e.g., when $S=3$ in Table \ref{tab:2}). This is mainly caused by in-sample overfitting, where small values of $\lambda_{min}$ assign very large weights to a small number of covariates during the classification procedure.
	\par
	The results shown in columns (3) to (6) of Table \ref{tab:2} are all based on the following LPM
	\begin{align}\label{eq:14}
		\text{SVR}_i = \beta_{0,\hat{z}_i(\hat{\boldsymbol{\mu}},\hat{\Sigma})} + \sum_{d \in \{\text{EPC},\text{MAV},\text{HAR}\}} \tau_{d,\hat{z}_i(\hat{\boldsymbol{\mu}},\hat{\Sigma})} A_{i,d} + \epsilon_{i},
	\end{align}
	where $\beta_{0,g}$ is the intercept value for the $g^{th}$ group, $\tau_{d,g}$ is the CATE of treatment option $d$ for the $g^{th}$ group, and where $\epsilon_{i} \sim N(0,\sigma^2_{\epsilon})$ with $\sigma^2_{\epsilon} > 0$. Those simple models do not include any of the covariates $X_i$ used for classification due to systematically lower OOS performances compared to the performances of simple LPMs with no additional covariates (results not shown). The absence of necessity to control for additional covariates in the LPMs is also a sign that the estimated group memberships with $S=5$ appropriately account for most of the unobserved heterogeneity across patients. The intercept value for the $g^{th}$ group, $\beta_{0,g}$, denotes the probability of observing spontaneous clearance for any untreated individual in the $g^{th}$ group.
	\begin{table}[t!]
	\begin{center}
		\caption{\centering Optimal Treatment Allocation with no Cost Constraint Compared to the Observed Prescribed Treatments \label{tab:55}}
		\begin{tabular}{c| c c c}
			{\multirow{3}{2.2cm}{\centering Observed Prescribed Treatment}}&\multicolumn{3}{c}{Optimal Treatment}\\
			&\multicolumn{3}{c}{Allocation} \\
			&HAR & EPC & MAV\\
			\cmidrule{1-4}
			HAR & 279 &20&33\\
			EPC &280&12&53\\
			MAV &49&1&17\\
		\end{tabular}
	\end{center}
	\end{table}
	\par
	The estimated coefficients of each outcome model for the best combination of hyperparameters (i.e., $S=5$ and $\lambda_{min}= 0.7$) and the full sample are presented in Figure \ref{fig:res1}. These coefficients are obtained by taking the median over the 1,000 values obtained from the sample splitting procedure, explaining why some bars are not centered within the confidence interval. The first set of bars on the left indicates that individuals belonging to group 1 have a probability of around 80\% to spontaneously clear the HepC virus. This result could be leveraged to reduce total prescription costs in the future if there are no substantial costs of waiting to confirm spontaneous clearance for individuals within this group in the target population. This might not be the case in practice, as there are significant costs to society associated with not treating a patient who may spread the infection while waiting. The descriptive statistics within each group (shown in Appendix \ref{app:E}) also show that patients in Group 1 tend to be younger and have a higher income than patients in all other groups. However, these differences are not statistically significant between groups.
	\par
	Results from Figure \ref{fig:res1} also show that Epclusa is weakly dominated when there is no cost constraint due to the lower or equal effectiveness of Epclusa in all groups compared to the other DAA regimens. Those results also imply that 58.6\% of all treated individuals in the sample would have observed an equal or higher probability of reaching SVR if their prescription had been switched to another treatment option, as shown in Table \ref{tab:55}.
	\begin{figure}[t!]
		\centering
		\includegraphics[width=16.25cm]{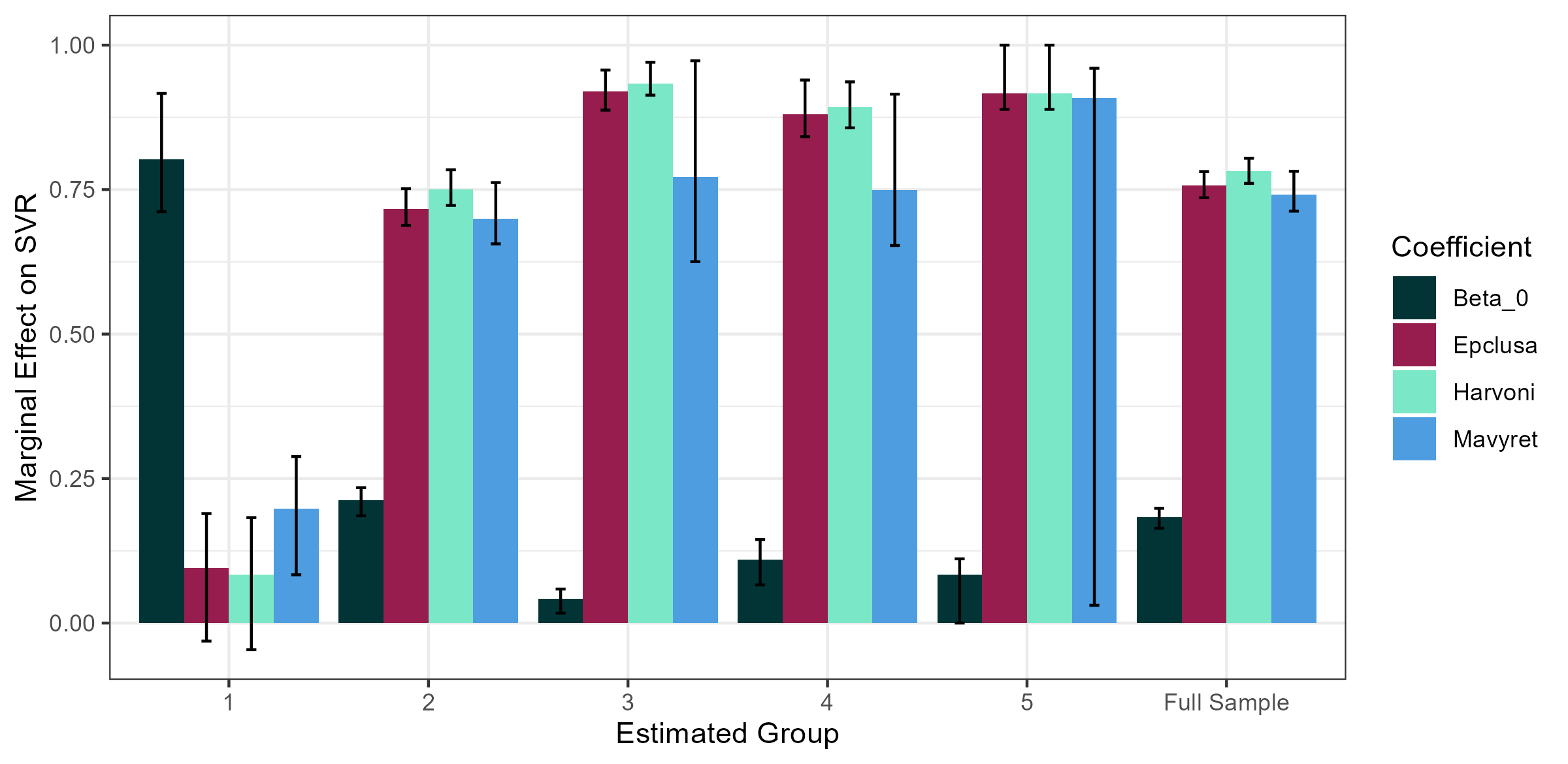}
		\caption{Estimated median coefficients and their respective confidence intervals obtained from the LPMs for each group and for the full sample when $G=5$ and $\lambda_{min}= 0.7$.}\label{fig:res1}
	\end{figure}
	\par
	The last set of bars in Figure \ref{fig:res1} shows the estimated coefficients of the LPM when the full sampled population is not divided into different groups. This last set of results produces an effectiveness ranking that is slightly different from the ranking portrayed in Table \ref{tab:1} due to imperfect adherence in the LPMs. Harvoni remains the most effective option, but is followed by Epclusa and then Mavyret. This global effectiveness ranking is preserved in all groups, except in Group 1, where the ranking is completely reversed, and in Group 5, where Harvoni and Epclusa share the same median effectiveness estimate. Note, however, that no DAA treatment regimen (apart from the no-treatment regimen) is significantly more effective than any other regimen within each group. Although this precludes us from concluding that a given DAA regimen is truly more effective than the two other ones for any given patient belonging to a given group, such results can still be used to guide future decisions. If there is no constraint imposed on treatment choice and estimates of marginal effects are unbiased, then it is always better to prescribe a superior treatment in terms of the conditional probability of reaching a desirable outcome.
	\par
	For comparison purposes, Figure \ref{fig:res2} shows the estimated marginal effects on SVR of each treatment option when the hyperparameter values yield the best in-sample predictive performance (i.e., $G=3$ and $\lambda_{min}=0.04$). The most remarkable difference between Figure \ref{fig:res2} and Figure \ref{fig:res1} lies in the first group, where reducing the number of groups to $G=3$ now leads to negative and statistically significant marginal effects of Epclusa and Harvoni on SVR, while Mavyret does not feature any significant effect on SVR. Although the results shown in \ref{fig:res2} are more parsimonious and feature better in-sample performances than those shown in Figure \ref{fig:res1}, using these results to guide prescription behaviors in the general population would be misguided if the OOS maximum sum of specificity and sensitivity is a better indicator of external validity. This example also illustrates how misspecification of the number of groups and in-sample overfitting can lead to biased estimates and subsequent suboptimal choices.
	\begin{figure}[t!]
		\centering
		\includegraphics[width=16cm]{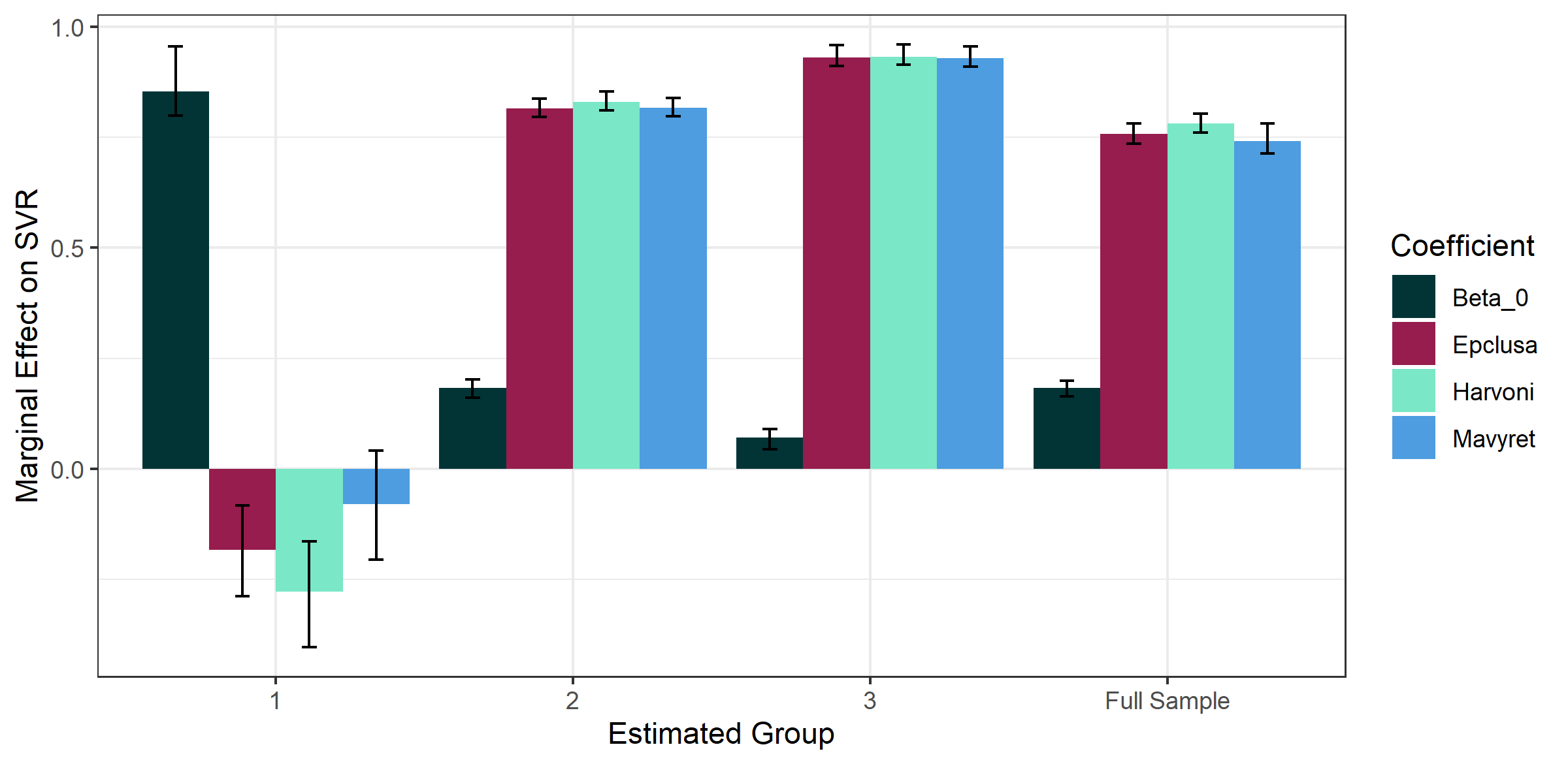}
		\caption{Estimated median coefficients and their respective confidence intervals obtained from the LPMs for each group and for the full sample when $G=3$ and $\lambda_{min}= 0.04$.}\label{fig:res2}
	\end{figure}
	\par
	For further comparison, I use the \textit{multi\_arm\_causal\_forest} command from the \textbf{grf} package to generate estimates of CATEs in a multi-action PL setting using causal random forests, which is one of the most popular approaches for policy learning in the econometric literature \citep{tibshirani_generalized_2024,wager_estimation_2018,athey_policy_2021}. However, two elements limit the comparison between the proposed approach and the one using causal random forests. First, the \textit{multi\_arm\_causal\_forest} command does not handle semi-continuous treatment effect indicators (due to imperfect adherence). To bypass this issue, I enforced $A_{i,D_i} = 1$ if and only if the observed adherence for assigned treatment $D_i$ was higher than 0.5. Second, the same command does not provide the predicted outcome values for the left-out observations that were not used for training the different trees. Consequently, it is impossible to compare the indicators of OOS predictive accuracy between the two methods. Despite these two caveats, the \textbf{grf} package yielded an AUC of the in-sample ROC curve of 0.7383, with an in-sample maximum sum of specificity and sensitivity of 1.6395, both much lower than what is shown in columns (3) and (5) of Table \ref{tab:2}. Given that indicators of OOS performance are usually lower than in-sample indicators, likely, the \textit{multi\_arm\_causal\_forest} command would not perform as well as the proposed approach in the target population.
	\subsection{Adherence to Treatment}\label{sec:53}
	\begin{table}[t!]
		\small
		\begin{center}
			\caption{\centering Estimated Coefficients for each Part of the Adherence Model \label{tab:4}}
			\begin{tabular}{c c c}
				\toprule\toprule
				\multirow{2}{*}{Variables} & Binary Part &  Continuous Part  \\ 
				& (1) & (2)   \\
				\midrule
				\multirow{2}{5.2cm}{Intercept} & 0.681*** &  0.670*** \\
				&(0.095)&(0.100)  \\[1mm]
				\multirow{2}{5.2cm}{Predicted Effectiveness, Harvoni} & 0.0004  & \multirow{2}{*}{-} \\
				&(0.0003)& \\[1mm]
				\multirow{2}{5.2cm}{Predicted Effectiveness, Mavyret} & 0.0009***  & \multirow{2}{*}{-} \\
				&(0.0003)& \\[1mm]
				\multirow{2}{5.2cm}{Liver Stiffness} & 0.0029** &\multirow{2}{*}{-} \\
				&(0.0009)&\\[1mm]
				\multirow{2}{5.2cm}{Quality of life (EQ-VAS score)} & 0.0014* &\multirow{2}{*}{-}  \\
				&(0.0007)&\\[1mm]
				\multirow{2}{5.2cm}{CD4 Cell Count} & -0.0001 &\multirow{2}{*}{-}  \\
				&(0.00004)&\\[1mm]
				\multirow{2}{5.2cm}{Living In Long-Term Care Facility (Y/N)} & 0.048$^\dagger$ &\multirow{2}{*}{-}  \\
				&(0.027)&\\[1mm]
				\multirow{2}{5.2cm}{Living Alone (Y/N)} & -0.075* &\multirow{2}{*}{-}  \\
				&(0.029)&\\[1mm]
				\multirow{2}{5.2cm}{Sharing a Room/Appartment (Y/N)} & -0.068* &\multirow{2}{*}{-} \\
				&(0.028)&\\[1mm]
				\multirow{2}{5.2cm}{Asian Ethnicity (Y/N)} & 0.076*** & \multirow{2}{*}{-} \\
				&(0.022)&\\[1mm]
				\multirow{2}{5.2cm}{Metis Ethnicity (Y/N)} & 0.072* & \multirow{2}{*}{-} \\
				&(0.030)&\\[1mm]
				\multirow{2}{5.2cm}{First Nation Ethnicity (Y/N)} & 0.027 & -0.132 \\
				&(0.029)&(0.107)\\[1mm]
				\multirow{2}{5.2cm}{Fixed Address (Y/N)} & 0.141** &0.169 \\
				&(0.048)&(0.102)\\[1mm]
				\multirow{2}{5.2cm}{Injection Drug Use (Y/N)} & -0.022 & -0.076  \\
				&(0.027)&(0.069)\\[1mm]
				\multirow{2}{5.2cm}{Ever Been in Prison (Y/N)} & -0.036 & -0.127 \\
				&(0.034)&(0.099)\\[1mm]
				\multirow{2}{5.2cm}{Working Part-time (Y/N)} & \multirow{2}{*}{-} & -0.445*** \\
				&&(0.113)\\[1mm]
				\midrule
				\multicolumn{1}{c}{Clinical Centers' Fixed Effects}
				& Yes  & No   \\
				\multicolumn{1}{c}{Nb. of obs.}
				& 745  & 63   \\
				\multicolumn{1}{c}{$R^{2}$} & 0.0997 & 0.3794 \\
				\bottomrule\bottomrule
			\end{tabular}
		\end{center}
		\small \textbf{Notes}~: Heteroskedasticity-robust standard errors ($HC_2$) are shown in parentheses. Binary variables are indicated by (Y/N). Clinical centers' fixed effects are not shown for brevity. $\dagger$ = p-value $<$0.1; * = p-value $<$0.05; ** = p-value $<$0.01; *** = p-value $<$ 0.001.
	\end{table}
	Table \ref{tab:4} shows the estimated coefficients for each part of the adherence model described in Section \ref{sec:443} when applied to treated individuals only. The predicted effectiveness of the Epclusa treatment regimen was removed from the binary part of the model due to the very high collinearity with the predicted effectiveness of the Harvoni treatment regimen. The absence of strong collinearity among the selected variables was assessed using the variance inflation factor (VIF $<$ 5). The first column of Table \ref{tab:4} shows that predicted treatment effectiveness of Mavryet, liver stiffness, and quality of life are significantly and positively associated with the probability of completely adhering to treatment.\footnote{The statistically significant coefficient on predicted treatment effectiveness of Mavyret is because the only patient who did not completely adhere to treatment features a relatively low predicted treatment effectiveness.} This is also the case for people who have a fixed address (+14.1 percentage points), who are either Asian (+7.6 percentage points) or Metis (+7.2 percentage points), and who are living in a long-term care facility (+4.8 percentage points). The fixed effects associated with clinical centers 3, 4, 11, 16, and 19 are also positive and statistically significant at the 95\% confidence level (results not shown). Such results warrant further investigation to better understand why patients treated in those clinical centers are more likely to completely adhere to treatment. On the other hand, living alone or sharing a room/apartment is negatively and significantly associated with the probability of fully adhering to treatment.
	\par
	Apart from the intercept, being a part-time worker is the only variable significantly associated with a lower level of adherence among observations with imperfect adherence (63 observations only). The other variables are included in the continuous part of the model since their respective p-value is lower than 0.3 and also because those variables are suspected to affect the level of adherence according to the literature on DAAs \citep{harney_unsuccessful_2025, darvishian_loss_2020, wilton_real-world_2020, jiang_hcv_2023}. Note that the signs of the estimated coefficients align with what is expected according to this same literature. Note also that the Asian and Metis ethnicity binary indicators were not included in the continuous part due to a lack of variation among the 63 observations with imperfect adherence.
	\subsection{Feasible Policy Rule and Cost-Effectiveness Analysis}\label{sec:54}
	The structure of the decision tree used to predict group memberships is presented in the appendix and was provided directly by the \textit{rpart.plot} command of the \textbf{rpart} package in R. The selected decision tree contains 20 terminal leaves and has a depth of 10 (shown in Appendix \ref{app:E}). This tree shows that group memberships can be predicted using the results of various blood tests (e.g., blood levels of insulin, glucose, haemoglobin, total bilirubin, aspartate transferase (AST), gamma-glutamyl transferase (GGT), and some others), the degree of liver stiffness (based on FibroScan measures), whether if the patient is Indigenous, the degree of pain reported by the patient, and the patient's height.
	\par
	\begin{table}[t!]
		\begin{center}
			\caption{\centering Optimal Treatment Choices Using on the Bayesian Decision Criterion and the Selected Decision Tree \label{tab:5}}
			\small
			\begin{tabular}{l | c c c c c c c c c c} 
				Terminal Leaf & 1 & 2 & 3 & 4 & 5 & 6 & 7 & 8 & 9 & 10 \\
				Treatment Choice & HAR & HAR &HAR &HAR &HAR &EPC &HAR &HAR &HAR &HAR  \\
				\midrule
				Terminal Leaf & 11 & 12 & 13 & 14 & 15 & 16 & 17 & 18 & 19 & 20 \\
				Treatment Choice& HAR & HAR &MAV &HAR &EPC &HAR &HAR &HAR &HAR &HAR  \\
			\end{tabular}
		\end{center}
	\end{table}
	Table \ref{tab:5} shows the treatment choice for each terminal leaf of the selected decision tree based on the Bayesian version of the feasible policy rule when there is no cost or global constraint, as shown in Eq. (\ref{eq:11}). Treatment choices presented in Table \ref{tab:5} are computed using the conditional probabilities $\text{Pr}[\hat{z}_i(\hat{\boldsymbol{\mu}},\hat{\Sigma}) =g |L_i = k]$ provided by the \textit{rpart} function, as shown in Section \ref{sec:444}. Table \ref{tab:5} shows that Mavyret and Epclusa are the optimal choices in one (no. 13) and two (no. 6 and no. 15) terminal leaves out of 20, respectively. Moreover, the Epclusa treatment regimen is the minimax-regret optimal treatment regimen based on the results shown in Figure \ref{fig:res1}. This means that the Epclusa treatment regimen performs relatively well across all groups, with a maximum regret (or welfare loss) of 10.4 percentage points -- compared to 11.4 and 16.2 percentage points for Harvoni and Mavyret, respectively --, which occurs when the patient truly belongs to Group 1 and Epclusa is prescribed instead of Mavyret. However, minimax-regret optimality depends on the effectiveness of every considered alternative, implying that Epclusa may not remain minimax-regret optimal if other treatment regimens had been added in the analysis.
	\par
	\begin{figure}[t!]
		\centering
		\subfigure[Total health benefits (dashed line; left axis) and total prescription costs (solid line; right axis)]{
			\includegraphics[width=14.5cm]{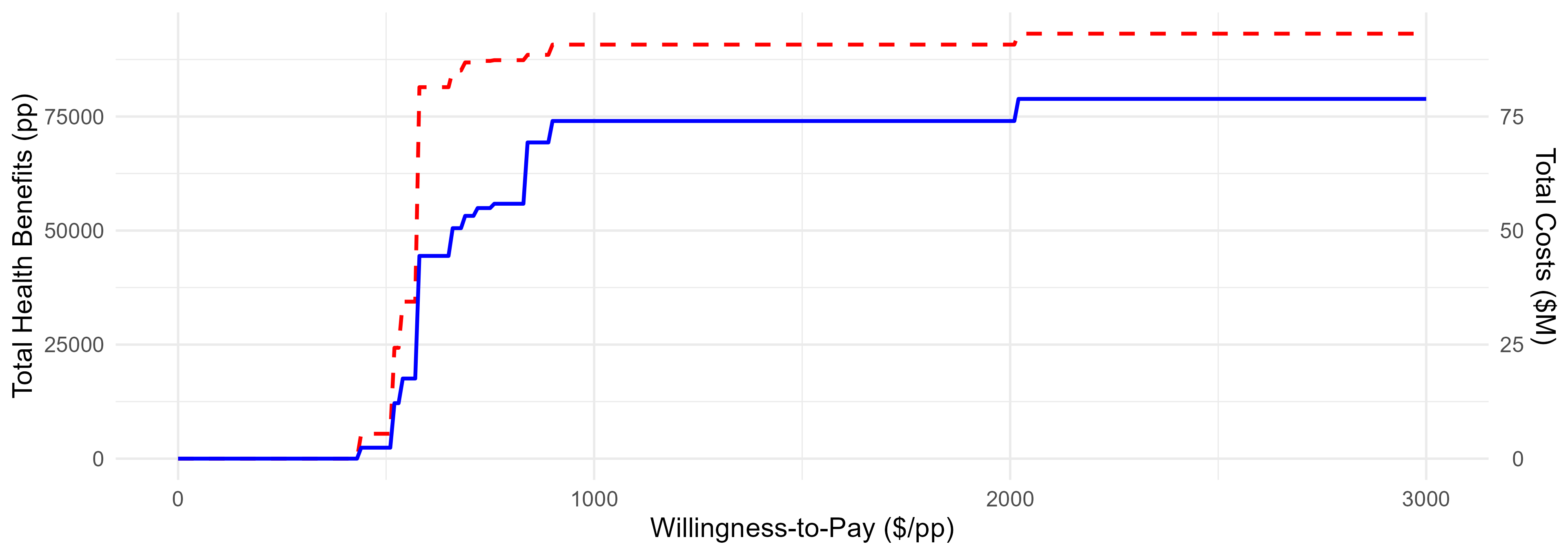}
		}
		\subfigure[Optimal treatment allocation]{
			\includegraphics[width=14.5cm]{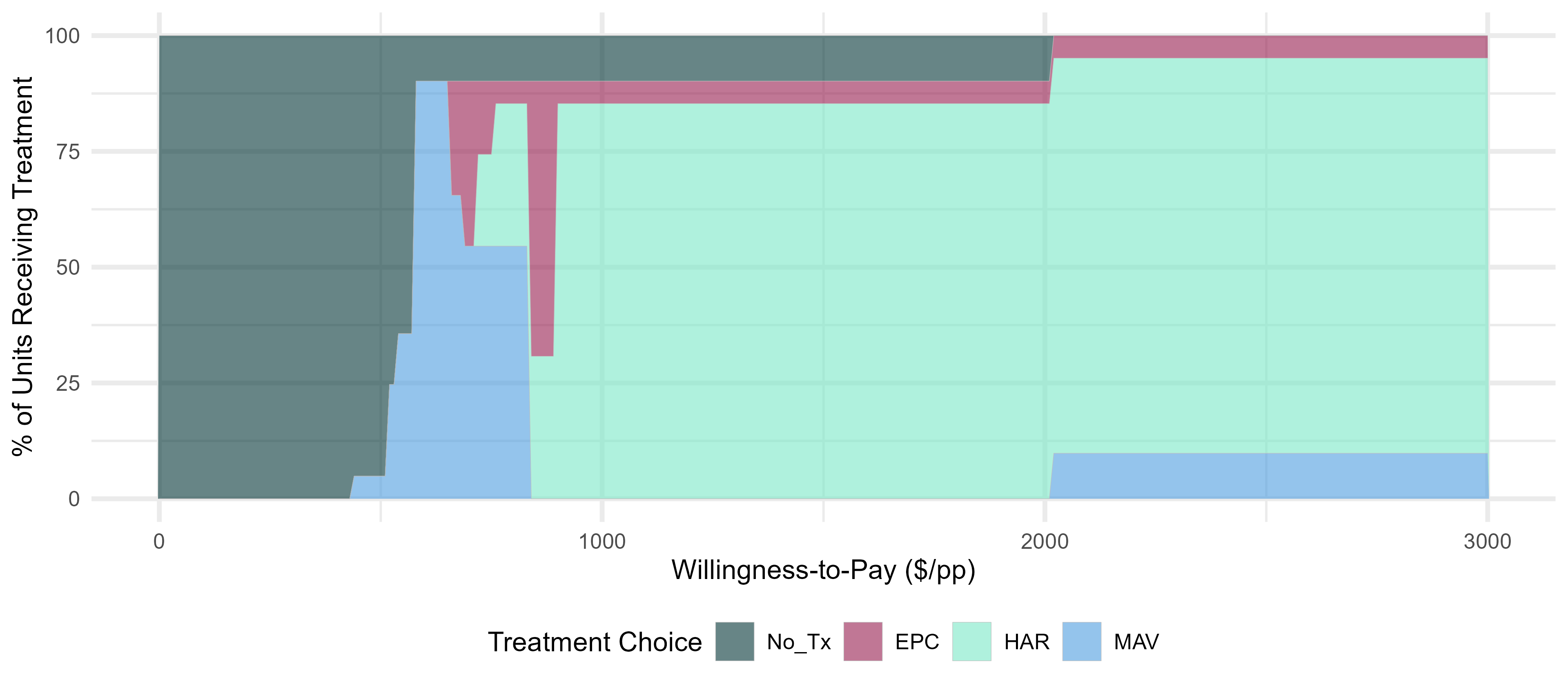}
		}
		\subfigure[Comparison with the status quo allocation]{
			\includegraphics[width=14.5cm]{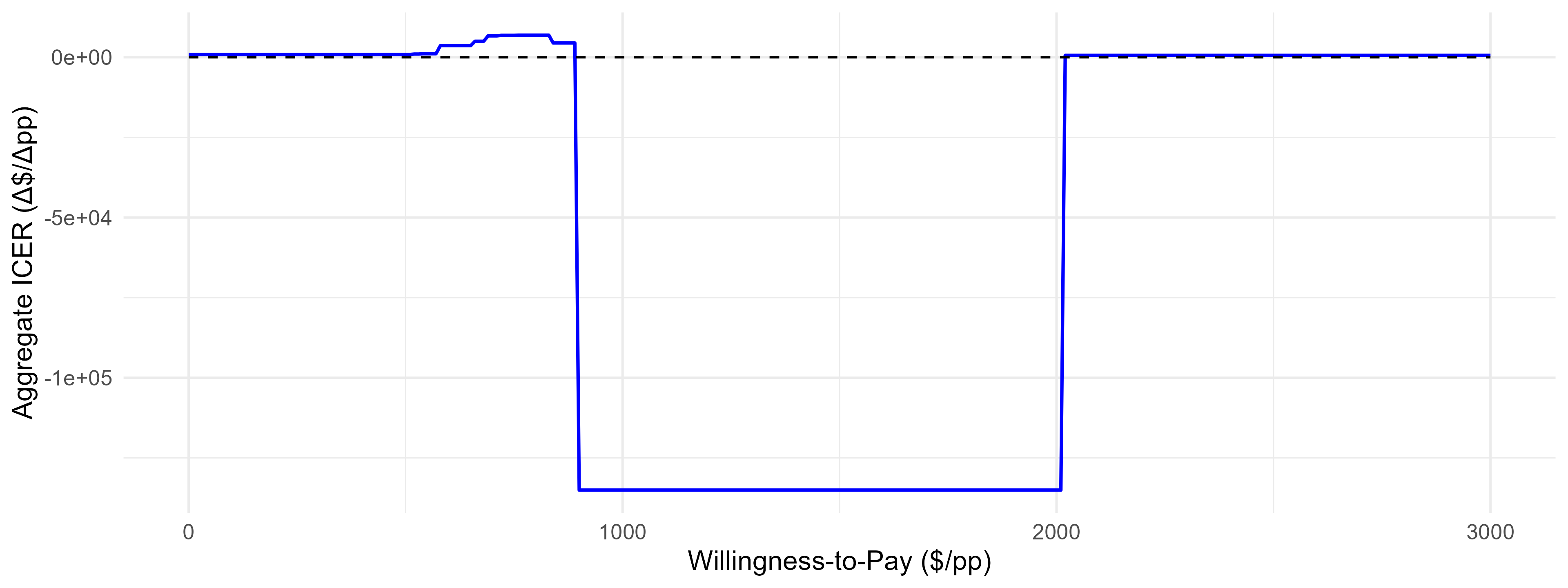}
		}
		\caption{Results of the cost-effectiveness analysis with perfect adherence to treatment and estimated group memberships $\hat{\mathbf{z}}(\hat{\boldsymbol{\mu}},\hat{\Sigma})$.\label{fig:3}}
	\end{figure}
	Figure \ref{fig:3} shows the results of the CEA described in Section \ref{sec:445} under perfect adherence for every treatment option and every individual in the sample when $\hat{s}_i = \hat{z}_i(\hat{\boldsymbol{\mu}},\hat{\Sigma})$. Panel (a) of Figure \ref{fig:3} shows the total health benefits on the left axis, represented by the sum of the marginal effects (in total percentage points, $pp$) across every individual's optimal treatment choice, and the corresponding total prescription costs on the right axis (in CAN\$M), both as a function of the payer's WTP. Panel (b) of Figure \ref{fig:3} shows the corresponding optimal treatment allocation as the percentage of individuals in the sample that receive each treatment option as a function of WTP. Finally, panel (c) of Figure \ref{fig:3} shows the value of the aggregate ICER when the benchmark corresponds to the observationally assigned treatment $D_i$ (the \textit{status quo} allocation), as described by Eq. (\ref{eq:171}). Note that Panel (c) of Figure \ref{fig:3} is based exclusively on treated patients given the nature of the benchmark.
	\par
	Figure \ref{fig:3} shows that most of the health benefits are generated when the WTP reaches CAN\$580 per additional percentage point. Specifically, 87.4\% of the maximum health benefits are generated when the WTP is equal to CAN\$580/pp, with total costs corresponding to 56.4\% of the maximum total prescription costs (CAN\$44.4 million vs CAN\$78.9 million). Panel (b) of Figure \ref{fig:3} also shows that 90.2\% of all individuals in the sample are optimally treated with Mavyret when WTP is between CAN\$580/pp and CAN\$650/pp. Panel (b) also shows that all 121 individuals belonging to Group 1 are left untreated when the WTP is between CAN\$580/pp and CAN\$2,010/pp. When WTP goes beyond CAN\$2,010/pp, then optimal allocation implies that 85.3\%, 9.8\%, and 4.9\% of all individuals in the sample are prescribed Harvoni, Mavyret, and Epclusa, respectively, which leads to the maximum health benefits achievable with perfect adherence. Panel (c) of \ref{fig:3} shows that the corresponding optimal treatment allocation leads to reduced costs (savings of CAN\$2.9 million) and higher health benefits (+21.3 pp) compared to the status quo when WTP lies between CAN\$900/pp and CAN\$2,010/pp. Here, the negative values of the aggregate ICER correspond to negative total costs (i.e., savings) divided by positive total health benefits, therefore corresponding to an allocation that dominates the status quo at the aggregate level.
	\par
	To give a better sense of the magnitude of the health benefits compared to the cost of treatment, it is possible to compare the total cost of a given treatment option per expected gains in quality-adjusted life years (QALYs). For instance, if we assume that Mavyret features a marginal effect of 75 percentage points on the probability of reaching SVR, and that being cured from HCV leads to an increase in quality of life between 1.3 and 2 QALYs \citep{mattingly_changes_2020}, such a hypothetical scenario translates into a total cost that lies between CAN\$28,500 and CAN\$43,850 per additional QALY given the total cost of the Mavyret treatment regimen (CAN\$40,000.00 for 8 weeks). The upper bound is below the standard conservative threshold of CAN\$50,000 per QALY. In contrast, the list price of Harvoni is CAN\$67,000.00 for 12 weeks. This leads to a total cost that lies between CAN\$41,875 and CAN\$64,425 per additional QALY if we assume a marginal effect of 80 percentage points on the probability of reaching SVR, which is still relatively low.
	\begin{figure}[t!]
		\centering
		\subfigure[Optimal treatment allocation]{
			\includegraphics[width=14.5cm]{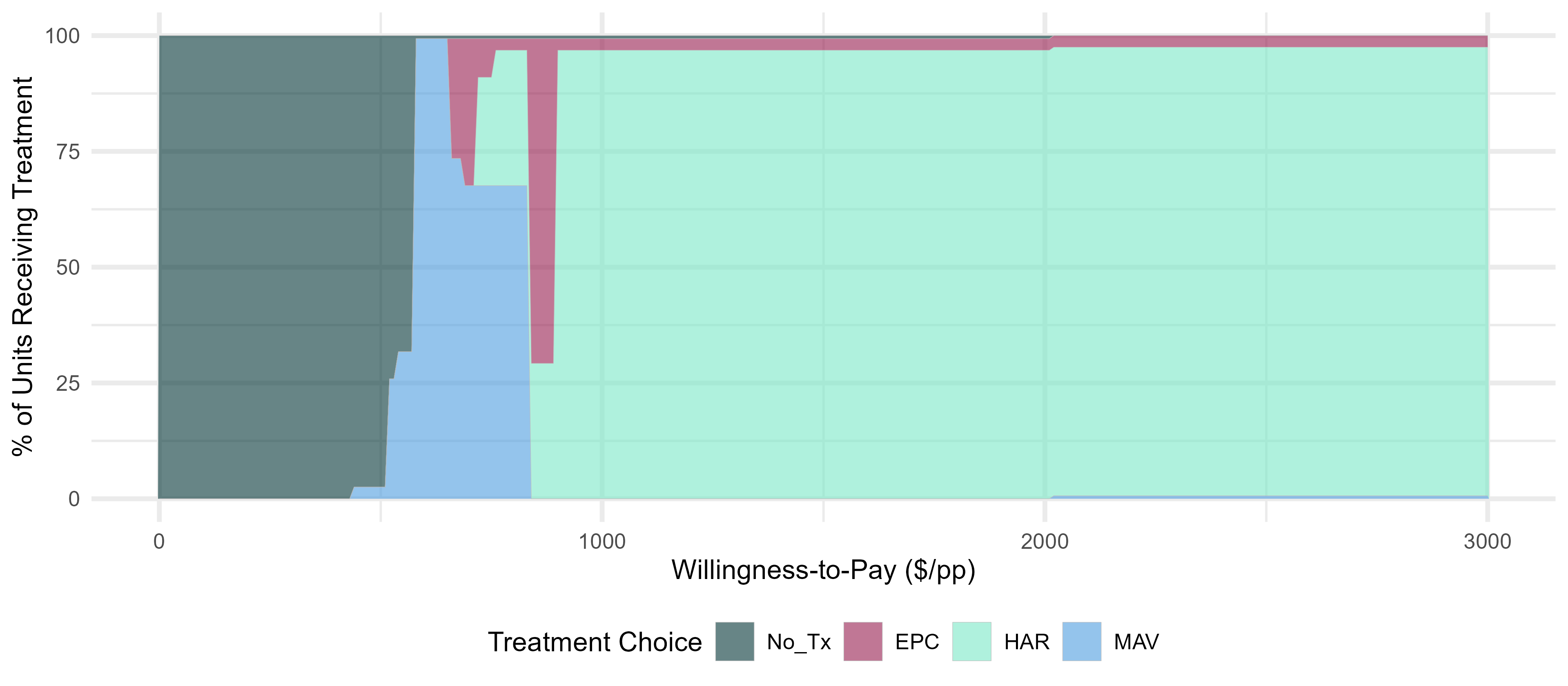}
		}
		\subfigure[Comparison with the status quo allocation]{
			\includegraphics[width=14.5cm]{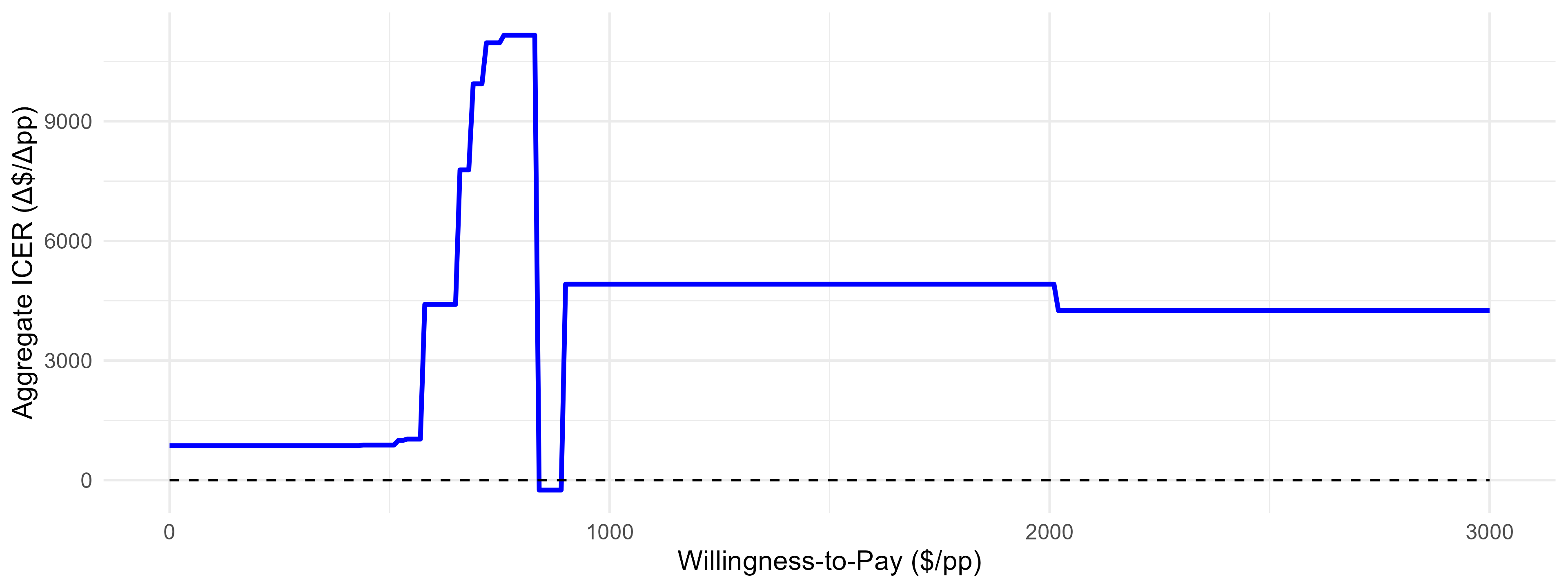}
		}
		\subfigure[Comparison with the estimated group memberships $\hat{\mathbf{z}}(\hat{\boldsymbol{\mu}},\hat{\Sigma})$]{
			\includegraphics[width=14.5cm]{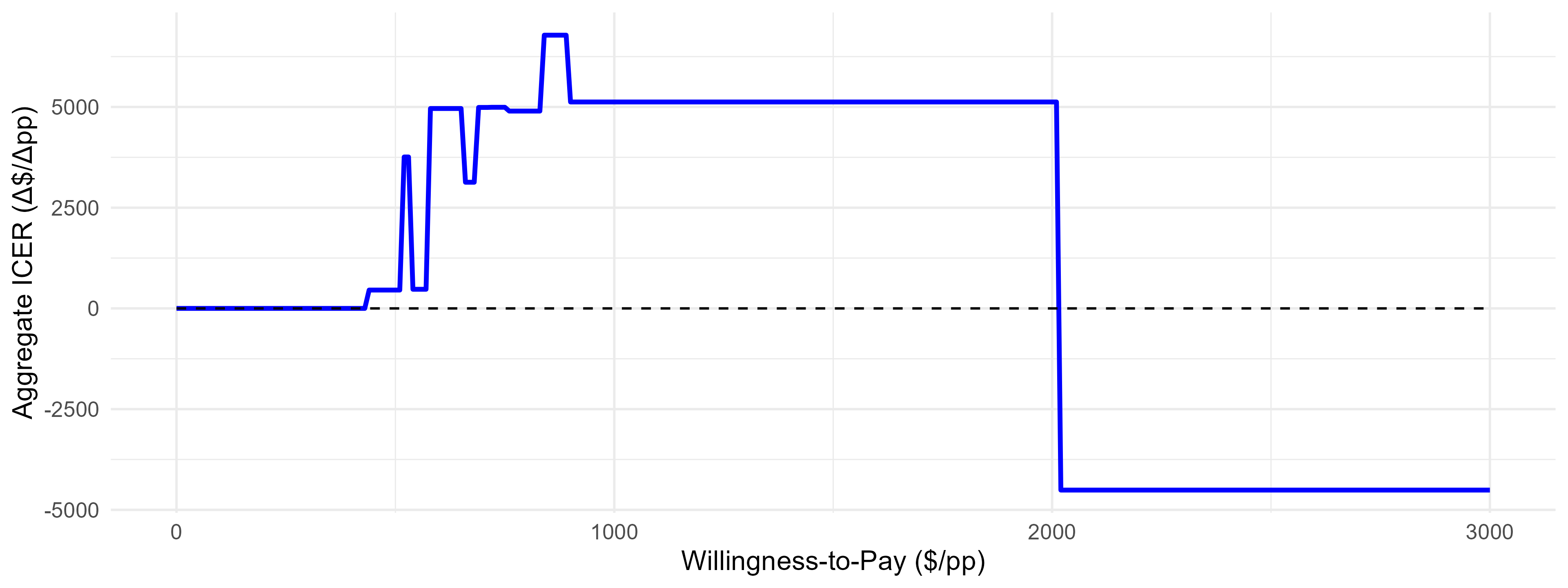}
		}
		\caption{Results of the cost-effectiveness analysis with perfect adherence to treatment and tree-based predicted group memberships $\hat{h}(V^*)$.}\label{fig:4}
	\end{figure}
	\begin{figure}[t!]
		\centering
		\subfigure[Optimal treatment allocation]{
			\includegraphics[width=14.5cm]{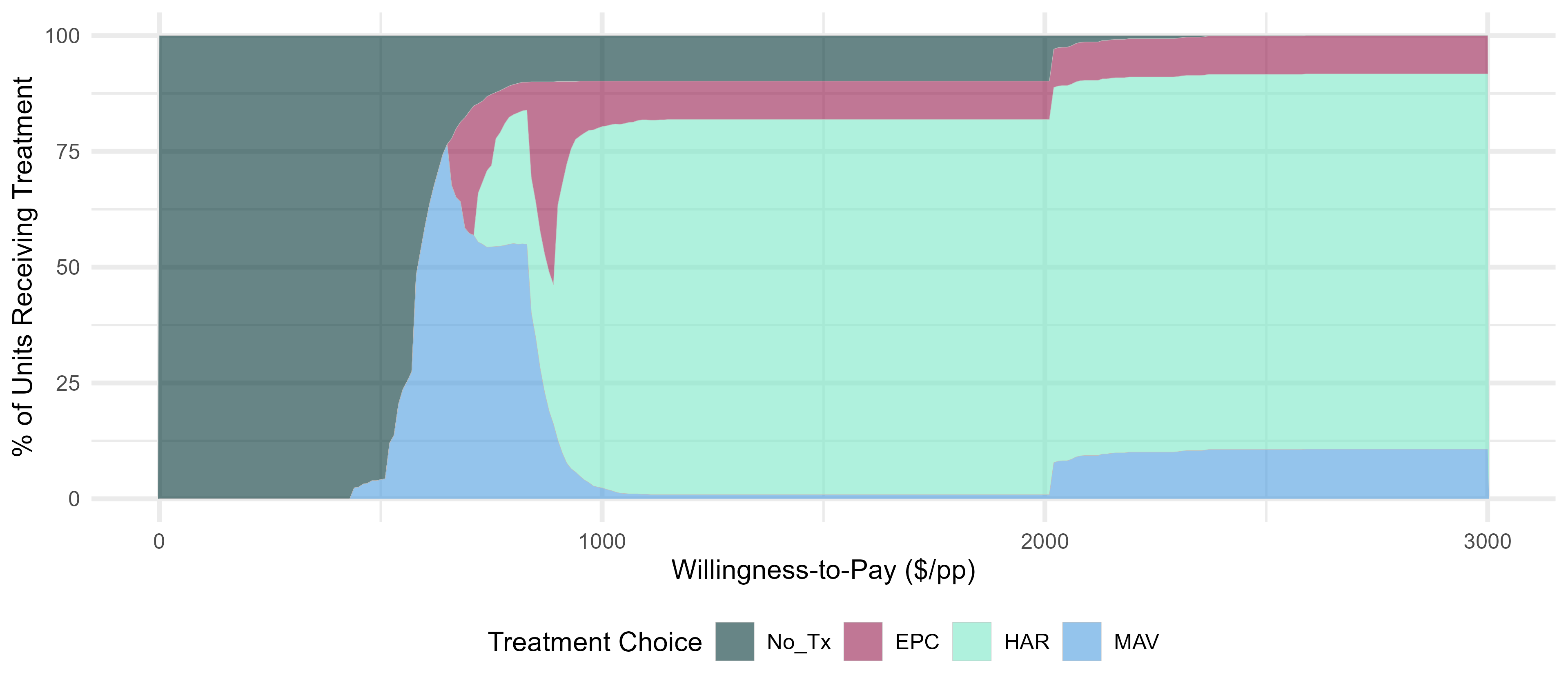}
		}
		\subfigure[Comparison with the status quo allocation]{
			\includegraphics[width=14.5cm]{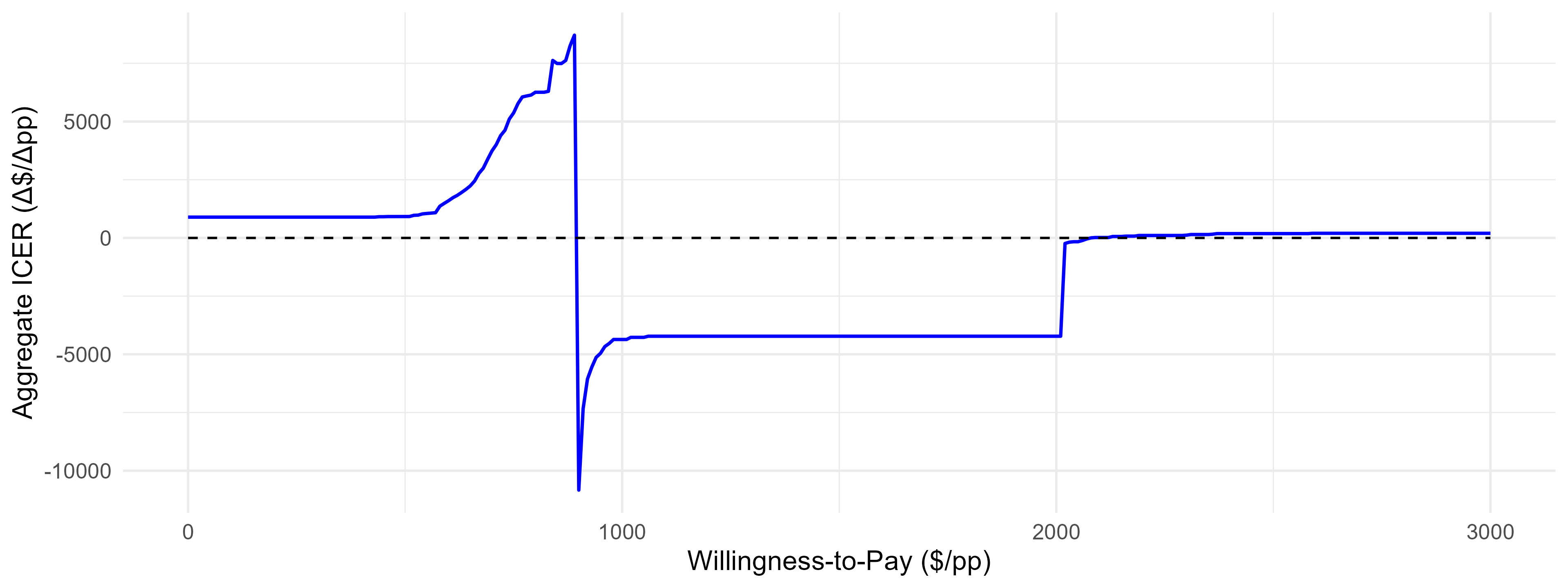}
		}
		\subfigure[Comparison with perfect adherence]{
			\includegraphics[width=14.5cm]{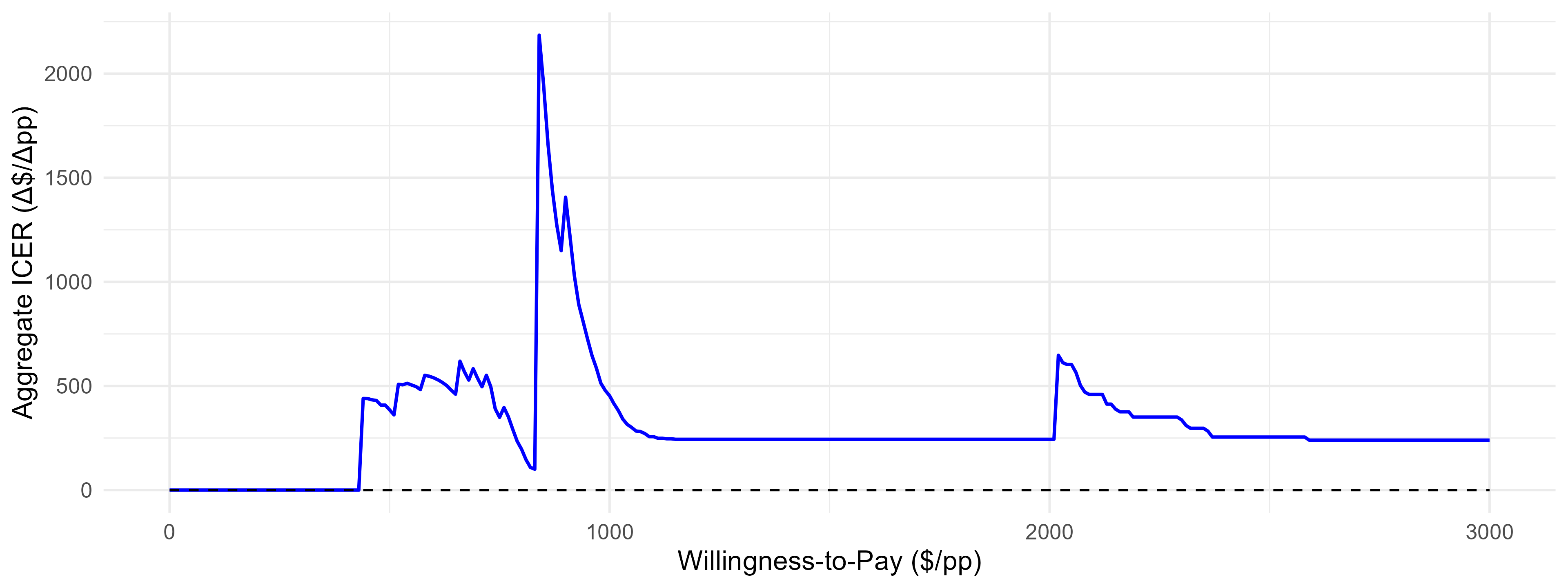}
		}
		\caption{Results of the cost-effectiveness analysis with predicted adherence to treatment and estimated group memberships $\hat{\mathbf{z}}(\hat{\boldsymbol{\mu}},\hat{\Sigma})$.}\label{fig:5}
	\end{figure}
	\par
	Panel (a) of Figure \ref{fig:4} shows the optimal treatment allocations when the tree-based predicted group memberships $\hat{h}(V^*)$ are used instead of the estimated group memberships $\hat{\mathbf{z}}(\hat{\boldsymbol{\mu}},\hat{\Sigma})$. Using the tree-based predicted group memberships leads to a misclassification rate $\text{Mis}(\hat{h}(V^*),V^*)$ of 23.1\%. It is possible to quantify the impact of these classification errors by computing an aggregate ICER value using Eq. (\ref{eq:171}) where $\hat{\pi}_i(\mathbf{a},\omega)$ corresponds to the treatment allocation shown just above in Panel (a), adherence values $\mathbf{a}$ and $\tilde{\mathbf{a}}$ are all set to one, and where the benchmark allocation $b_i$ is set to either the status quo (Panel (b)), or to the optimal treatment allocation obtained when $\hat{s}_i = \hat{z}_i(\hat{\boldsymbol{\mu}},\hat{\Sigma})$ for every $i\in [N]$ in Eq. (\ref{eq:17}) (Panel (c)). Panel (b) of Figure \ref{fig:4} shows that using $\hat{h}(V^*)$ to compute treatment choices eliminates the dominant cases where lower costs and higher health benefits would occur compared to the status quo. In fact, when WTP is between CAN\$840/pp and CAN\$890/pp, the status quo allocation provides more health benefits at lower costs (on aggregate) than the optimal treatment allocation based on $\hat{h}(V^*)$ due to misclassification errors. All other WTP values imply trade-offs between costs and health benefits. On the other hand, Panel (c) of Figure \ref{fig:4} shows that, when the benchmark corresponds to the allocation obtained when $\hat{s}_i = \hat{z}_i(\hat{\boldsymbol{\mu}},\hat{\Sigma})$, treatment choices based on $\hat{h}(V^*)$ are dominated for every value of WTP above CAN\$2,020/pp. Note that the total health benefits and costs associated with this CEA are not shown for brevity.
	\par
	\begin{figure}[t!]
		\centering
		\subfigure[Optimal treatment allocation]{
			\includegraphics[width=14.5cm]{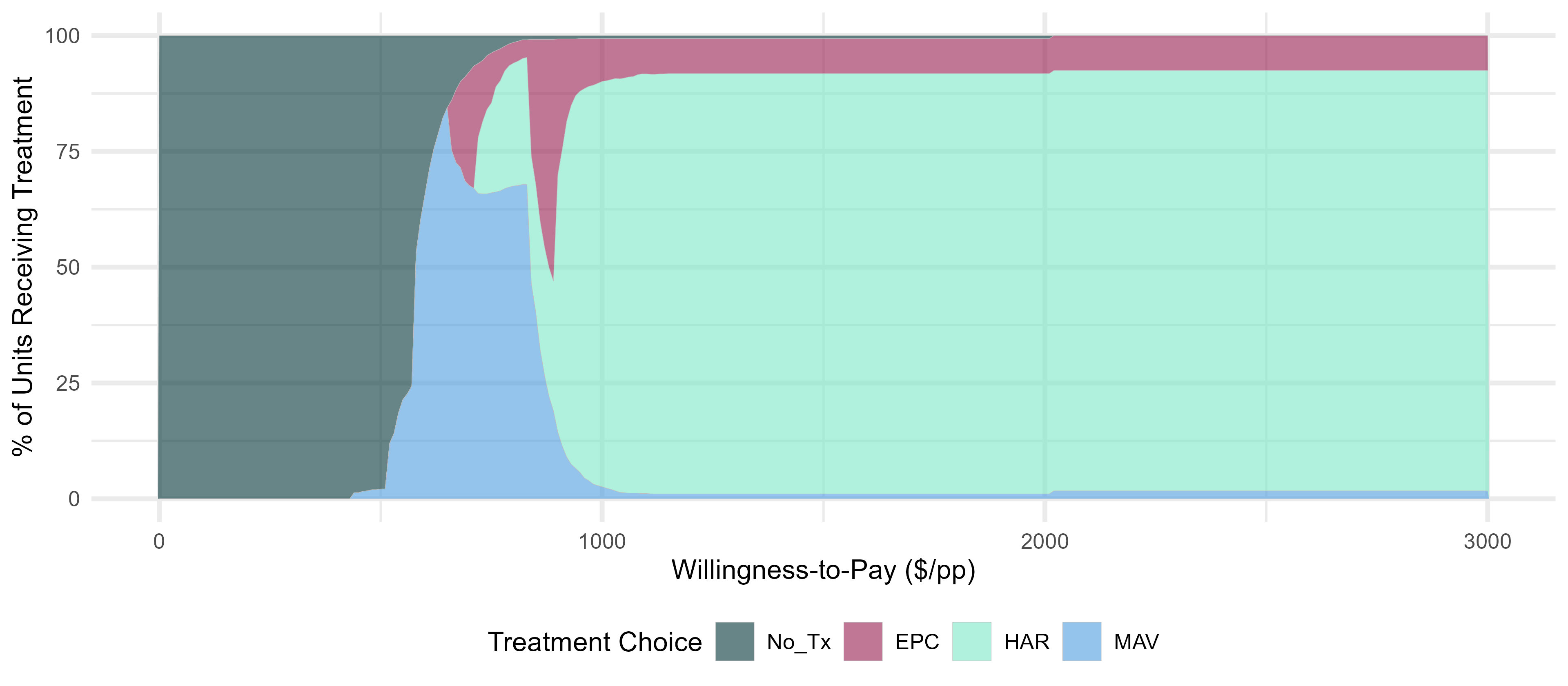}
		}
		\subfigure[Comparison with the status quo allocation; the y-axis is cropped at 20,000 for visibility]{
			\includegraphics[width=14.5cm]{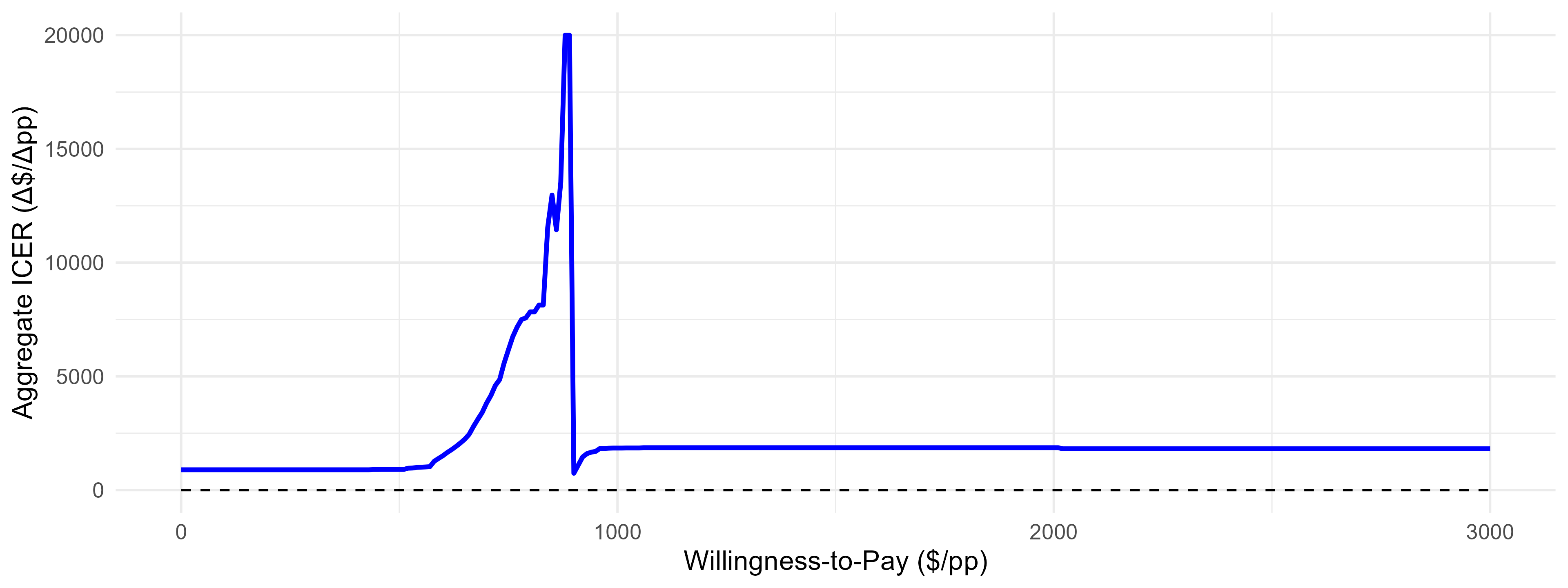}
		}
		\subfigure[Comparison with perfect adherence and the estimated group memberships $\hat{\mathbf{z}}(\hat{\boldsymbol{\mu}},\hat{\Sigma})$]{
			\includegraphics[width=14.5cm]{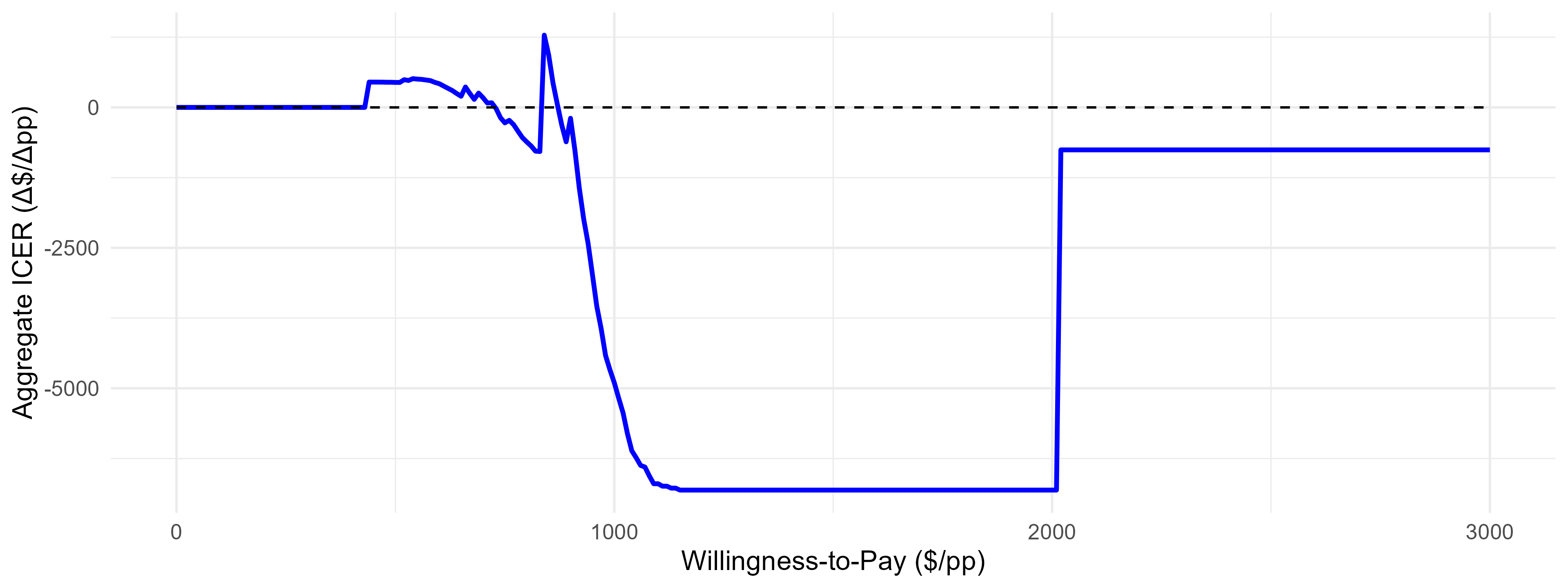}
		}
		\caption{Results of the cost-effectiveness analysis with predicted adherence to treatment and tree-based predicted group memberships $\hat{h}(V^*)$.}\label{fig:6}
	\end{figure}
	Figure \ref{fig:5} shows the results of the CEA with \textit{predicted} adherence and estimated group memberships. All curves in each panel of Figure \ref{fig:5} are now ``smooth" since predicted adherence is a continuous variable that lies between 0 and 1 and each estimated CATE with predicted adherence $\hat{\tau}_{d,g}(\hat{a}(W_i,\hat{\tau}_d(1)))$ is obtained by multiplying $\hat{\tau}_{d,g}(1)$ with the predicted adherence obtained from the two-part model. The upper panel of \ref{fig:5} shows the optimal treatment allocation obtained when predicted adherence is used instead of perfect adherence, which leads to slight increases in the share of patients optimally treated by Mavyret and Epclusa when WTP is high ($>$ CAN\$2,600/pp). Panel (b) of Figure \ref{fig:5} shows that the status quo allocation under predicted adherence is dominated by the allocation shown in Panel (a) for every value of WTP that is between CAN\$900/pp and CAN\$2,010/pp, where the probability of reaching SVR increases by a total of 460.8 pp to 841.4 pp while total costs are reduced between CAN\$3.6M and CAN\$4.9M. The allocation shown in Panel (a) also dominates the status quo when WTP is between CAN\$2,010/pp and CAN\$2,080/pp, but with much smaller savings ($<$ CAN\$0.6M).
	\par
	Because using predicted adherence might change the optimal treatment allocation -- and therefore change total health benefits and costs -- it is possible to perform the CEA with optimal treatment allocation under perfect adherence as the benchmark. This is what is shown in Panel (c) of Figure \ref{fig:5}. Given that predicted adherence is always equal to or lower than perfect adherence, and that lower adherence always leads to lower health benefits, the aggregate ICER shown in Panel (c) of Figure \ref{fig:5} is always positive. This implies that using predicted adherence instead of perfect adherence always implies a trade-off where savings are realized, but at the expense of lower health benefits (or everyone remains in the null treatment regimen). Nonetheless, the CEA with predicted adherence is believed to yield more credible results since observed adherence is indeed imperfect for a non-negligible number of individuals in the sample.
	\par
	Finally, Figure \ref{fig:6} shows the result of the CEA with both predicted adherence and tree-based predicted group memberships, which is the analog of Figure \ref{fig:4} but with predicted adherence. As in Figure \ref{fig:4}, the aggregate ICER associated with the treatment allocation shown in Panel (a) when the status quo allocation is used as the benchmark is shown in Panel (b) of Figure \ref{fig:6}. This comparative exercise shows that using the tree-based predicted group memberships with predicted adherence always leads to trade-offs between total costs and total health benefits when compared to the status quo. When WTP is below CAN\$900/pp, both total costs and health benefits are negative compared to the status quo allocation (with predicted adherence), whereas total costs and health benefits become both positive if WTP $>$ CAN\$900/pp. Finally, Panel (c) of Figure \ref{fig:6} shows that the treatment allocation of Panel (a) is dominated by the treatment allocation obtained from the estimated group memberships under perfect adherence when WTP lies between CAN\$730/pp and CAN\$830/pp, and also when WTP $>$ CAN\$880/pp. These results show that, although using the tree-based predicted group memberships can be quite practical, it increases the probability of suboptimal treatment allocations compared to using the estimated group memberships, especially when imperfect adherence is taken into account.
	\section{Discussion and Conclusion}\label{sec:6}
	Decision-making processes are pervasive in daily life. In this paper, I developed a relatively simple methodology to help policymakers build policy rules that can guide decision-makers using observational data. Commonly known policy rules are often not mandatory; this can take the form of guidelines, evaluation grids, or even personal criteria for daily consumption choices. Even if they are not strictly enforced, policy rules remain important in practice if they serve as a baseline decision rule upon which agents can improve based on contextual information and personal judgment.
	\par
	In health care and health sciences in general, clinical judgment of health practitioners is often opposed to ``evidence-based medicine", where the former is assumed to be often unreliable by the proponents of the latter, whereas the latter is often criticized for not taking into account the contextual elements of any single choice problem \citep{alpert_why_2010,knaapen_evidence-based_2014}. As mentioned recently by several authors, those two visions are not dichotomous but should rather coexist for the benefit of the patients \citep{riviere_can_2020, subbiah_next_2023, einav_limitations_2024}. This includes leveraging observational data to better identify the real-world elements that influence treatment effectiveness at the population level, and then using those (\textit{macro}) elements to guide the decision-making process at the patient's (\textit{micro}) level without abstracting away from contextual information.
	\par
	Better clinical guidelines that are based on real-world evidence and that acknowledge the usefulness of discretionary decisions under specific circumstances are therefore warranted and necessary to improve the quality of care and to reduce the incidence of medical errors in practice. The case of DAA treatment choice for HIV/HCV co-infected patients is one of the many examples where observational data can be used to inform policymakers about the trade-offs that exist in the real world when formulating policy rules, especially when those policy rules can be used in a legal context to defend the various choices made by the implied decision-makers.
	\par
	The methodology developed in this paper to design (optimal) policy rules has several benefits~: it relies more practical assumptions than what is typically assume in the literature in policy learning, it advocates for the use of a specific clustering algorithm that leads to the consistent estimation of all CATEs in the sample, it easily deals with missing values in covariates, it can handle semi-continuous treatment variable in multi-action framework, it is more parsimonious than typical ML methods -- which leads to more precise estimates of all CATEs in the sample while reducing the probability of overfitting at the same time --, and it yields interpretable results that can be easily used to formulate updated guideline recommendations, as shown in the paper.
	\par
	Nonetheless, the proposed methodology has several limitations. First, the credibility and usefulness of the estimated policy rules will depend mostly on whether group-wise unconfoundedness and group-wise heterogeneity (Assumptions \ref{ass:1} and \ref{ass:3}) are both satisfied in practice. If they are not, the estimated policy rules might lead to suboptimal treatment allocations, which may then lead to worse health outcomes and/or higher costs than the status quo. Making the proposed estimation methodology robust to violations of these two assumptions is left to future research. Second, the computational burden of the proposed approach is relatively high, especially when a minimum eigenvalue $\lambda_{min}$ has to be enforced for the weighted K-means algorithm to yield reliable results. To reduce the computational burden, it might be a good idea to exclude all discrete variables from the set of covariates used to perform the weighted K-means algorithm. By doing so, all within-group estimated variances will be bounded away from zero, unless groups can be formed by a single observation. Discarding initial values that lead to such a degenerate result thus allows for the minimum eigenvalue $\lambda_{min}$ to be very close to zero and greatly alleviates the computation burden.
	\par
	Third, it requires enough within-group variation in order to identify all CATEs. This is, however, an issue with all ML techniques used to identify treatment effect heterogeneity, and the fact that the method is more parsimonious than most other ML techniques makes this issue less relevant in practice. Fourth and last, the method cannot identify contraindications that are being strictly followed by most prescribing physicians in their current practice. Although this limitation is not related to the proposed estimation strategy \textit{per se}, it is important to keep in mind that contraindications that are well known by physicians are likely to be absent from observational datasets. This may then lead to estimated ``optimal" treatment allocations that do not account for these contraindications, which may worsen health outcomes compared to the status quo. 
	\par
	Results of the empirical application show that it is possible to improve upon the current practice when treating co-infected HIV/HCV patients in Canada. Specifically, estimation results show that it would have been possible to reduce prescription costs in the CCC by an amount lying between CAN\$3.6 million and CAN\$4.9 million while still increasing the total health benefits just by reallocating treatment options among treated individuals in the cohort. However, such important savings are mostly due to the existence of Group 1, where it is estimated that 121 individuals in the sample have approximately an 80\% chance of spontaneously clearing HCV without taking any treatment. Because acute infection is temporary but still lasts around 6 months on average, not treating these patients may not be optimal if this leads to an important rise in the reinfection rate within the target population \citep{young_rate_2023}. Results from the empirical application also show that treatment reallocation could have substantially increased the probability of reaching SVR for every treated individual in the sample (+2,849.2 pp among 745 treated individuals under predicted adherence), but this would have come at the expense of slightly larger prescription costs (+CAN\$0.57 million).
	\par
	Although this paper focused exclusively on SVR, other outcomes are also important for both mono-infected HCV and co-infected HIV/HCV patients: probability of reinfection, degree of liver stiffness, mortality, etc. Still, it is known that being cured of HCV improves patients' quality of life and other nonvirological outcomes \citep{simmons_long-term_2015, park_impact_2021}. This is why SVR remains the main outcome when assessing the effectiveness of DAA regimens, although other outcomes should be included in future analyses.
	\par
	At last, it is important to recall that, according to the obtained results in the context of HIV/HCV co-infection, no DAA regimen is dominated by any other single DAA regimen or set of DAA regimens when the policymaker's WTP is below a certain level. In other words, the optimal treatment choice for any HIV/HCV co-infected patient is not trivial and depends on the characteristics of the patient, on the chosen decision criterion, and the policymaker's WTP. Such a result warrants caution when physicians prescribe an anti-HCV treatment for a never-treated, co-infected patient, especially when costs vary greatly across treatment options. Another interesting feature of the result is that optimal treatment choice for the physician does not seem to depend on the ART taken by the patient, although this could be a consequence of contraindications in the prescribing patterns of physicians regarding HIV/HCV co-infection, as mentioned above. More research is required to validate such an intuition.
	\bibliographystyle{chicago}
	\newpage
	\bibliography{Langevin_JMP_arxiv}
	\newpage
	\appendix
	\setstretch{1.15}
	\section{Appendix - Proofs}\label{app:A}
	\subsection{Proof of Lemma \ref{lem:1}}
	\begin{proof}
		The proof is divided into two parts. The first part is when the algorithm goes from the Estimation step to the Classification step. In this case, we have
		\begin{align*}
			d^{2TM}(\mathbf{z}^{(k-1)},\boldsymbol{\mu}^{(k)}, \Sigma^{(k)}) \to d^{2TM}(\mathbf{z}^{(k)},\boldsymbol{\mu}^{(k)}, \Sigma^{(k)}),
		\end{align*}
		for any given value of $\mathbf{z}^{(k-1)} \in \mathcal{Z}$, $\boldsymbol{\mu}^{(k)} \in \Theta^{S}$, and $\Sigma^{(k)} \in \Omega^S$. This step never increases $d^{2TM}(\cdot,\cdot,\cdot)$ given that    
		\begin{align*}
			d^{2TM}(\mathbf{z}^{(k-1)},\boldsymbol{\mu}^{(k)},\Sigma^{(k)}) &= \sum_{g\in \mathcal{S}}  \sum_{i=1}^N z^{(k-1)}_{i,g} ((X_i - \boldsymbol{\mu}^{(k)}_g)^{\top} \{\Sigma_g^{(k)}\}^{-1} (X_i - \boldsymbol{\mu}^{(k)}_g) + \log \det(\Sigma_g^{(k)})),\\
			&\ge \sum_{g\in \mathcal{S}} \sum_{i=1}^N z^{(k)}_{i,g} d_i^{2M}(\boldsymbol{\mu}^{(k)},\Sigma^{(k)}),\\
			&= d^{2TM}(\mathbf{z}^{(k)},\boldsymbol{\mu}^{(k)},\Sigma^{(k)}),
		\end{align*}
		given that $z^{(k)}_{i,g} = \mathbbm{1}[z_i^{(k)}=g]$ with $z_i^{(k)} = \arg \min_{g\in\mathcal{S}} d_i^{2M}(\boldsymbol{\mu}^{(k)}_{g}, \Sigma^{(k)}_g)$ for each $i \in [N]$.
		\par
		When going from the Classification step to the Estimation step, we have
		\begin{align*}
			d^{2TM}(\mathbf{z}^{(k)},\boldsymbol{\mu}^{(k)},\Sigma^{(k)}) \to d^{2TM}(\mathbf{z}^{(k)},\boldsymbol{\mu}^{(k+1)},\Sigma^{(k+1)}).
		\end{align*}
		In this case, it is convenient to define
		\begin{align*}
			d^{2TM}_g(\mathbf{z}^{(k)},\boldsymbol{\mu}_g^{(k)},\Sigma_g^{(k)}) &:= \sum_{i=1}^N z^{(k)}_{i,g}d_i^{2M}(\boldsymbol{\mu}_g^{(k)},\Sigma_g^{(k)}).
		\end{align*}
		Therefore, minimizing $d^{2TM}(z,\boldsymbol{\mu},\Sigma)$ with respect to $\boldsymbol{\mu}$ and $\Sigma$ is equivalent to minimizing $d^{2TM}_g(\mathbf{z},\boldsymbol{\mu}_g,\Sigma_g)$ with respect to $\boldsymbol{\mu}_g$ and $\Sigma_g$ separately for each value of $g \in \mathcal{S}$. Thus, we can write
		\begin{align*}
			\frac{\partial d^{2TM}_g(\mathbf{z}^{(k)},\boldsymbol{\mu}_g^{(k)},\Sigma_g^{(k)})}{\partial \boldsymbol{\mu}^{(k)}_{g} } &= \frac{\sum_{i=1}^N \partial z^{(k)}_{i,g} (X_i - \boldsymbol{\mu}^{(k)}_g)^{\top} \{\Sigma_g^{(k)}\}^{-1} (X_i - \boldsymbol{\mu}^{(k)}_g)}{\partial \boldsymbol{\mu}^{(k)}_{g}}.
		\end{align*}
		Using the fact that $\frac{\partial(X_i - \boldsymbol{\mu}^{(k)}_g)^{\top} \{\Sigma_g^{(k)}\}^{-1} (X_i - \boldsymbol{\mu}^{(k)}_g)}{\partial \boldsymbol{\mu}^{(k)}_{g}} = 2(X_i - \boldsymbol{\mu}^{(k)}_g)\{\Sigma_g^{(k)}\}^{-1}$, it is easy to show that
		\begin{align*}
			\frac{\partial d^{2TM}_g(\mathbf{z}^{(k)},\boldsymbol{\mu}_g^{(k)},\Sigma_g^{(k)})}{\partial \boldsymbol{\mu}^{(k)}_{g} } &= 0 \Leftrightarrow  \hat{\boldsymbol{\mu}}_g = \frac{\sum_{i=1}^N z^{(k)}_{i,g} X_i}{\sum_{j=1}^N z^{(k)}_{j,g}} \equiv \boldsymbol{\mu}^{(k+1)}_g.
		\end{align*}
		Note that $\hat{\boldsymbol{\mu}}_g$ is the (unique global) minimizer of $d_g^{2TM}(\mathbf{z}^{(k)},\boldsymbol{\mu}_g)$ since $d_g^{2TM}(\cdot,\cdot,\cdot)$ is unbounded from above when expressed as a function of $\boldsymbol{\mu}_g$. Therefore, we have that
		\begin{align*}
			d^{2TM}_g(\mathbf{z}^{(k)},\boldsymbol{\mu}_g^{(k)},\Sigma_g^{(k)}) \ge d^{2TM}_g(\mathbf{z}^{(k)},\boldsymbol{\mu}_g^{(k+1)},\Sigma_g^{(k)}),
		\end{align*}
		for any $g \in \mathcal{S}$.
		\par
		From the result of Appendix \ref{lem:S4}, we also have that
		\begin{align*}
			d^{2TM}_g(\mathbf{z}^{(k)},\boldsymbol{\mu}_g^{(k+1)},\Sigma_g^{(k)}) \ge d^{2TM}_g(\mathbf{z}^{(k)},\boldsymbol{\mu}_g^{(k+1)},\Sigma_g^{(k+1)}),
		\end{align*}
		where $\Sigma_g^{(k+1)}$ is the unique global minimizer of $d^{2TM}_g(\mathbf{z}^{(k)},\boldsymbol{\mu}_g^{(k+1)},\Sigma_g)$ with respect to $\Sigma_g$  according to the result of Appendix \ref{lem:S4}. However, such a result requires that $\Sigma_g^{(k+1)}$ is positive-definite for all $g \in \mathcal{S}$, which might not always be the case in practice.
	\end{proof}
	\subsection{Proof of Theorem \ref{th:1}}
	\begin{proof}
		If we combine the results of Lemma S.2 and Lemma S.3 from \cite{langevin_bias-reduced_2026} with the fact that $\mathbb{E}[ d_i^{2M}(s_i)] = p + \log \det(\Sigma^0_{s_i})$ with $d_i^{2M}(s_i) \equiv d_i^{2M}(\boldsymbol{\mu}^0_{s_i},\Sigma^0_{s_i})$, this leads to
		\begin{align*}
			\mathbb{P}[d^{2M}(s_i) \ge d^{2M}(j)] &\le \frac{p(1+\frac{1}{2}\sqrt{M} + p^{-1}\log \det(\Sigma^0_{s_i}))}{2\sum_{l=1}^p d_{js_i,ll}  + \sum_{l=1}^p\sum_{m\le l} (v_{js_i,lm})^2 - p +  \sum_{l=1}^p (b_{js_i,l})^2 + \log \det(\Sigma^0_{j})}.
		\end{align*}
		Since $d_{js,ll} > 0$ for any $(l,j,s) \in \{1,2...,p\} \times \mathcal{S} \times \mathcal{S}\backslash j$, there exists a finite $K>0$ such that
		\begin{align*}
			\mathbb{P}[d^{2M}(s_i) \ge d^{2M}(j)] &\le \frac{p(1+\frac{1}{2}\sqrt{M}+ p^{-1} \sum_{m=1}^p \log \lambda_{s_i,m})}{2pK + p(p+1)K/2 - p +  pK + \sum_{m=1}^p \log \lambda_{j,m}},\\
			&= \frac{1+\frac{1}{2}\sqrt{M}+ \overline{\log \lambda_{s_i}}}{7K/2 + pK - 1 + \overline{\log \lambda_{j}}},\\
			&= O(p^{-1}),
		\end{align*}
		where $\lambda_{j,m}$ is the $m^{th}$ eigenvalue of $\Sigma_j^0$ (when ranked in ascending order), and where $\overline{\log \lambda_{j}} = p^{-1} \sum_{m=1}^p \log \lambda_{j,m}$ for any $j \in \mathcal{S}$. The first inequality uses the fact that $\det(\Sigma^0_{j}) = \prod_{m=1}^p \lambda_{j,m}$. Note also that $\lambda_{j,m} = O(1)$ as $p \to \infty$ (so $\overline{\log \lambda_{j}}$ as well) for any pair $(j,m) \in \mathcal{S} \times \{1,2...,p\}$ given that
		\begin{align*}
			\lambda_{j,m} &= \frac{|| W_j^{-1}\nu_{jm} ||^2}{|| \nu_{jm} ||^2} = \frac{\sum_{i=1}^p ( \sum_{l=1}^i \tilde{w}_{j,il} \nu_{jm,l})^2}{\sum_{l=1}^p (\nu_{jm,l})^2},\\
			&= \frac{\sum_{i=1}^p \sum_{l=1}^i (\tilde{w}_{j,il}\nu_{jm,l})^2 + 2\sum_{i=1}^p \sum_{l=1}^i \sum_{s<l} \tilde{w}_{j,il}\tilde{w}_{j,is} \nu_{jm,l}\nu_{jm,s} }{\sum_{l=1}^p (\nu_{jm,l})^2} = O(1),
		\end{align*}
		as $p \to \infty$ since $\tilde{w}_{j,il} < \infty$ for any $(j,i,l) \in \mathcal{S}\times \{1,2...,p\}^2$ by Assumption \ref{ass:7}, where $\nu_{jm} = (\nu_{jm,1},...,\nu_{jm,p})^{\top}$ is the eigenvector associated to $\lambda_{j,m}$, $\tilde{w}_{j,il}$ is the element located at the $i^{th}$ row and $l^{th}$ column of $W^{-1}_j$, and where $||\cdot||^2$ is the squared Euclidean distance. This also implies that there exists a constant $0 < \tilde{K} \le 1$ such that
		\begin{align*}
			\mathbb{P}[\cap_{g=1}^S \hat{z}_{i,g}(\boldsymbol{\mu}^0,\Sigma^0) = s_{i,g}]
			&= \prod_{j\ne s_i} (1 - \mathbb{P}[ d_i^{2M}(s_i) \ge d_i^{2M}(j)]),\\
			&\ge (1 - \tilde{K}p^{-1})^{(S-1)}.
		\end{align*}
		where $\hat{z}_{i,g}(\boldsymbol{\mu}^0,\Sigma^0)  = \mathbbm{1}[\hat{z}_{i}(\boldsymbol{\mu}^0,\Sigma^0)  = g]$.
		Given that both $\tilde{K}$ and $S$ are finite constants that do not depend on $p$, we have that $(1 - \tilde{K}p^{-1})^{(S-1)} \to 1$ as $p\to \infty$, which implies that 
		\begin{align*}
			\mathbb{P}[\cup_{g=1}^S \hat{z}_{i,g}(\boldsymbol{\mu}^0,\Sigma^0) \ne s_{i,g}] = O(p^{-1}) \Leftrightarrow \mathbb{P}[\hat{z}_{i}(\boldsymbol{\mu}^0,\Sigma^0) \ne s_{i}] = O(p^{-1}) 
		\end{align*}
	\end{proof}
	\subsection{Proof of Theorem \ref{th:2}}
	\begin{proof}
		As in the proof of Lemma \ref{lem:1}, we define
		\begin{align*}
			d^{2TM}_g(\mathbf{z},\boldsymbol{\mu}_g,\Sigma_g) &:= \sum_{i=1}^N z_{i,g}((X_i-\boldsymbol{\mu}_g)^{\top}\Sigma_g^{-1}(X_i-\boldsymbol{\mu}_g)+ \log \det(\Sigma_g)),
		\end{align*}
		for any $\mathbf{z} = (z_1,...,z_N) \in \mathcal{Z}$ and where $z_{i,g} = \mathbbm{1}[z_i=g]$. Therefore, minimizing $d^{2TM}(\mathbf{z},\boldsymbol{\mu},\Sigma)$ with respect to $\boldsymbol{\mu}$ and $\Sigma$ is equivalent to minimizing $d^{2TM}_g(\mathbf{z},\boldsymbol{\mu}_g,\Sigma_g)$ with respect to $\boldsymbol{\mu}_g$ and $\Sigma_g$ separately for each value of $g \in \mathcal{S}$. First, we know from the proof of Lemma \ref{lem:1} that
		\begin{align*}
			d^{2TM}_g(\mathbf{z},\boldsymbol{\mu}_g,\Sigma_g) \ge \sum_{i=1}^N z_{i,g}((X_i-\hat{\boldsymbol{\mu}}_g)^{\top}\Sigma_g^{-1}(X_i-\hat{\boldsymbol{\mu}}_g)+ \log \det(\Sigma_g)),
		\end{align*}
		for any $\mathbf{z} \in \mathcal{Z}$ and any $\Sigma_g \in \Omega$, where $\hat{\boldsymbol{\mu}}_g = \frac{\sum_{i=1}^N z_{i,g}  X_{i}}{N_g}$ with $N_g = \sum_{i=1}^N z_{i,g}$. Note that we can also define $\hat{\boldsymbol{\mu}}_g$ as follows
		\begin{align*}
			\hat{\boldsymbol{\mu}}_g := \frac{\sum_{j=1}^S \sum_{i=1}^N z_{i,g} s_{i,j}   X_{i}}{N_g} \equiv \sum_{j=1}^S  \alpha_{gj} \bar{X}_{gj},
		\end{align*}
		where $s_{i,j} = \mathbbm{1}[s_i = j]$, $\alpha_{gj} = \frac{N_{gj}}{N_g} \equiv \frac{\sum_{i=1}^N z_{i,g}s_{i,j} }{N_g}$ and where $\bar{X}_{gj} = \frac{\sum_{i=1}^N z_{i,g}s_{i,j} X_i}{N_{gj}}$. Using the WLLN and the fact that $\mathbb{E}[\bar{X}_{gj}] = \boldsymbol{\mu}^0_j$ for fixed $z_{i,g}$ and fixed $s_{i,j}$, we can write that
		\begin{align*}
			\hat{\boldsymbol{\mu}}_g  \xrightarrow{p} \sum_{j=1}^S   \bar{\alpha}_{gj} \times \boldsymbol{\mu}^0_j \equiv \bar{\boldsymbol{\mu}}_g,
		\end{align*}
		as $N \to \infty$, where $\bar{\alpha}_{gj} = \lim_{N\to \infty} \frac{\sum_{i=1}^N s_{i,j} z_{i,g}}{ \sum_{i=1}^N z_{i,g}}$. Using the trace operator and the definition of $d^{2TM}_g(\mathbf{z},\boldsymbol{\mu}_g,\Sigma_g)$, we can now write that
		\begin{align*}
			\plim_{N \to \infty} d^{2TM}_g(\mathbf{z},\boldsymbol{\mu}_g,\Sigma_g) &= \plim_{N \to \infty} \text{trace}(\sum_{i=1}^N  z_{i,g}(X_i-\boldsymbol{\mu}_g)^{\top}\Sigma_g^{-1}(X_i-\boldsymbol{\mu}_g))+ \sum_{i=1}^N  z_{i,g}\log \det(\Sigma_g),\\
			&\ge \plim_{N \to \infty} \text{trace}(\sum_{i=1}^N z_{i,g}(X_i-\bar{\boldsymbol{\mu}}_g)^{\top}\Sigma_g^{-1}(X_i-\bar{\boldsymbol{\mu}}_g))+ N_g \log \det(\Sigma_g),\\
			&\equiv \plim_{N \to \infty} \text{trace}(\Sigma_g^{-1} \bar{C}_{g})+ N_g \log \det(\Sigma_g),
		\end{align*}
		where $\bar{C}_{g} =  \sum_{i=1}^N z_{i,g} (X_i-\bar{\boldsymbol{\mu}}_g)(X_i-\bar{\boldsymbol{\mu}}_g)^{\top}$. Note that the asymptotic minimum $\bar{\boldsymbol{\mu}} = (\bar{\boldsymbol{\mu}}_1,...,\bar{\boldsymbol{\mu}}_S)$ is unique unless $\bar{\boldsymbol{\mu}}_g = \bar{\boldsymbol{\mu}}_j$ for any pair $(g,j) \in \mathcal{S}\times \mathcal{S}\backslash g$. From the proof of Appendix \ref{lem:S4}, we also know that
		\begin{align*}
			\text{trace}(\Sigma_g^{-1} \bar{C}_{g}) +  N_{g} \log \det(\Sigma_g) &\ge \text{trace}( \{N_{g}^{-1} \bar{C}_{g}\}^{-1} \bar{C}_{g}) + N_{g} \log \det(N_{g}^{-1} \bar{C}_{g}),\\
			&= N_gp + N_{g} \log \det(N_{g}^{-1} \bar{C}_{g}),
		\end{align*}
		where each $\bar{C}_{g}$ is assumed to lie in the interior of $\Omega$ as $N \to \infty$ and $p$ fixed (this typically requires that $\sum_{i=1}^N z_{i,g} \to \infty$ as $N \to \infty$). If we sum over all values of $g \in \mathcal{S}$ and divide by $N$, then we have that
		\begin{align*}
			\plim_{N \to \infty} N^{-1}d^{2TM}(\mathbf{z},\boldsymbol{\mu},\Sigma) &\ge \plim_{N \to \infty} N^{-1}\sum_{g \in \mathcal{S}} ( N_gp + N_{g} \log \det(N_{g}^{-1} \bar{C}_{g}) ),\\
			&= p  +  \sum_{g \in \mathcal{S}}  \bar{\alpha}_g \log \det( \bar{\Sigma}_g ),
		\end{align*}
		where $\bar{\alpha}_g = \lim_{N\to \infty} \frac{N_g}{N}$, and where $\bar{\Sigma}_g = \plim_{N \to \infty} N_{g}^{-1} \bar{C}_{g}$ for any $g \in \mathcal{S}$, assuming that this limit exists for any $\mathbf{z} \in \mathcal{Z}$ such that $\bar{\Sigma} \in \Omega^S$. Therefore, we can write that
		\begin{align*}
			(\bar{\boldsymbol{\mu}}, \bar{\Sigma}) = \arg \min_{\boldsymbol{\mu} \in \Theta^S, \Sigma \in \Omega^{S}} \plim_{N\to \infty} d^{2TM}(\mathbf{z},\boldsymbol{\mu},\Sigma),
		\end{align*}
		for any $\mathbf{z} \in \mathcal{Z}$ such that $\bar{\Sigma} \in \Omega^S$, with $\bar{\boldsymbol{\mu}} = (\bar{\boldsymbol{\mu}}_1,...,\bar{\boldsymbol{\mu}}_S)$ and $\bar{\Sigma} = (\bar{\Sigma}_1,...,\bar{\Sigma}_S)$.
		\par
		When almost every observation is correctly classified, the misclassification error goes to zero as the sample size increases. This implies that
		\begin{align}
			\bar{\alpha}_{g(g)} = \lim_{N\to \infty} \frac{\sum_{i=1}^N s_{i,(g)} z_{i,g}}{ \sum_{i=1}^N z_{i,g}} = 1,
		\end{align}
		where $(g)$ is a suitable permutation in the label $g$, and where $\bar{\alpha}_{gj} = 0$ for all $j = \mathcal{S}\backslash (g)$. This directly leads to $\bar{\boldsymbol{\mu}}_g = \boldsymbol{\mu}^0_g$ given that $\bar{\boldsymbol{\mu}}_g = \sum_{j\in \mathcal{S}} \bar{\alpha}_{gj}\times \boldsymbol{\mu}^0_g$. Moreover, we also have that
		\begin{align*}
			\bar{\Sigma}_g &= \plim_{N\to \infty} N_{g}^{-1} \bar{C}_{g},\\
			&= \plim_{N\to \infty} N_{g}^{-1} \sum_{i=1}^N z_{i,g} (X_i-\boldsymbol{\mu}^0_g)(X_i-\boldsymbol{\mu}^0_g)^{\top},\\
			&= \plim_{N\to \infty} N_{g}^{-1} \sum_{i=1}^N z_{i,g} s_{i,(g)} (X_i-\boldsymbol{\mu}^0_g)(X_i-\boldsymbol{\mu}^0_g)^{\top} + N_{g}^{-1} \sum_{j\ne (g)} \sum_{i=1}^N z_{i,g} s_{i,j} (X_i-\boldsymbol{\mu}^0_g)(X_i-\boldsymbol{\mu}^0_g)^{\top},\\
			&= \Sigma^0_g +   \plim_{N\to \infty} \sum_{j\ne (g)}  \frac{\sum_{i=1}^N  z_{i,g} s_{i,j} (X_i-\boldsymbol{\mu}^0_g)(X_i-\boldsymbol{\mu}^0_g)^{\top}}{\sum_{i=1}^N z_{i,g}},\\
			&=  \Sigma^0_g.
		\end{align*}
		The fourth equality comes from the fact that $N_{g}^{-1} \sum_{i=1}^N z_{i,g} s_{i,(g)} (X_i-\boldsymbol{\mu}^0_g)(X_i-\boldsymbol{\mu}^0_g)^{\top}$ is a consistent estimator of $\Sigma^0_g$ if $z_{i,g} = s_{i,(g)}$ for almost every $i \in [N]$. The last equality comes from the fact that $(X_i-\boldsymbol{\mu}^0_g)(X_i-\boldsymbol{\mu}^0_g)^{\top}$ is a finite quantity for any $i \in [N]$ by assumption \ref{ass:7}, and that $\bar{\alpha}_{gj} = 0$ for any $j \ne (g)$ if $z_{i,g} = s_{i,(g)}$ for almost every $i \in [N]$.
	\end{proof}
	\subsection{Proof of Corollary \ref{cor:11}}
	\begin{proof}
		Following the same logic as in the proof of Theorem \ref{th:2}, we can write that
		\begin{align*}
			d^{2TM}_g(\mathbf{z},\boldsymbol{\mu}_g,\Sigma_g) 
			&=  \text{trace}(\sum_{i=1}^N  z_{i,g}(X_i-\boldsymbol{\mu}_g)^{\top}\Sigma_g^{-1}(X_i-\boldsymbol{\mu}_g))+ \sum_{i=1}^N  z_{i,g}\log \det(\Sigma_g),\\
			& = \text{trace}( \sum_{j \in \mathcal{S}} \sum_{i=1}^N s_{i,j} z_{i,g}(X_i-\boldsymbol{\mu}_g)^{\top}\Sigma_g^{-1}(X_i-\boldsymbol{\mu}_g))+  \sum_{j \in \mathcal{S}} \sum_{i=1}^N  s_{i,j} z_{i,g}\log \det(\Sigma_g),\\
			&= \sum_{j \in \mathcal{S}}  \text{trace}( \Sigma_g^{-1} C_{gj})+ \sum_{j \in \mathcal{S}}N_{gj} \log \det(\Sigma_g) ,
		\end{align*}
		where $C_{gj} = \sum_{i=1}^N z_{i,g} s_{i,j} (X_i-\boldsymbol{\mu}_g)(X_i-\boldsymbol{\mu}_g)^{\top}$, and where $\Sigma_g \in \Omega$ for every $g \in \mathcal{S}$. By deriving $C_{gj}$ with respect to $\boldsymbol{\mu}_g$, we have that
		\begin{align*}
			d^{2TM}_g(\mathbf{z},\boldsymbol{\mu}_g,\Sigma_g) \ge  \sum_{j \in \mathcal{S}} \text{trace}( \Sigma_g^{-1} \bar{C}_{gj})+ \sum_{j \in \mathcal{S}}N_{gj} \log \det(\Sigma_g)
		\end{align*}
		for any $\mathbf{z} \in \mathcal{Z}$ and any $\Sigma_g \in \Omega$, where $\bar{C}_{gj} = \sum_{i=1}^N z_{i,g} s_{i,j} (X_i-\bar{\boldsymbol{\mu}}_{gj})(X_i-\bar{\boldsymbol{\mu}}_{gj})^{\top}$ with $\bar{\boldsymbol{\mu}}_{gj} =  \frac{\sum_{i=1}^N z_{i,g} s_{i,j} X_{i}}{N_{gj}}$ and $N_{gj} = \sum_{i=1}^N z_{i,g} s_{i,j}$. From Appendix \ref{lem:S4}, we also have that
		\begin{align*}
			d^{2TM}_g(\mathbf{z},\boldsymbol{\mu}_g,\Sigma_g) &\ge  \sum_{j \in \mathcal{S}} \text{trace}( \{N_{gj}^{-1}\bar{C}_{gj}\}^{-1} \bar{C}_{gj} )+ \sum_{j \in \mathcal{S}} N_{gj} \log \det(N_{gj}^{-1}\bar{C}_{gj}),\\
			&= N_{g}p + \sum_{j \in \mathcal{S}} N_{gj} \log \det(\bar{\Sigma}_{gj}),
		\end{align*}
		for any $\mathbf{z} \in \mathcal{Z}$ such that $\bar{\Sigma}_{gj} \in \Omega$ for every $(g,j) \in \mathcal{S}^2$, with $\bar{\Sigma}_{gj} = N_{gj}^{-1}\bar{C}_{gj}$. However, the minimizers $(\bar{\boldsymbol{\mu}}_{g1},...,\bar{\boldsymbol{\mu}}_{gS})$ and $(\bar{\Sigma}_{g1},...,\bar{\Sigma}_{gS})$ cannot be obtained by minimizing $d^{2TM}_g(\mathbf{z},\boldsymbol{\mu}_g,\Sigma_g)$ with respect to $\boldsymbol{\mu}_g$ and $\Sigma_g$ since this implies $S$ different mean vectors and covariance matrices within the $g^{th}$ estimated group. Moreover, those minimizers are built arbitrarily from partitioning each working group into $S$ subgroups, which might lead to singular $\bar{\Sigma}_{gj}$ for a given pair $(g,j) \in \mathcal{S}^2$.
		\par
		Notwithstanding that, let's assume that $N \to \infty$ implies that $N_{gj} \to \infty$ for all pairs $(g,j) \in \mathcal{S}^2$, and then sum over the $S$ estimated groups and divide by $N$. In this case, we get that
		\begin{align*}
			\plim_{N \to \infty} N^{-1}d^{2TM}(\mathbf{z},\boldsymbol{\mu},\Sigma) &\ge p + \sum_{g\in \mathcal{S}} \sum_{j \in \mathcal{S}} \pi_{gj} \log \det(\Sigma^0_j),\\
			&= p +  \sum_{j \in \mathcal{S}} \pi_{.j} \log \det(\Sigma^0_j) ,
		\end{align*}
		given that $\bar{\Sigma}_{gj} \xrightarrow{p} \Sigma_j^0$ as $N_{gj} \to \infty$ by construction under the continuous mapping theorem, and where $\pi_{gj} = \lim_{N \to \infty} \frac{N_{gj}}{N}$ and $\pi_{.j} = \lim_{N \to \infty} \frac{\sum_{i=1}^N s_{i,j}}{N}$. Note that this ``infeasible" asymptotic lower bound does not depend on the estimated groups anymore, but only on the true groups formed by the variable $s_i$. 
		\par
		From the proof of Theorem \ref{th:2}, we have the following ``feasible" asymptotic lower bound
		\begin{align*}
			\plim_{N \to \infty} N^{-1}d^{2TM}(\mathbf{z},\boldsymbol{\mu},\Sigma) &\ge p  +  \sum_{g \in \mathcal{S}}  \bar{\alpha}_g \log \det( \bar{\Sigma}_g ),
		\end{align*}
		where $\bar{\alpha}_g \lim_{N \to \infty} \frac{N_g}{N}$, and where $\mathbf{z}$ is such that $\bar{\Sigma} \in \Omega^S$. From the same proof, we also know that $\bar{\Sigma}_g \xrightarrow{p} \Sigma^0_g$ if almost every observation in the sample is correctly classified. In this case, we have that $\pi_{.(g)} = \lim_{N \to \infty} \frac{\sum_{i=1}^N s_{i,(g)}}{N} = \lim_{N \to \infty} \frac{\sum_{i=1}^N z_{i,g}}{N}= \bar{\alpha}_g$. Therefore, we have that
		\begin{align*}
			\plim_{N \to \infty} N^{-1}d^{2TM}(\mathbf{z},\boldsymbol{\mu},\Sigma) &\ge p  +  \sum_{g \in \mathcal{S}}  \bar{\alpha}_g \log \det( \bar{\Sigma}_g ),\\
			&\equiv p +  \sum_{g \in \mathcal{S}} \pi_{.g} \log \det(\Sigma^0_g)
		\end{align*}
		and the feasible asymptotic lower bound coincides with the infeasible one. We can also write that
		\begin{align*}
			( \mathbf{z}, {\boldsymbol{\mu}}^0,{\Sigma}^0) = \arg \min_{\boldsymbol{\mu} \in \Theta^{S},\Sigma \in \Omega^S}   \plim_{N \to \infty} d^{2TM}(\mathbf{z},\boldsymbol{\mu},\Sigma),
		\end{align*}
		if $\mathbf{z}$ is such that $z_{i} = s_{i}$ for almost every observation in the sample provided a suitable permutation in the group labels. Using the definition of the misclassification rate, we also have that
		\begin{align*}
			\frac{\sum_{i=1}^N \mathbbm{1}[\hat{z}_{i}({\boldsymbol{\mu}}^0,{\Sigma}^0) \ne s_i]}{N} \xrightarrow[]{p} \text{Pr}[\hat{z}_{i}({\boldsymbol{\mu}}^0,{\Sigma}^0) \ne s_i]
		\end{align*}
		as $N \to \infty$ by the WLLN, with $\text{Pr}[\hat{z}_{i}({\boldsymbol{\mu}}^0,{\Sigma}^0) \ne s_i]$ converging almost surely to zero as $p \to \infty$ by Theorem \ref{th:1}. Combining this result with the last equation completes the proof.
	\end{proof}
	\subsection{Proof of Corollary \ref{cor:1}}
	\begin{proof}
		Assumptions \ref{ass:6} and \ref{ass:8} imply that 
		\begin{align*}
			\hat{\tau}_{d,z_i}(a,x) - \tau_{d,(g)}(a,x) = O_p(\tilde{N}_g^{-1/2}),
		\end{align*}
		for any $(d,a,g,x) \in \mathcal{D} \times [0,1] \times \mathcal{S} \times \mathcal{X}$ if $z_i = (s_i)$ for every $i \in [N]$ as $N \to \infty$. Using the conclusion of Theorem 2 from \cite{langevin_bias-reduced_2026}, the minimization of the TSMD when $p/N \to \infty$ as $N,p \to \infty$ will lead to the uniform consistency of the Mahalanobis distance classifier, which implies that every observation in the sample is correctly classified. Therefore, this corollary is a direct consequence of the other theoretical results presented in this paper under Assumptions \ref{ass:1} and \ref{ass:3}-\ref{ass:8} when every observation is correctly classified. However, there is no guarantee that the global minimum of the TSMD will be located close to the true group memberships $\mathbf{s}$ in finite samples when the number of covariates $p$ is small relative to the sample size $N$. This is why all results obtained from finite samples should be validated appropriately.
	\end{proof}
	\subsection{Lemma S.1}\label{lem:S4}
	\textit{Let the total squared Mahalanobis distance be defined as in Definition \ref{def:1}. Then we have      that 
		\begin{align*}
			d^{2TM}(\mathbf{z}^{(k)}, \boldsymbol{\mu}^{(k+1)},\Sigma^{(k)}) \ge d^{2TM}(\mathbf{z}^{(k)}, \boldsymbol{\mu}^{(k+1)},\Sigma^{(k+1)}),
		\end{align*}
		provided that every element in $\Sigma^{(k)} = (\Sigma^{(k)}_1,...,\Sigma^{(k)}_S)$ and $\Sigma^{(k+1)} = (\Sigma^{(k+1)}_1,...,\Sigma^{(k+1)}_S)$ is positive-definite.}
	\begin{proof}
		Using the definition of the squared total distance $d^{2TM}$, we can write
		\begin{align*}
			d_g^{2TM}(\mathbf{z}^{(k)}, \boldsymbol{\mu}_g^{(k+1)},\Sigma_g^{(k)}) &= \sum_{i=1}^N    z^{(k)}_{i,g} (X_i - \boldsymbol{\mu}^{(k+1)}_g)^{\top} \{\Sigma_g^{(k)}\}^{-1} (X_i - \boldsymbol{\mu}^{(k+1)}_g) + N^{(k)}_g \log \det(\Sigma_g^{(k)}),
		\end{align*}
		Since each term on the RHS of the above equation is a scalar, we can apply the trace operator on the first term and write
		\begin{align*}
			d_g^{2TM}(\mathbf{z}^{(k)}, \boldsymbol{\mu}_g^{(k+1)},\Sigma_g^{(k)}) &=  \text{trace}(\{\Sigma_g^{(k)}\}^{-1} C^{(k)}_g) + N^{(k)}_g \log \det(\Sigma_g^{(k)}),
		\end{align*}
		where $C^{(k)}_g = \sum_{i=1}^N z^{(k)}_{i,g} (X_i - \boldsymbol{\mu}^{(k+1)}_g)(X_i -\boldsymbol{\mu}^{(k+1)}_g)^{\top}$ and $N^{(k)}_g = \sum_{i=1}^N  z^{(k)}_{i,g}$. Using the properties of the trace and the determinant operators, we can then write 
		\begin{align*}
			d_g^{2TM}(\mathbf{z}^{(k)}, \boldsymbol{\mu}_g^{(k+1)},\Sigma_g^{(k)}) &= \text{trace}(B^{(k)}_g) + N^{(k)}_g \log \det(\Sigma_g^{(k)}),\\
			&= \text{trace}(B^{(k)}_g) + N^{(k)}_g( \log \det(B^{(k)}_g) -  \log \det(B^{(k)}_g) +  \log \det(\Sigma_g^{(k)})),\\
			&= \text{trace}(B^{(k)}_g) - N^{(k)}_g (\log \det(B^{(k)}_g) - \log \det(C^{(k)}_g)),\\
			&= \sum_{j=1}^p \lambda_j - N^{(k)}_g \sum_{j=1}^p \log(\lambda_j)  + N^{(k)}_g\log \det(C^{(k)}_g).
		\end{align*}
		where $B^{(k)}_g= \{\Sigma_g^{(k)}\}^{-1} C^{(k)}_g$, and where $\lambda_j$ is the $j^{th}$ eigenvalue of the matrix $B^{(k)}_g$. Using standard calculus, it is easy to see that the only optimum of $d_g^{2TM}(\cdot,\cdot,\cdot)$ with respect to $B^{(k)}_g$ is attained if and only if $\lambda_j = N^{(k)}_g$ for each value of $j \in \{1,2...,p\}$. Therefore, the eigendecomposition of the optimal matrix $B^{(k)*}_g$ yields the following optimal covariance matrix~:
		\begin{align*}
			\Sigma_g^{(k)*} &= C^{(k)}_g\{B^{(k)*}_g\}^{-1},\\
			&= C^{(k)}_g (Q I_p N^{(k)}_g Q^{-1})^{-1} ,\\
			&= (N^{(k)}_g)^{-1} C^{(k)}_g \equiv \Sigma_g^{(k+1)},
		\end{align*}
		where $Q$ is the matrix of eigenvectors of $B^{(k)*}_g$. Note that $\Sigma_g^{(k+1)}$ is the unique global minimizer of $d_g^{2TM}(\cdot,\cdot,\cdot)$ given that $d_g^{2TM}(\mathbf{z}^{(k)}, \boldsymbol{\mu}_g^{(k+1)},\Sigma_g^{(k)})$ is unbounded from above when expressed as a function of $\Sigma_g^{(k)}$. Finally, note that if $\Sigma_g^{(k)}$ is not positive-definite, then the last term in the squared Mahalanobis distance will be equal to $-\infty$ (or be undefined) due to the non-positive eigenvalue(s) of $\Sigma_g^{(k)}$. A similar logic holds for $\Sigma_g^{(k+1)}$.
	\end{proof}
	
	\newpage
	\section{The weighted K-means Algorithm}\label{app:B}
	\begin{algorithm}[h]
		\caption{The weighted K-means algorithm}\label{alg:Kmeans}
		\setstretch{1.1}
		Let $\mathbf{z}^{(0)} \in \mathcal{S}^N$ be an initial set of group memberships for every observation in the sample. The weighted K-means algorithm consecutively iterates the following two steps until $\mathbf{z}^{(k)} = \mathbf{z}^{(k+1)}$.
		\begin{enumerate}
			\setstretch{1.1}
			\item The Estimation Step~: 
			\begin{enumerate}
				\item Update $\boldsymbol{\mu}^{(k)}_{g}$ and $\Sigma^{(k)}_g$ for each $g \in\mathcal{S}$ using
				\begin{align*}
					\boldsymbol{\mu}^{(k+1)}_{g} &= \frac{\sum_{i=1}^N  z^{(k)}_{i,g}  X_i}{\sum_{j=1}^N z^{(k)}_{j,g} }, & \Sigma^{(k+1)}_g = 
					\frac{\sum_{i=1}^N z^{(k)}_{i,g} (X_i - \boldsymbol{\mu}^{(k+1)}_g)(X_i - \boldsymbol{\mu}^{(k+1)}_g)^{\top}}{\sum_{j=1}^N z^{(k)}_{j,g}},
				\end{align*}
				where $z^{(k)}_{i,g} = \mathbbm{1}[z^{(k)}_{i} = g]$, and where the superscript $(k)$ denotes the $k^{th}$ iteration. 
				\item Use eigenvalue decomposition to enforce the minimum eigenvalue $\lambda_{min} > 0$ in $\Sigma^{(k+1)}_g \ \forall g \in \mathcal{S}$.
			\end{enumerate}
			
			\item The Classification Step~:
			\begin{enumerate}
				\setstretch{1.1}
				\item Compute $d_i^{2M}(\boldsymbol{\mu}^{(k+1)}_{g},\Sigma_g^{(k+1)})$ for each pair $(i,g) \in [N] \times \mathcal{S}$.
				\item Obtain the classifier $z^{(k+1)}_{i} = \arg \min_{g\in\mathcal{S}} d_i^{2M}(\boldsymbol{\mu}^{(k+1)}_{g}, \Sigma_g^{(k+1)})$ for each $i \in [N]$.
			\end{enumerate}
			\setstretch{1.1}
			
		\end{enumerate}
	\end{algorithm}
	\newpage
	\section{Conditions under which group-wise unconfoundedness is likely to hold}\label{app:C}
	Let's assume that the observed outcome $y_{iT}$ for every observation $i = 1,...,N$ at the post-treatment period $T$ can be written as follows~:
	\begin{align*}
		Y_{iT} &= \beta_{0,s_i} + \sum_{j\in \{1,...,D\}}\tau_{j,s_i} A_{i,j} +  \epsilon_i,\\
		&= \beta_{0,s_i} + \sum_{j\in\{1...,D\}}(\tau_{j}+\eta_{j,s_i}) A_{i,j} +  \epsilon_i,
	\end{align*}
	where $\beta_{0,s_i}$ is intercept value for group $s_i$, $\tau_{j}$ corresponds to the true \textit{average treatment effect} for treatment option $j$ under full adherence, $\eta_{j,s_i}$ is the option $j$ treatment effect heterogeneity that is experienced by individuals belonging to group $s_i$ and is treated as a random variable that can take $S$ different values for each $j \in \mathcal{D}$, $\epsilon_i$ is an error term that is normally distributed with mean zero and variance $\sigma^2_{s_i}$, and where $A_{i,j}$ is defined as in Section \ref{sec:21}. Note that the introduction of additional covariates in the model may require the use of sample splitting techniques to avoid regularization bias \citep{chernozhukov_doubledebiased_2018}. This is why this exercise focuses on the case without covariates. 
	\par
	If $\Pr[s_i = g|X_i=x] \ne \Pr[s_i = g]$ for every $g \in \mathcal{S}$, then Assumption \ref{ass:1} is satisfied because $\tau_{d,g}(a,X_i) \equiv \tau_{d,g}a$ does not depend on $X_i$. If the health outcome at period $T$ is estimated \textit{unconditionally} as a function of the $d^{th}$ treatment option exclusively, we can write that
	\begin{align*}
		Y_{iT} &= \tau_{d}A_{i,d} + \tilde{\epsilon}_{i},
	\end{align*}
	where $\tilde{\epsilon}_{i} = \eta_{d,s_i} A_{i,d} + \epsilon_i$. In this case, we can develop the expected value of the estimated average treatment effect as follows
	\begin{align*}
		\mathbb{E}[\hat{\tau}_d] &= \tau_d + \mathbb{E}\left[ \frac{\sum_{i=1}^N (A_{i,d} - \bar{A}_{d})\tilde{\epsilon}_{i}} {\sum_{j=1}^N (A_{j,d} - \bar{A}_{d})A_{j,d}}  \right] ,\\
		&= \tau_d +  \mathbb{E}\left[ \frac{\sum_{i=1}^N \eta_{d,s_i}(A_{i,d} - \bar{A}_{d}) A_{i,d} }{\sum_{j=1}^N (A_{j,d} - \bar{A}_{d})A_{j,d}}  \right] + \sum_{i=1}^N \mathbb{E}\left[ \frac{ (A_{i,d} - \bar{A}_{d})\epsilon_i}{\sum_{j=1}^N (A_{j,d} - \bar{A}_{d})A_{j,d}}  \right],
	\end{align*}
	The first expected value on the RHS will not be equal to zero unless $\eta_{d,s_i} = 0$ for each $s_i \in \mathcal{S}$. The sign of this first term in the bias will depend on the signs of $\eta_{d,s_i}$ for each $s_i \in \mathcal{S}$. For instance, in the unrealistic case where $\eta_{d,s_i}$ is independent of $A_{i,d}$ and $(A_{i,d})^2$ for each $i \in [N]$, we get that
	\begin{align*}
		\mathbb{E}[\hat{\tau}_d] &= \tau_d +  \sum_{i=1}^N\mathbb{E}[\eta_{d,s_i}]\mathbb{E}\left[ \frac{ (A_{i,d} - \bar{A}_{d}) A_{i,d} }{\sum_{j=1}^N (A_{j,d} - \bar{A}_{d})A_{j,d}}  \right] + \sum_{i=1}^N \mathbb{E}\left[ \frac{ (A_{i,d} - \bar{A}_{d})\epsilon_i}{\sum_{j=1}^N (A_{j,d} - \bar{A}_{d})A_{j,d}}  \right],\\
		&= \tau_d +\mathbb{E}[\eta_{d,s_i}] + \sum_{i=1}^N\mathbb{E}\left[ \frac{ (A_{i,d} - \bar{A}_{d})\epsilon_i}{\sum_{j=1}^N (A_{j,d} - \bar{A}_{d})A_{j,d}}  \right].
	\end{align*}
	where $\mathbb{E}[\eta_{d,s_i}] \equiv \eta_{d} = \sum_{j\in \mathcal{S}} \eta_{d,j}\Pr[s_i = j]$ represents the bias that will arise if each $\Pr[s_i = j]$ is different in the sampled population compared to the target population. This bias amounts to a \textit{sample selection bias} since it results from non-random selection into the observed sample.
	\par
	Having access to a sampled population that is representative of the target population does not guarantee that $\hat{\tau}_d$ is unbiased, even if the last sum in the above expression is equal to zero. For instance, if $A_{i,d} = \{0,1\} \ \forall (i,d) \in [N]\times \mathcal{D}$ such that $A_{i,d} = (A_{i,d})^2$, then we can write
	\begin{align*}
		\mathbb{E}[\hat{\tau}_d] &= \tau_d +  \mathbb{E}\left[\frac{ \sum_{i=1}^N \eta_{d,s_i} A_{i,d} - \bar{A}_{d}\sum_{i=1}^N\eta_{d,s_i}A_{i,d}}{(1-\bar{A}_{d})\sum_{j=1}^N A_{j,d}}  \right] + \sum_{i=1}^N \mathbb{E}\left[ \frac{ (A_{i,d} - \bar{A}_{d})\epsilon_i}{(1-\bar{A}_{d})\sum_{j=1}^N A_{j,d}}  \right],\\
		&= \tau_d +  \sum_{i=1}^N \mathbb{E}\left[\frac{  \eta_{d,s_i} A_{i,d}}{\sum_{j=1}^N A_{j,d}}  \right] + \sum_{i=1}^N \mathbb{E}\left[ \frac{ (A_{i,d} - \bar{A}_{d})\epsilon_i}{(1-\bar{A}_{d})\sum_{j=1}^N A_{j,d}}  \right],
	\end{align*}
	and the estimated average treatment effect $\hat{\tau}_d$ will be biased if $\mathbb{E}[\eta_{d,s_i}A_{i,d}] \ne \mathbb{E}[\eta_{d,s_i}]\mathbb{E}[A_{i,d}] \equiv \eta_{d}\mathbb{E}[A_{i,d}] $. This will occur if individuals are more likely to select the treatment option associated with the highest treatment effect heterogeneity in their respective group, thus amounting to a \textit{treatment selection bias}.
	\par
	On the other hand, if we estimate $\tau_d$ \textit{conditionally on $s_i$} by interacting the adherence value $A_{i,d}$ with a binary indicator representing the true state value $s_i$ for each observation in the sample, it is easy to see that
	\begin{align*}
		\mathbb{E}[\hat{\tau}_{d,g}] &= \tau_{d,g} + \sum_{i=1}^N \mathbb{E}\left[ \frac{ (A_{i,d} - \bar{A}_{d})\epsilon_i}{\sum_{j=1}^N (A_{j,d} - \bar{A}_{d})A_{j,d}}  \right],
	\end{align*}
	thus solving both the sample selection \textit{and} the treatment selection issues.
	\par
	Nonetheless, the expected value $\mathbb{E}\left[ \frac{ (A_{i,d} - \bar{A}_{d})\epsilon_i}{\sum_{j=1}^N (A_{j,d} - \bar{A}_{d})A_{j,d}}  \right]$ is not necessarily equal to zero for any given $i \in [N]$ when adherence can be continuous, even if $\epsilon_i$ is white noise. To see this, let's define adherence as follows
	\begin{align*}
		A_{i,d} &:= \sum_{t=0}^{T} \frac{A_{it,d}}{T+1},
	\end{align*}
	where $t=0$ at the start of the prescribed treatment, and where
	\begin{align*}
		A_{it,d} &:= \begin{cases}
			0 \ \ \text{if option $d$ \textbf{is not} taken by patient $i$ at period $t$},\\
			1 \ \ \text{if option $d$ is taken by patient $i$ at period $t$}.
		\end{cases}
	\end{align*}
	We call $A_{it,d}$ ``periodic" adherence to treatment given that it indicates adherence for a single period. We can model this periodic adherence accordingly~:
	\begin{align}
		\mathbbm{1}[A_{it,d} = 1] = \mathbbm{1}[W_{it}\zeta + Y^{e}_{it,T}\xi_i + e_{it} > 0| \sum_{j=t}^T A_{ij,d} = T-t+1],
	\end{align}
	where $W_{it}$ is a set of strictly exogenous covariates, $Y^{e}_{it,T}$ is the \textit{anticipated} outcome value for period $T$ when the individual $i$ is currently at day $t$, conditional on \textit{complete remaining adherence} $\sum_{j=t}^T A_{ij,d} = T-t+1$, and where $\xi_i$ can be either positive or negative depending on the behavior of individual $i$. Using such a model for $A_{it,d}$ assumes that patient $i$'s decision to take its treatment at period $t$ depends not only on a set of exogenous, contemporaneous factors, but also on the anticipated health outcome if he were to fully adhere to treatment for the rest of the prescribed length.
	\par
	If the patient's anticipations are not systematically biased, we can write that
	\begin{align*}
		Y^{e}_{it,T}|\sum_{j=t}^T A_{ij,d} = T-t+1 &= Y_{iT} + \nu^e_{it,T} |\sum_{j=t}^T A_{ij,d} = T-t+1,
	\end{align*}
	where $\nu^e_{it,T}\sim N(0,\sigma^e_{it,T})$ is the anticipation error when making an anticipation about period $T$ at current period $t$ such that $\sigma^e_{it,T} > \sigma^e_{it+1,T}>... >\sigma^e_{iT,T} \approxeq 0$. Hence, we can write
	\begin{align*}
		Y_{iT} &= \tau_{d,s_i}\left(\frac{\sum_{t=0}^{T} A_{it,d}}{T+1}\right) + \epsilon_i,\\
		&=  \tau_{d,s_i}\left(\frac{\sum_{t=0}^{T} \mathbbm{1}[W_{it}\zeta + (Y_{iT} + \nu^e_{it,T})\xi_i + e_{it} > 0|\sum_{j=t}^T A_{ij,d} = T-t+1]}{T+1}\right) + \epsilon_i,
	\end{align*}
	where we can see that the error term $\epsilon_i$ will be correlated with each periodic adherence $A_{it,d}$ because of the presence of $Y_{iT}$ in the sum over $t$. If $\xi_i >0$, then a large positive value of $\epsilon_i$ is likely to make some $A_{it,d}$ equal to one (thus increasing adherence), and conversely for largely negative values of $\epsilon_i$ (thus decreasing adherence). Note that the resulting endogeneity can be mainly attributed to the fact that $\sigma^e_{it,T}$ decreases as $t$ gets closer to $T$, hence making anticipations more precise and more likely to drive periodic adherence over time.
	\par
	Therefore, group-wise unconfoundedness will hold when $\epsilon_i$ is such that $A_{it,d}$ would remain unchanged for all pairs $(i,t) \in [N]\times\{0,1...,T\}$ if $\epsilon_i$ were to be equal to zero for all $i \in [N]$. In other words, if the random part of the outcome does not lead to changes in periodic adherence over time, then the assumption will be verified. Formally speaking, this means that group-wise unconfoundedness will hold if
	\begin{align*}
		\left|W_{it}\zeta + \left(\tau_{d,s_i}\left(\frac{\sum_{j=0}^{t-1}A_{ij,d} +T-t+1}{T+1}\right) + \nu^e_{it,T}\right)\xi_i + e_{it}\right| > |\xi_i \epsilon_i|,
	\end{align*}
	for all pairs $(i,t) \in [N] \times \{0,1...,T\}$ such that adding $\xi_i \epsilon_i$ to the quantity within the absolute value on the LHS never changes the sign of this quantity. This is likely to occur if treatment effect heterogeneity is strong, which means that $|\tau_{d,s_i}| >> |\epsilon_i|$ for all pairs $(d,s_i) \in \mathcal{D}\times \mathcal{S}$. In this case, observed variation in adherence (and treatment assignment) is independent of within-group potential outcomes, and group-wise unconfoundedness holds. Note also that group-wise unconfoundedness will necessarily hold under group-wise heterogeneity (Assumption \ref{ass:1}) and propensity score overlap (Assumption \ref{ass:5}) if all $s_i$ are known and if $A_{i,d} = \{0,1\} \ \forall (i,d) \in[N] \times \mathcal{D}$. Precisely, if 1) adherence to treatment is binary, 2) all true group memberships are known, and 3) all treatment options are observed within each group, then the within-group estimation of each CATE is unbiased if each outcome model is correctly specified.
	\par
	Note that a typical IV strategy would be to identify some of the elements in $W_{it}$ for a sufficiently large number of periods $t$ such that total adherence $A_{i,d}$ could be predicted from $W_{it}$ exclusively. However, such a strategy overlooks the fact that endogeneity comes from the presence of the anticipation terms $Y^{e}_{it,T}$ and not the error terms $e_{it}$, thus leading to predicted adherence values that would be necessarily biased unless $\xi_i = 0$ (in which case there would be no endogeneity). Note also that it is quite hard to find good instruments $W_{it}$ for most periods outside of clinical settings given that the elements that are likely to drive adherence (e.g., clinical monitoring, important side effects, unexpected events such as going to jail, etc.) are likely to be correlated with $Y_{iT}$, especially if $Y_{iT}$ is a measure of general health.
	\newpage
	\section{Details concerning the construction of the outcome $Y_i$ and the set of covariates $X_i$}\label{app:D}
	\begin{enumerate}
		\item Covariates~: For treated individuals, pre-treatment variables refer to variables recorded during the follow-up visit preceding the start date of the corresponding DAA regimen. For untreated individuals, pre-treatment variables refer to baseline information, except for observations with spontaneous clearance where pre-treatment variables are collected during the follow-up visit preceding the clearance episode. Most of the information collected during the appropriate follow-up visit was then transformed into dummy variables for inclusion in the weighted K-means algorithm. A non-exhaustive list of the variables included in the weighted K-means algorithm is provided below in Table \ref{tab:var}.
		\item SVR in treated patients~: Whether or not an individual has reached SVR after being prescribed any given DAA was recorded by the participating centers and confirmed via a careful examination of the laboratory test results for all treated individuals, similar to cases of spontaneous clearance (see below). Specifically, SVR was assumed to occur in any treated patient with at least one negative test result in the 10 weeks following the end of a given DAA treatment regimen. Negative test results occurring after the delay of 10 weeks were considered either as (late) SVR or as spontaneous clearance, depending on the comments made by the treating physician.
		\item Spontaneous clearance~: Whether or not an individual has reached SVR by spontaneous clearance was recorded by the participating center and confirmed via a careful examination of the laboratory test results for all individuals. Specifically, spontaneous clearance was assumed to occur in any patient presenting at least two consecutive negative test results after a positive test result without taking any medication against HCV between the dates of the corresponding tests, regardless of the time spent between the tests.
		\item Reinfection and relapse~: Cases of reinfection were identified by the treating physician and confirmed via a careful examination of the laboratory test results for all individuals in the sample. Specifically, reinfection was assumed to occur in any patient who previously reached SVR (either through treatment or through spontaneous clearance) and who presented at least two consecutive positive results at least 6 months after reaching SVR, or if the identified HepC genotype was different than the one identified before reaching SVR. If reinfection occurred within 6 months and with the same HepC genotype as before, then the patient was identified as not having reached SVR due to relapse.
	\end{enumerate}
	\begin{table}[h!]
		\begin{center}
			\small
			\caption{\centering Variables included in the weighted K-means algorithm for the classification procedure \label{tab:var}}
			\begin{tabular}{c p{9cm}}
				\toprule\toprule
				Types of variables  & Examples of variables  \\
				\midrule
				\multirow{3}{*}{Socioeconomic indicators} &  \multirow{3}{9cm}{Age, Gender, Weight, Height, Waist, Income brackets, Living situation, Education Level, Ethnicity, Part-/Full-time worker, Quality of life, etc.} \\ [1mm]
				&\\
				&\\
				\multirow{3}{*}{Previous and current diagnosis} &  \multirow{3}{9cm}{Degree of liver fibrosis, Former spontaneous clearance, HCV Genotype, HBV/HAV diagnosis, Diagnosis of cirrhosis, hypertension, etc.} \\ [1mm]
				&\\
				&\\
				\multirow{3}{*}{Previous and current medication} &  \multirow{3}{9cm}{Previous and current ARTs, Previous DAA treatment regimen, Previous PEG/RBV treatmen regimen (no concomitant medication)} \\ [1mm]
				&\\
				&\\
				\multirow{3}{*}{Lifestyle habits and ADLs} &  \multirow{3}{9cm}{Alcohol consumption, IDUs/NDUs, Ever been in prison/psychiatric institutions, ADL scores for each component, etc.} \\ [1mm]
				&\\
				&\\
				\multirow{2}{*}{Laboratory results} &  \multirow{2}{9cm}{CD4/CD8 cells count, All lab results from LAB\_ID (A1C, AFP, etc.) } \\ [1mm]
				&\\
				\bottomrule\bottomrule
			\end{tabular}
		\end{center}
	\end{table}
	\clearpage
	\newpage
	\section{Additional Results}\label{app:E}
	\begin{table}[h!]
		\begin{center}
			\small
			\caption{\centering Descriptive statistics by estimated group in the sampled population \label{tab:A1}}
			\begin{tabular}{p{2.cm} c c c c c c}
				\toprule\toprule
				\multirow{2}{2.3cm}{Variables} & Group 1 & Group 2  & Group 3  & Group 4  & Group 5 & Full Sample\\
				& (1) & (2) & (3) & (4) & (5)&(6)\\
				\midrule
				\multirow{2}{2.6cm}{Male}  & 0.744 & 0.674 & 0.746 & 0.659 & 0.800 & 0.700\\
				&(0.44)&(0.47) & (0.44)&(0.48) & (0.40)& (0.46) \\[1mm]
				\multirow{2}{2.6cm}{Age }  & 45.5 & 47.4  & 49.2 & 45.7 & 49.6 & 47.5\\
				&(10.7)&(9.6) & (9.2)& (9.5) & (10.5)& (9.7) \\[1mm]
				\multirow{2}{2.6cm}{Liver Stiffness}  & 13.7  & 6.6 & 15.2 & 8.8 & 11.5 & 9.5\\
				&(9.7)&(1.9) & (10.1)& (6.5) & (5.7)& (8.8) \\[1mm]
				\multirow{2}{2.6cm}{Income Level }  & 3.26 & 3.07 & 3.10 & 2.55 & 3.00 & 3.04\\
				&(1.82)&(1.62) & (1.68)& (1.32) & (1.60)& (1.58) \\[1mm]
				\midrule
				\multirow{2}{2.6cm}{Adherence*}  & 0.984  & 0.972 & 0.978 & 0.951 & 1.00 & 0.974\\
				&(0.09)&(0.12) & (0.09)&(0.18) & (0.00)& (0.12) \\[1mm]
				\multirow{2}{2.6cm}{SVR}  & 0.893  & 0.610 & 0.602 & 0.600 & 0.583 & 0.634\\
				&(0.31)&(0.49) & (0.49)&(0.49) & (0.50)& (0.48) \\[1mm]
				\midrule
				Nb. of obs  & 121   &672 & 244   & 135  & 60  & 1232 \\
				(\%)&(0.10)&(0.55)&(0.20)&(0.11)&(0.05)&(1.00)\\
				\bottomrule\bottomrule
			\end{tabular}
		\end{center}
		\small \textbf{Notes}~:All statistics are based on information collected at the moment of treatment, or at the spontaneous clearance date (if it exists) for the never-treated. Standard errors of the corresponding means are shown in parentheses. Liver stiffness is based on FibroScan measures that are realized during the pre-treatment follow-up visit and are expressed in kilopascals (kPa). Monthly income levels are based on the following scale: 1=\$0-500; 2=\$501-1,000; 3=\$1,001-1,500; 4=\$1,501-2,000; 5=\$2,001-2,500; 6=\$2,501-3,000; 7=\$3,001-4,000; 8=\$4,001 or more. *Conditional on being treated.
	\end{table}
	\newpage
	\begin{landscape}
		\begin{figure}
			\centering
			\includegraphics[width=26cm]{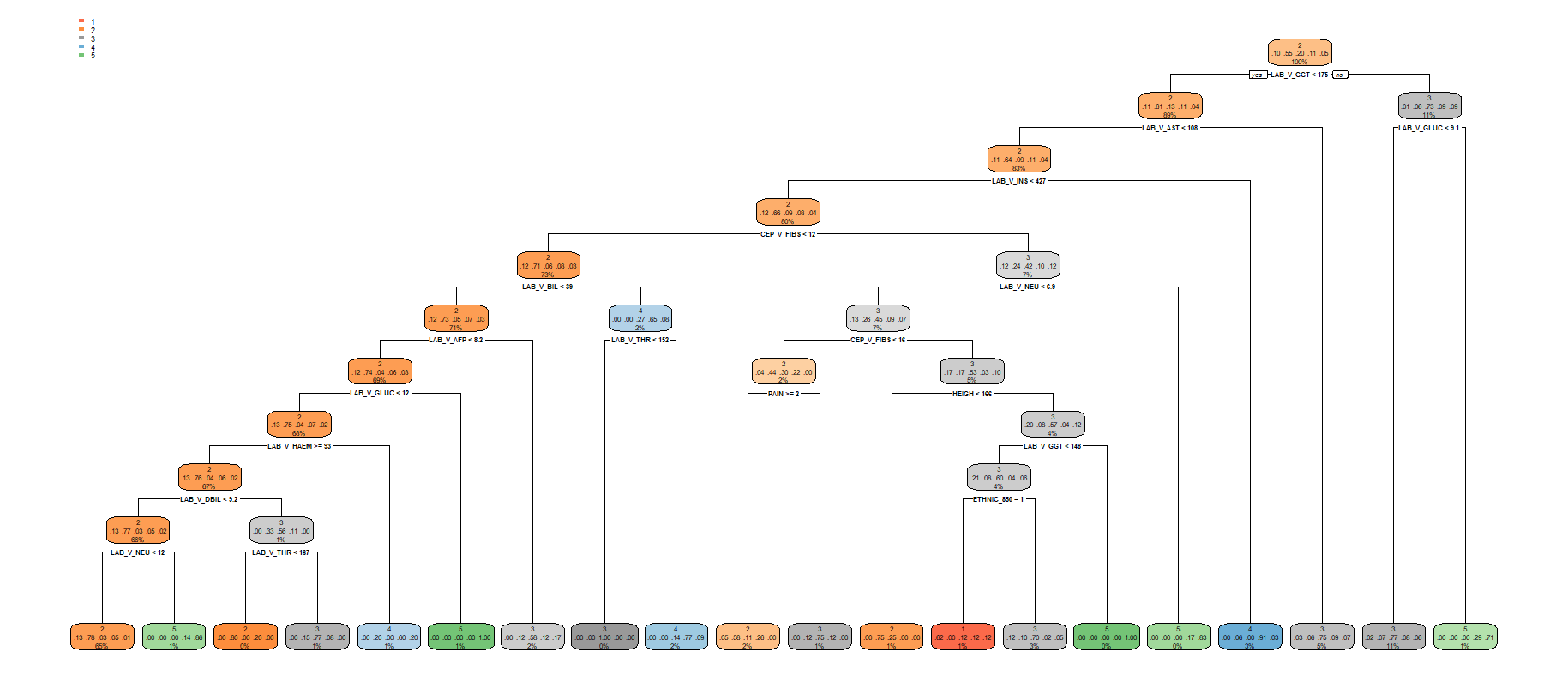}
			\caption{Selected decision tree for the feasible policy rule.}
		\end{figure}
	\end{landscape}
\end{document}